\newtheorem{theorem}{Theorem}
\newtheorem{definition}{Definition}
\newtheorem{corollary}{Corollary}
\newtheorem{proposition}{Proposition}
\newtheorem{lemma}{Lemma}
\newtheorem{rem}{Remark}
\newcommand{\blue}{\textcolor{black}} 
\newcommand{\nn}{\nonumber}  
\newcommand{\bieee}{\begin{eqnarray}{rCl}}
\newcommand{\eieee}{\end{eqnarray}}
\begin{document}
\title{Low-Latency Communication using Delay-Aware Relays Against Reactive Adversaries}
\author{Vivek~Chaudhary and
        Harshan~Jagadeesh
\thanks{V. Chaudhary and H. Jagadeesh are with the Department
of Electrical Engineering, Indian Institute of Technology, Delhi, 110016, India. E-mail: (chaudhary03vivek@gmail.com, jharshan@ee.iitd.ac.in)}
}
\maketitle
\begin{abstract}
This work addresses a reactive jamming attack on the low-latency messages of a victim, wherein the jammer deploys countermeasure detection mechanisms to change its strategy. We highlight that the existing schemes against reactive jammers use relays with instantaneous full-duplex (FD) radios to evade the attack. However, due to the limitation of the radio architecture of the FD helper, instantaneous forwarding may not be possible in practice, thereby leading to increased decoding complexity at the destination and a high detection probability at the adversary. Pointing at this drawback, we propose a delay-aware cooperative framework wherein the victim seeks assistance from a delay-aware FD helper to forward its messages to the destination within the latency constraints. In particular, we first model the processing delay at the helper based on its hardware architecture, and then propose two low-complexity mitigation schemes, wherein the victim and the helper share their uplink frequencies using appropriate energy-splitting factors. For both the schemes, we solve the optimization problems of computing the near-optimal energy-splitting factors that minimize the joint error rates at the destination. Finally, through analytical and simulation results, we show that the proposed schemes facilitate the victim in evading the jamming attack whilst deceiving the reactive adversary.\looseness = -1
\end{abstract}
%
\begin{IEEEkeywords}
\begin{center}
Reactive adversary, low-latency communication, full-duplex radios, multiple access channels
\end{center}
\end{IEEEkeywords}

\IEEEpeerreviewmaketitle

\section{Introduction}

\textcolor{black}{The next generation of wireless networks finds its use-cases in critical infrastructures such as vehicular networks involving autonomous vehicles \cite{wireless_security}. Since these applications carry vital information that needs to reach the destination within a deadline, they are ideal targets for an adversary. Among the various attack models that target use-cases with deadline constraints, jamming attacks, due to their ease of execution using off-the-shelf radio devices, have been popular means of executing Denial of Service (DoS) attacks thereby forcing deadline violation on the packets}. Moreover, due to recent technological advancements in radio architecture, the adversary has become more potent than its traditional counterpart. In particular, it has been shown that in addition to jamming, an adversary may monitor the network for possible countermeasures and change its attacking strategy based on the action taken by the victim. Such a class of adversaries are referred to as reactive adversaries \cite{reactive_jammer1, reactive_jammer2}. \blue{Using the recent developments in full-duplex (FD) radios \cite{FD1,FD2,FD3,FD4,FD5,FD6,FD7} and cognitive radios \cite{FDCR1,FDCR2,FDCR3,FDCR4}, the authors in  \cite{my_PIMRC, TCCN-VH, my_GCOM, my_comsnets} presented one such adversarial model, where a reactive adversary uses FD cognitive radios to execute jamming attacks.} Here, besides jamming, the adversary measures the average energy level of the jammed frequency to prevent the victim node from using state-of-art countermeasures such as, frequency hopping (FH). Also, \cite{my_TCOM} presented another class of adversarial model, wherein the reactive jamming adversary also measures the correlation between the symbols of the jammed frequency and other frequencies, thus, preventing the victim from using repetition coding across frequencies. As a first step towards mitigating the energy monitoring reactive adversary, \cite{my_PIMRC, TCCN-VH, my_GCOM,my_comsnets} presented a decode-and-forward based cooperative countermeasure, wherein the victim uses a fraction of its energy to communicate its low-latency messages to the destination with the help of an adjacent FD helper. The helper multiplexes the victim's symbols along with its symbols so as to facilitate the destination to jointly decode the symbols of both the nodes. Furthermore, the countermeasure is such that the victim and the helper pour their residual energies to ensure not getting detected by the adversary. We point out that the foundational assumption for the analysis in \cite{my_PIMRC, TCCN-VH, my_GCOM, my_comsnets} is that the helper instantaneously decodes, and multiplexes the victim's decoded symbols to the destination thereby ensuring that the messages of the victim do not violate the deadline constraint. We note that while instantaneous forwarding by the helper facilitates low-latency communication of the victim's messages, it may not be realizable by all radio architectures due to the processing delay. This opens up questions on how to design countermeasures with practical FD radios and still ensure that the victim's messages reach the destination within the deadline.

\subsection{Motivation} 

In the countermeasure contributed by \cite{my_PIMRC, TCCN-VH, my_GCOM}, it is assumed that the processing delay at the helper in forwarding the victim's symbols is negligible. Thus, the victim's symbols and the multiplexed symbols from the helper reach the destination during the same \blue{symbol interval}. However, when using practical FD radios, the forwarding process may not be instantaneous, and the processing delay at the helper can be of the order of several symbol durations. In such cases, the symbols on the two links reach the destination  during different \blue{symbol intervals}, thereby yieliding a signal model different from that of \cite{my_PIMRC, TCCN-VH, my_GCOM}. Moreover, if the helper decides to use multiple receive-antennas to improve the diversity order, the processing delay further increases due to additional delay contributed by the self-interfernce cancellation (SIC) blocks of the FD radios \cite{FD_MIMO}. Thus, using practical FD radios to combat reactive adversaries has the following three consequences: (i) The existing analysis of \cite{my_PIMRC, TCCN-VH, my_GCOM} does not hold as the symbols on the Victim-to-Destination and the Helper-to-Destination links are observed at the destination in an asynchronous fashion, (ii) Due to the processing delay, a few multiplexed symbols reach the destination after the deadline, thereby violating the low-latency constraint, and (iii) The symbols on the victim's and the helper's frequencies are uncoordinated in energy, resulting in fluctuations in the average energy level of both the frequencies, thereby increasing the probability of detection by the energy detector. \emph{Thus, these limitations of the existing countermeasures motivate us to design new countermeasures that consider the helper's practical limitations in facilitating reliable and low-latency communication of the victim's symbols.}

\subsection{Contributions}
\begin{enumerate}
\item To facilitate reliable communication between the victim and the destination, we propose a framework wherein the victim seeks assistance from an FD helper to multiplex-and-forward its symbols to the destination in an asynchronous manner, such that the victim and the helper share the helper's uplink frequency using an energy-splitting factor $\alpha\in(0,1)$. We first model the processing delay at the helper using the parameter $\Theta$ which is a function of the number of receive-antennas at the helper and then propose a strategy so that the victim and the helper cooperatively use their bands to reliably communicate and still not get detected by the adversary. Since this framework incorporates the delay parameter at the helper, we refer to this framework as the Delay-Aware Semi-Coherent Multiplex-and-Forward (DASC-MF) mitigation scheme. With On-Off Keying (OOK) at the victim and $M-$PSK at the helper, we highlight that due to the processing delay, the symbols received across several \blue{symbol intervals} at the destination are correlated. Thus, the decoding complexity of the optimal decoder is $\mathcal{O}(4M^{2})$, which makes its implementation challenging. We also show that the symbols received at the destination from the victim and the helper are still uncoordinated in energy, thereby, making the proposed countermeasure susceptible to detection by the adversary. To circumvent these challenges, we propose $3\phi$ DASC-MF scheme, which falls under the framework of DASC-MF as a special case. (See Sec.~\ref{sec:DASC-MF})

\item In the $3\phi$ DASC-MF scheme, we divide the frame structure into three parts, parametrized by the processing delay, $\Theta$, such that, $\Theta\leq \frac{L}{2}$, where $L$ denotes the number of symbols transmitted in a frame by the victim. The novel idea of this strategy is to use two energy-splitting factors, $\alpha\in(0,1)$ and $\beta\in(0,1)$ at different portions of the frame. Through an appropriate choice of $\alpha$, we show that the correlation across the symbols can be minimized thereby ensuring that symbols at different \blue{symbol intervals} can be independently decoded and also show improved energy coordination when compared to the vanilla DASC-MF scheme. For this strategy, we provide strong analytical results on the error performance, and based on these results, we provide a near-optimal solution on $\alpha$ and $\beta$ to the optimization problem of minimizing the error-rates. We also show that $3\phi$ DASC-MF is less complex than DASC-MF, and a majority of the symbols transmitted during $3\phi$ DASC-MF scheme are coordinated in energy. (See Sec.~\ref{sec:Three_Phase})

\item When $\Theta>\frac{L}{2}$, we propose a new countermeasure, referred to as the semi-coherent multiple access channel (SC-MAC) scheme. As a salient feature of this scheme, the helper does not decode the victim's symbols, instead, the victim and the helper transmit their symbols synchronously to the destination on the helper's frequency using an energy-splitting factor, $\varepsilon\in(0,1)$, thereby, eliminating the need of an FD radio at the helper. For SC-MAC, we first derive a closed-form expression on the error-rates and then, formulate an optimization problem of finding near-optimal values on $\varepsilon$ that minimizes the error-rates at the destination. (See Sec.~\ref{sec:sc-mac})
\item \textcolor{black}{We also present extensive simulation results to show that using both the $3\phi$ DASC-MF scheme and the SC-MAC scheme, the victim is able to reliably communicate with the destination while adhering to the deadline constraints. (See Sec.~\ref{sec:sim})} 
\item Finally, through various analytical and simulation results, we show that our schemes are covert when the adversary measures energy on the victim's and the helper's frequencies. (See Sec.~\ref{sec:Covertness})
\end{enumerate}

\subsection{\blue{Related Work}}

\blue{Due to the recent technological advancements in the FD radio architectures \cite{FD4,FD6,FD5,FD3,FD2,FD1,FD7}, FD radios have been studied from the viewpoints of mitigating  adversaries \cite{FD8, Aid_FD_1,Aid_FD_2,Aid_FD_3} as well as aiding adversaries \cite{Foe_FD_1, conv_attack, Foe_FD_3}. However, \cite{my_PIMRC, TCCN-VH, my_comsnets, my_GCOM, my_TCOM} have studied FD radios from both these viewpoints wherein an FD adversary is used to jam the network and an FD helper node is used to mitigate the FD jammer. When using an FD radio at the adversary, \cite{my_PIMRC, TCCN-VH, my_comsnets} presented \emph{jam and measure} adversaries, that jams a frequency band and subsequently monitors it using FD cognitive radios \cite{FDCR1, FDCR2, FDCR3, FDCR4} to detect countermeasures based on FH. Moreover, \cite{my_PIMRC, TCCN-VH} used countermeasure detectors based on energy measurement, while \cite{my_comsnets} used countermeasure detectors based on energy and correlation. In contrast, when using an FD radio at the helper node, \cite{my_PIMRC, TCCN-VH, my_comsnets} also proposed fast-forward FD relay based countermeasures to mitigate the \emph{jam and measure} adversaries. Here, the authors leveraged on \cite{FD8}, which proposed a fast-forward FD relay that constructively forwards signals such that the network throughput and coverage is significantly enhanced. Along the similar lines of \cite{my_PIMRC, TCCN-VH, my_comsnets}, authors in \cite{my_GCOM, my_TCOM} proposed fast-forward relaying based solutions to mitigate the \emph{jam and measure} adversary, however in fast-fading channel conditions. We highlight that, when mitigating \emph{jam and measure} adversaries, \cite{my_PIMRC, TCCN-VH, my_comsnets, my_GCOM, my_TCOM} assumed an optimistic scenario where the FD helper node instantaneously fast-forwards the victim's information symbol to the destination. However, in practice since the processing delay for self-interference cancellation is directly proportional to the number of transmit/receive (or both) antennas at the helper node \cite{FD_MIMO}, the fast-forwarding process is not instantaneous. Thus, to bridge this gap, this work considers the practical limitations of delay-aware FD radios in facilitating reliable and low-latency communication of the victim's symbols. As depicted in Fig.~\ref{fig:venn_diagram}, the main novelty of this work is the use of delay-aware FD radios at the helper node, which has not been addressed in the literature hitherto.}

\begin{figure}
\vspace{-0.5cm}
\centering
\includegraphics[scale=0.35]{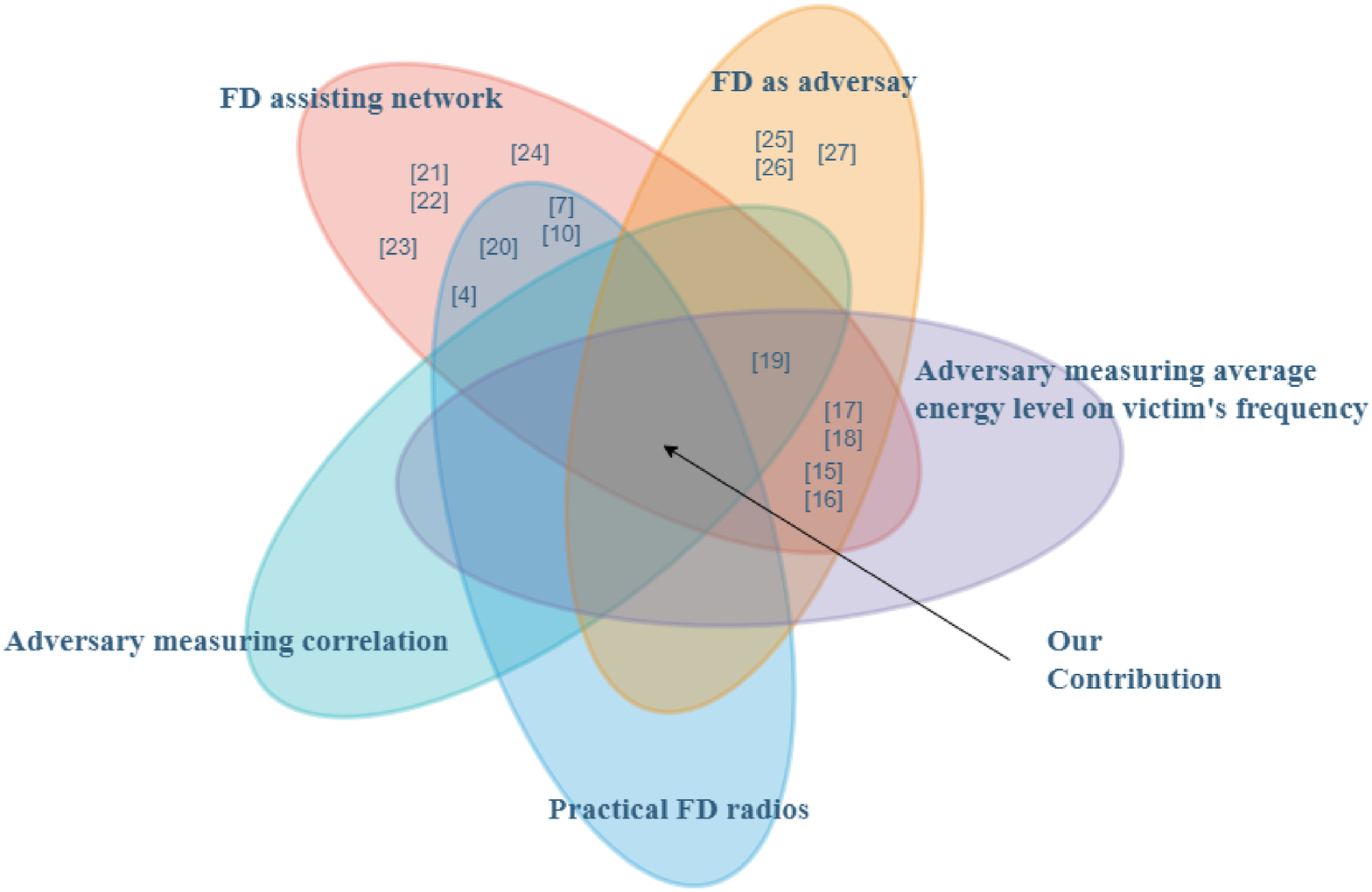}
\vspace{-0.25cm}
\caption{\label{fig:venn_diagram} Novelty of our work w.r.t. existing contributions.}
\vspace{-0.5cm}
\end{figure}

\section{System Model}
\label{sec:system}


\begin{figure}[b]
\centering
\includegraphics[scale = 0.35]{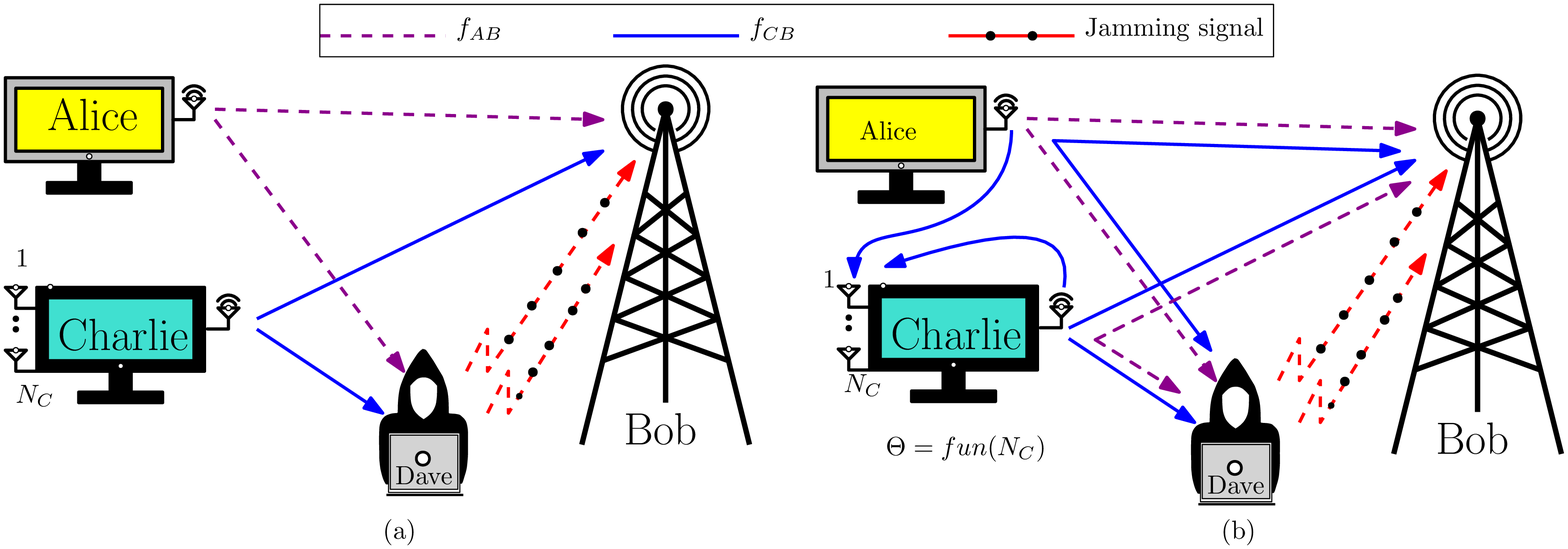}
\caption{\label{fig:system_model} (a) A network model depicting legitimate nodes, Alice and Charlie, and the reactive adversary, Dave. (b) System model for DASC-MF scheme, where Charlie takes $\Theta$ symbols to multiplex-and-forward Alice's symbols to Bob.}
\end{figure}

We consider a network model, where all the uplink frequencies to the destination are occupied by the legitimate nodes of the network. As a result, there are no vacant frequencies in the network. Let Alice and Charlie be two legitimate nodes of the network who communicate with the destination, Bob using orthogonal frequencies. Alice is a single-antenna transmitter which uses a frequency band centred around $f_{AB}$ to communicate her low data-rate symbols with strict low-latency constraints (e.g., PUCCH in 5G \cite{standard}) to Bob. In contrast, Charlie uses a frequency band centred around $f_{CB}$ to communicate his symbols to Bob. A salient feature of Charlie is that he is equipped with a full-duplex (FD) radio with $N_{C}$ receive-antennas and a single transmit-antenna, thus, enabling him to implement FD communication on $f_{CB}$. Further, Charlie transmits symbols with arbitrary data-rate and relaxed latency constraints. An instantiation of the network is as shown in Fig.~\ref{fig:system_model}~(a).

In this network, we also assume the presence of a jammer, Dave. Unlike traditional jammers, Dave is an FD reactive jammer who intends to jam Alice's uplink symbols on $f_{AB}$ \cite{DOSattack} and  monitors all the frequencies (including $f_{AB}$) for possible countermeasures by the legitimate nodes. In the context of this work, Dave uses an energy detector (ED) to measure the average energy level of all the frequencies in the network before and after jamming. Thus, if ED measures a significant fluctuation in the average energy level on any band, it raises the flag. This in turn forbids Alice from using traditional mitigation techniques, such as FH. Subsequently, a raised flag by ED compels Dave to jam other frequencies, thereby degrading the network's performance. Thus, in the next section, we propose a framework wherein Charlie assists Alice to reliably communicate her messages to Bob without getting detected by ED on any frequency band.
 
\section{Delay-Aware Semi-Coherent Multiplex-and-Forward Relaying Scheme}
\label{sec:DASC-MF}

\blue{As shown in Fig.~\ref{fig:dummy_scheme}, let Alice have a frame of $L$ symbols to communicate with Bob within $T$ seconds from the generation of the first symbol, where $T=L/W$, such that $W$ is the bandwidth of communication.} Since $f_{AB}$ is jammed by Dave, Alice seeks help from Charlie. As part of the protocol, Alice broadcasts her symbols on $f_{CB}$. Then, Charlie, uses his FD radio to forward Alice's symbols to Bob on $f_{CB}$. Since Charlie is a legitimate node in the network, he also has symbols to communicate with Bob. Therefore, he decodes Alice's symbols, multiplexes her symbols to his symbols, and forwards them to Bob. However, the time taken by Charlie for this process depends on his receiver architecture. In particular, this \emph{delay} is directly proportional to the time taken by him to cancel his self-interference (SI), which in turn is directly proportional to the number of receiving antennas, $N_{C}$. Therefore, we assume that Charlie requires a time duration equivalent to that of $\Theta$ symbols to decode, multiplex, and forward Alice's symbols, where $\Theta$ is governed by the SIC architecture of Charlie. 

The top two frames in Fig.~\ref{fig:dummy_scheme} show the symbols transmitted simultaneously by Alice and Charlie on $f_{CB}$. The received symbols at Bob, denoted by $r_{B,n}$ are captured by the bottom frame, where $n$ denotes the \blue{symbol interval} index for communication. Further, $x_{n}$ denotes Alice's symbol, and $y_{n}$, for $1\leq n\leq \Theta$ and $t_{n}$, for $\Theta+1\leq n\leq L+\Theta$ denote the unmultiplexed and multiplexed symbols transmitted by Charlie, respectively. Due to delay of $\Theta$ symbols, $r_{B,n}$ is a function of $x_{n}$ and $y_{n}$, for $1\leq n\leq \Theta$. In addition, $r_{B,n}$ is a function of $x_{n-\Theta}$, $x_{n}$, and $y_{n}$, for $\Theta+1\leq n\leq L$. Note that, since the multiplexed symbols, $t_{n}$, for $L+1\leq n\leq L+\Theta$ are received after the deadline of $T$ seconds, Bob cannot use these symbols for decoding Alice's symbols due to latency constraints. Therefore, Bob only uses the first $L$ symbols received on $f_{CB}$ after implementing the proposed countermeasure to jointly decode Alice's and Charlie's symbols.\footnote{During implementation, Charlie may decide not to multiplex after the $L$-th \blue{symbol interval} since Alice's symbols are no longer used for decoding from $r_{B, n}$ for $n > L$}

\begin{figure}
\vspace{-0.5cm}
\centering
\includegraphics[scale=0.28]{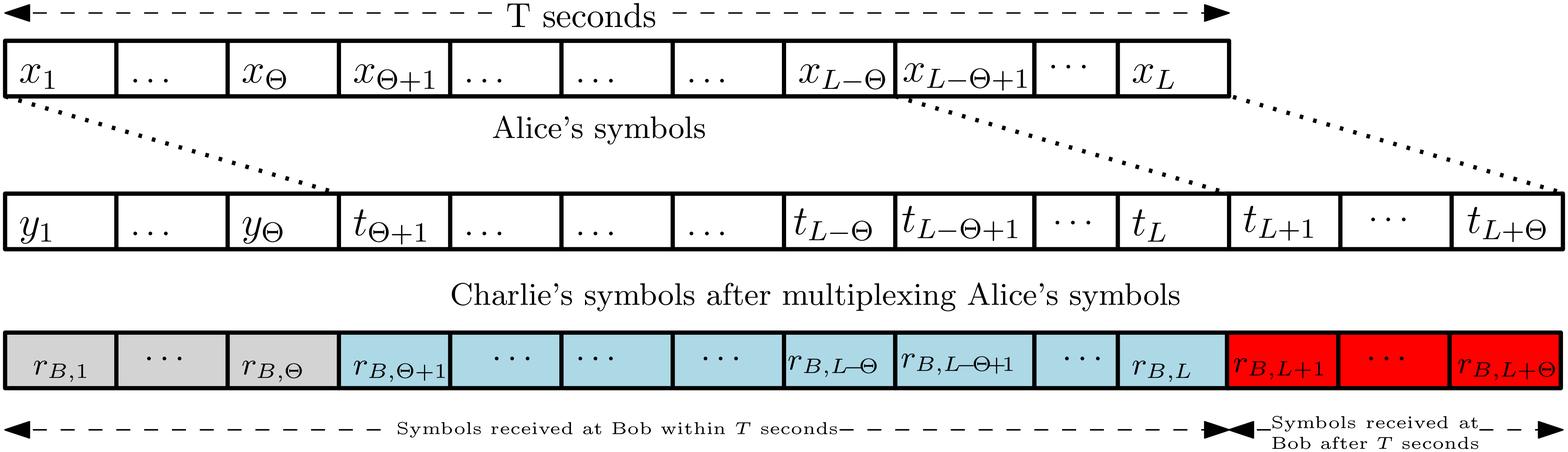}
\caption{\label{fig:dummy_scheme} Illustration of symbol transmission in DASC-MF scheme.}
\vspace{-0.5cm}
\end{figure}

Although the above scheme ensures that Alice's symbols reach Bob within the deadline, Dave observes a significant drop in the energy level on $f_{AB}$, provided Alice uses her entire energy to communicate on $f_{CB}$. Therefore, to avoid getting detected by Dave, Alice and Charlie use $1-\alpha$ and $\alpha$ fractions of their energies, respectively, to communicate their messages on $f_{CB}$, where $\alpha\in(0,1)$ is the design parameter under consideration. Subsequently, Alice and Charlie also use $\alpha$ and $1-\alpha$ fractions of their energies, respectively, to cooperatively transmit dummy OOK symbols on $f_{AB}$. The dummy OOK symbols are sampled from a pre-shared Gold sequence, thus, ensuring that the average energy level on $f_{AB}$ is the same as it was before implementing the countermeasure. We highlight that the use of dummy OOK symbols also ensures that the symbols on $f_{AB}$ and $f_{CB}$ are uncorrelated. Fig.~\ref{fig:system_model}~(b) illustrates the transmission strategy by Alice and Charlie on both $f_{AB}$ and $f_{CB}$.

Due to low-latency constraints, Alice avoids the use of pilots for assisting channel estimation on $f_{CB}$, as a result, the communication on the links Alice-to-Charlie and Alice-to-Bob are inherently non-coherent. Further, since Alice has low data-rate symbols, she uses OOK signalling scheme. In contrast, since $f_{CB}$ is not under attack, Charlie uses a coherent signalling scheme. In particular, Charlie uses $M-$ary PSK to communicate with Bob. As a result, the proposed scheme is a combination of coherent and non-coherent communication under a multiple access channel setup. Further, since this scheme takes into account the delay in processing Alice's symbols at Charlie, we refer to the proposed scheme as Delay-Aware Semi-Coherent Multiplex-and-Forward (DASC-MF) scheme.  

In the next section, we discuss the signal model of DASC-MF scheme on $f_{CB}$. A detailed description of communication on $f_{AB}$ and the analysis on the energy detectors will be discussed in Sec.~\ref{sec:Covertness}.

\subsection{Signal Model}

In the DASC-MF scheme, Alice transmits $x_{n}\in\{0,1\}$ scaled by $\sqrt{1-\alpha}$ throughout the entire frame. Since Charlie is equipped with an FD radio with $N_{C}$ receive-antennas, the $N_{C}\times 1$ received vector at Charlie is given by
\bieee
\mathbf{r}_{C,n} = \sqrt{1-\alpha}\mathbf{h}_{AC,n}x_{n} + \mathbf{h}_{CC,n} + \mathbf{w}_{C,n}, \hspace{0.5 cm} 1\leq n\leq L,\label{eq:rc}
\eieee
\noindent where $\mathbf{h}_{AC,n}\sim\mathcal{CN}\left(\mathbf{0}_{N_{C}}, \sigma_{AC}^{2}\mathbf{I}_{N_{C}}\right)$, $\mathbf{h}_{CC,n}\sim\mathcal{CN}\left(\mathbf{0}_{N_{C}}, \lambda\frac{1+\alpha}{2}\mathbf{I}_{N_{C}}\right)$, and $\mathbf{w}_{C,n}\sim\mathcal{CN}\left(\mathbf{0}_{N_{C}}, N_{o}\mathbf{I}_{N_{C}}\right)$ are the $N_{C}\times 1$ vectors of fading channel coefficients of Alice-to-Charlie's link, residual SI of Charlie's FD radio, and the Additive White Gaussian Noise (AWGN) at Charlie, respectively, such that $\mathbf{0}_{N_{C}}$ is the $N_{C}\times 1$ vector of zeros and $\mathbf{I}_{N_{C}}$ is the $N_{C}\times N_{C}$ Identity matrix. Further, $\lambda\in[0,1]$ denotes the level of residual SI after the active and passive cancellations at Charlie. Finally, $N_{o}=$ SNR\textsuperscript{-1} is the variance of each element of the AWGN vector. 

Charlie first uses non-coherent energy detection to decode $x_{n}$ as $\hat{x}_{n}$ using the received vector, $\mathbf{r}_{C,n}$ and then multiplexes the decoded symbol, $\hat{x}_{n}$ to his symbol. However, due to the use of multiple receive-antennas, Charlie introduces a delay  equivalent to $\Theta$ symbols to decode and multiplex $\hat{x}_{n}$ to his current symbol after SIC. Therefore, a symbol received at Charlie from Alice during the $n^{th}$ \blue{symbol interval} is multiplexed and forwarded to Bob during the $(n + \Theta)^{th}$ \blue{symbol interval}. As a result, for all $1\leq n\leq \Theta$, if $y_{n}\in\mathcal{S}_{C}$ denotes the original PSK symbol of Charlie to be transmitted at $n^{th}$ \blue{symbol interval}, such that $\mathcal{S}_{C} = \left\{\left.e^{\iota\frac{\pi}{M}\left(2m+1\right)}\right\vert m = 0,\ldots, M-1\right\}$, Charlie transmits $\sqrt{\alpha}y_{n}$. Subsequently, for all $n\geq \Theta +1$, Charlie transmits the multiplexed symbol, $t_{n}$, such that 
\begin{subnumcases}{t_{n} =}
 y_{n}, & if $\hat{x}_{n-\Theta} = 0$,\label{eq:tn1}
\\
\sqrt{\alpha}e^{\iota\frac{\pi}{M}}y_{n}, & if $\hat{x}_{n-\Theta} = 1$.\label{eq:tn2}
\end{subnumcases}

With this transmission strategy, Bob observes a multiple access channel on $f_{CB}$ from Alice and Charlie. We note that Alice's symbol transmitted during the $n^{th}$ \blue{symbol interval} is observed at Bob during $n^{th}$ time symbol on Alice-to-Bob link on $r_{B,n}$ and is again observed as multiplexed symbol from Charlie at $(n+\Theta)^{th}$ time symbol on Charlie-to-Bob link on $r_{B,n+\Theta}$, thus, introducing correlation between $r_{B,n}$ and $r_{B,n+\Theta}$. Overall, the received symbol at Bob during $n^{th}$ \blue{symbol interval} is given as
\begin{subnumcases}{r_{B,n} =}
 \sqrt{1-\alpha}h_{AB,n}x_{n} + \sqrt{\alpha}h_{CB,n}y_{n} + w_{B,n}, & if $1\leq n\leq \Theta$,\label{eq:rB1}
\\
\sqrt{1-\alpha}h_{AB,n}x_{n} + h_{CB,n}t_{n} + w_{B,n}, & if $\Theta+1\leq n\leq L$,\label{eq:rB2}
\end{subnumcases}
\noindent where $h_{AB,n}\sim\mathcal{CN}(0,\sigma_{AB}^{2})$ and $h_{CB,n}\sim\mathcal{CN}(0,\sigma_{CB}^{2})$ are the channel coefficients of Alice-to-Bob and Charlie-to-Bob links, respectively. Further, $w_{B,n}\sim\mathcal{CN}(0,N_{o})$ is the AWGN at Bob during the $n^{th}$ \blue{symbol interval}. For the decoding process, we assume that Bob has the knowledge of $h_{CB,n}$. Further, since Charlie is in the vicinity of Alice, we assume $\sigma_{AC}^{2}> \sigma_{AB}^{2}$, thus achieving a higher SNR on Alice-to-Charlie link as compared to Alice-to-Bob link. We also assume that all channels and noise realizations are statistically independent. Finally, for error analysis in the rest of the paper, we use $\sigma_{AB}^{2}=\sigma_{CB}^{2} = 1$.\looseness = -1
  
\subsection{Error Analysis at Bob}

Since Charlie decodes and multiplexes Alice's symbols to Bob, in this section, we first characterise the error introduced by Charlie in decoding Alice's symbols. We then compute the joint error-rates at Bob in decoding Alice's and Charlie's symbols.

Based on \eqref{eq:rc}, the maximum likelihood (ML) decoder for detection of Alice's symbols at Charlie is
\bieee
\hat{x}_{n} &=& \arg\underset{i\in\{0,1\}}{\max\ }g\left(\mathbf{r}_{C,n}\vert x_{n}=i\right) =  \arg\underset{i\in\{0,1\}}{\min\ } N_{C}\ln(\pi\Omega_{i}) + \frac{\mathbf{r}_{C,n}^{H}\mathbf{r}_{C,n}}{\Omega_{i}},\label{eq:ML_Charlie}
\eieee
\noindent where $g\left(\mathbf{r}_{C,n}\vert x_{n}\right)$ is the probability density function (PDF) of $\mathbf{r}_{C,n}$ conditioned on $x_{n}$ and $\Omega_{i} = \sigma_{AC}^{2}(1-\alpha)i+\lambda\frac{(1+\alpha)}{2} + N_{o}$. Based on \eqref{eq:ML_Charlie}, the threshold for energy detection is given by $\tau = N_{C}\frac{\Omega_{0}\Omega_{1}}{\Omega_{0}-\Omega_{1}}\ln\left(\frac{\Omega_{0}}{\Omega_{1}}\right)$. Finally, using $\tau$, it is straightforward to prove the next theorem that presents the probability of error at Charlie in decoding Alice's symbols.
\begin{theorem}
\label{th:P01P10}
If $\Phi_{i\overline{i}}$ denotes the probability of decoding symbol $i$ as $\overline{i}$, for $i,\overline{i}=\{0,1\}$, then $\Phi_{01} = \frac{\Gamma\left(N_{C}, \frac{\tau}{\Omega_{0}}\right)}{\Gamma(N_{C})}$ and $\Phi_{10} = \frac{\gamma\left(N_{C}, \frac{\tau}{\Omega_{1}}\right)}{\Gamma(N_{C})}$, where $\Gamma(\cdot,\cdot)$ and $\gamma(\cdot,\cdot)$ denote the lower and upper incomplete gamma functions, respectively, and $\Gamma(\cdot)$ denotes the complete gamma function.
\end{theorem}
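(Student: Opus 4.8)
The plan is to reduce the maximum-likelihood rule in \eqref{eq:ML_Charlie} to a one-sided threshold test on the receive energy, and then to observe that, under either hypothesis, this energy is an Erlang (Gamma) random variable, so the two error probabilities are just its tail and cumulative distribution functions evaluated at the threshold.

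First I would condition on $x_{n}=i$ and examine \eqref{eq:rc}. Since $\mathbf{h}_{AC,n}$, $\mathbf{h}_{CC,n}$, and $\mathbf{w}_{C,n}$ are independent zero-mean circularly-symmetric complex Gaussian vectors with i.i.d.\ components, and $x_{n}=i$ is deterministic under the conditioning, the vector $\mathbf{r}_{C,n}$ is itself zero-mean circularly-symmetric complex Gaussian with i.i.d.\ components of variance $\Omega_{i}=\sigma_{AC}^{2}(1-\alpha)i+\lambda\frac{1+\alpha}{2}+N_{o}$ (the Alice term contributing $\sigma_{AC}^{2}(1-\alpha)$ only when $i=1$). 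Hence the decision statistic $R_{n}\triangleq\mathbf{r}_{C,n}^{H}\mathbf{r}_{C,n}=\sum_{k=1}^{N_{C}}|r_{C,n,k}|^{2}$ is a sum of $N_{C}$ i.i.d.\ exponential random variables of mean $\Omega_{i}$; equivalently, $R_{n}/\Omega_{i}$ is Gamma-distributed with shape parameter $N_{C}$ and unit scale.

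Next I would simplify the ML test. Comparing the two metrics in \eqref{eq:ML_Charlie}, $\hat{x}_{n}=1$ is chosen iff $N_{C}\ln(\pi\Omega_{1})+R_{n}/\Omega_{1}<N_{C}\ln(\pi\Omega_{0})+R_{n}/\Omega_{0}$, i.e.\ $R_{n}\bigl(\tfrac{1}{\Omega_{1}}-\tfrac{1}{\Omega_{0}}\bigr)<N_{C}\ln(\Omega_{0}/\Omega_{1})$. Because $\sigma_{AC}^{2}>0$ and $\alpha\in(0,1)$, we have $\Omega_{1}>\Omega_{0}$, so both $\tfrac{1}{\Omega_{1}}-\tfrac{1}{\Omega_{0}}$ and $\ln(\Omega_{0}/\Omega_{1})$ are negative; dividing through therefore reverses the inequality and yields the equivalent rule $\hat{x}_{n}=1\iff R_{n}>\tau$ with $\tau=N_{C}\frac{\Omega_{0}\Omega_{1}}{\Omega_{0}-\Omega_{1}}\ln(\Omega_{0}/\Omega_{1})$, which is positive since $\Omega_{0}-\Omega_{1}<0$ and $\ln(\Omega_{0}/\Omega_{1})<0$ while $\Omega_{0}\Omega_{1}>0$.

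Finally, with the one-sided test in hand, the error probabilities follow directly from the distribution of $R_{n}$: $\Phi_{01}=\Pr(R_{n}>\tau\mid x_{n}=0)=\Pr\!\bigl(R_{n}/\Omega_{0}>\tau/\Omega_{0}\bigr)$ is the complementary CDF of a Gamma$(N_{C},1)$ variable at $\tau/\Omega_{0}$, namely $\Gamma(N_{C},\tau/\Omega_{0})/\Gamma(N_{C})$, while $\Phi_{10}=\Pr(R_{n}\le\tau\mid x_{n}=1)=\Pr\!\bigl(R_{n}/\Omega_{1}\le\tau/\Omega_{1}\bigr)$ is the CDF of the same variable at $\tau/\Omega_{1}$, namely $\gamma(N_{C},\tau/\Omega_{1})/\Gamma(N_{C})$. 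There is no genuine obstacle here — the computation is routine once the statistic is identified — so the only points needing care are confirming $\Omega_{1}>\Omega_{0}$ (hence a positive, one-sided threshold) and correctly matching the tail versus cumulative incomplete-gamma forms to the two conditioning events.
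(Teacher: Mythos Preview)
Your proposal is correct and follows exactly the route the paper implies: the paper derives the threshold $\tau$ immediately before the theorem and then simply states that ``using $\tau$, it is straightforward to prove'' the result, without giving further details. Your argument supplies precisely those omitted details---identifying $\mathbf{r}_{C,n}^{H}\mathbf{r}_{C,n}$ as a Gamma$(N_{C},\Omega_{i})$ statistic under hypothesis $i$ and reading off the tail and cumulative probabilities at $\tau$---so there is nothing to add or compare.
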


From Theorem~\ref{th:P01P10}, we immediately observe the following two remarks.
\begin{rem}
\label{rem:P01P10alpha}
The terms $\Phi_{01}$ and $\Phi_{10}$ are increasing functions of $\alpha$ for a given $N_{C}$, $SNR$, and $\lambda$.
\end{rem}
\begin{rem}
\label{rem:P01P10Nc}
The terms $\Phi_{01}$ and $\Phi_{10}$ are decreasing functions of $N_{C}$, for a given $\alpha$, $SNR$, and $\lambda$.
\end{rem}

Based on \eqref{eq:rB1} and \eqref{eq:rB2}, the joint maximum a posteriori (MAP) decoder for DASC-MF scheme is
\begin{small}
\bieee
\hat{i},\hat{j},\hat{k}, \hat{l} =\arg\underset{\substack{i,j,k,l}}{\max\ }g_{B}\left(r_{B,n}, r_{B,n-\Theta}\vert x_{n} = i, x_{n-\Theta}=j, y_{n} = e^{\iota\frac{\pi}{M}\left(2l+1\right)}, y_{n-\Theta} = e^{\iota\frac{\pi}{M}\left(2k+1\right)}, h_{CB,n}, h_{CB,n-\Theta}\right),\label{eq:MAP}
\eieee
\end{small}
\noindent where $i,j\in\{0,1\}$, $k,l\in\{0,M-1\}$ and $g_{B}(.)$ is the joint PDF of $r_{B,n}$ and $r_{B,n-\Theta}$ conditioned on $x_{n}$ and $y_{n}$ for $1\leq n\leq \Theta$ and $x_{n}$, $x_{n-\Theta}$, $y_{n-\Theta}$, and $y_{n}$, for $\Theta+1\leq n \leq L$.

Towards decoding Alice's and Charlie's symbols, the implementation of the decoder in \eqref{eq:MAP} is complex due to the correlation between the symbols received $\Theta$ symbols apart. In particular, the complexity of the proposed decoder is $\mathcal{O}(4M^{2})$ as Bob has to jointly decode two OOK and two PSK symbols. Further, we note that, at any instant the symbols transmitted by Alice and Charlie are uncoordinated in their energies. For instance, when Alice transmits bit-0, Alice and Charlie contribute zero and $\alpha$ energies, respectively. Thus, the resultant sum energy on $f_{CB}$ is $\alpha$. As a result, Dave's ED that is monitoring $f_{CB}$ may observe a dip in the average energy level on $f_{CB}$. Therefore, to circumvent the above problems, in the next section, we propose a variation of DASC-MF scheme, such that the new scheme is amenable to lower-decoding complexity at Bob. Further, the new scheme also ensures that despite uncoordinated transmission from Alice and Charlie, the duration for which Dave's ED observes a dip in the average energy level on $f_{CB}$ is small.

 \section{$3\phi$ Delay Tolerant Semi-Coherent Multiplex-and-Forward Relaying Scheme}
\label{sec:Three_Phase}

From the discussions in the previous section, we note that Alice's information symbol, $x_{n}$, $1\leq n\leq L-\Theta$ is observed twice at Bob during the interval of $T$ seconds: once during the \blue{symbol interval} $1\leq n\leq L-\Theta$ on the Alice-to-Bob link of the MAC and again after $\Theta$ symbols as $t_{n}$, for $\Theta+1\leq n\leq L$ on the Charlie-to-Bob link of the MAC. Therefore, if Bob discards Alice's symbols on Alice-to-Bob link of the MAC for $1\leq n\leq L-\Theta$ and treat these symbols as interference, he can still recover these symbols using the multiplexed symbols $t_{n}$, $\Theta+1\leq n\leq L$. Furthermore, if the interference caused from Alice's symbols on the \blue{symbol intervals} $1\leq n\leq L-\Theta$ are somehow suppressed, then the correlation between $r_{B,n}$ and $r_{B,n+\Theta}$ can be minimized, thereby ensuring that Charlie's symbols on $r_{B,n}$ are decoded independent of Charlie's multiplexed symbols on the other \blue{symbol intervals}. Subsequently this would facilitate reduced decoding complexity at Bob. To facilitate interference suppression, we propose a method of choosing $\alpha$ for the \blue{symbol intervals} $1\leq n\leq L-\Theta$, such that Charlie would continue to reliably recover Alice's symbols for multiplexing process. We note that, since the Alice-to-Bob link of the MAC is non-coherent, Alice contributes $1-\alpha$ and zero energies on this link for $x_{n}=1$ and $x_{n}=0$, respectively. As a result, the variance of the effective noise at Bob is utmost $N_{o}+1-\alpha$ and $N_{o}$ for $x_{n}=1$ and $x_{n}=0$, respectively. Since $N_{o}+1-\alpha$ is a decreasing function of $\alpha$, if we increase $\alpha$ close to $1$, we can suppress the interference on Alice-to-Bob link of the MAC, when $x_{n}=1$ is sent from Alice. In particular, if $1-\alpha = \Delta N_{o}$, such that $0<\Delta\ll 1$, where $\Delta$ is the design parameter, then, $N_{o}+1-\alpha = N_{o}(1+\Delta)\approx N_{o}$. However, we must note that, when $\alpha$ is close to $1$, Charlie requires a large $N_{C}$ to reliably decode Alice's symbols (Remark~\ref{rem:P01P10alpha}). Therefore, if we indefinitely reduce $\Delta$ to a very small value to increase $\alpha$ close to $1$, $N_{C}$ increases which in turn increases the latency at Charlie. Thus, in our proposed scheme, interference suppression at Bob comes at a cost of large $N_{C}$.

 From the above discussion, the transmission scheme for the \blue{symbol intervals}, $1\leq n\leq L$ at Alice and Charlie can be divided into three phases. During Phase-I, $1\leq n\leq \Theta$, Alice and Charlie transmit their symbols scaled by $1-\alpha$ and $\alpha$ fractions of their energies, respectively. Subsequently, during Phase-II, $\Theta+1\leq n\leq L-\Theta$, Alice continues to transmit her symbols scaled by $1-\alpha$ fraction of her energy, however, Charlie scales the multiplexed symbol by $\alpha$ fraction of his energy as given in \eqref{eq:tn1}~-~\eqref{eq:tn2}. Further, due to processing delay of $\Theta$ symbols at Charlie, the multiplexed symbols corresponding to $x_{n}$, $L-\Theta+1\leq n\leq L$, reach Bob after the deadline i.e., after $T$ seconds. Therefore, these symbols cannot be decoded using the multiplexed symbols and instead must be decoded using the symbols on Alice-to-Bob link of the MAC. Thus, for $L-\Theta+1\leq n\leq L$, Bob must jointly decode three symbols, i.e.,  Alice's current symbol, $x_{n}$, Charlie's current symbol, $y_{n}$, and multiplexed Alice's symbol, $x_{n-\Theta}$. Since these symbols are transmitted via combination of coherent and non-coherent modulation schemes, Bob needs distinguishable energy levels for detection when $x_{n}=0$ and $x_{n}=1$ is sent. As a consequence, for $L-\Theta+1\leq n\leq L$, we cannot use $\alpha = 1-\Delta N_{o}$ and instead use a different energy-splitting factor, $\beta\in(0,1)$. Therefore, we refer to the \blue{symbol intervals} $L-\Theta+1\leq n\leq L$ as Phase-III, wherein, Alice and Charlie transmit their  symbols scaled by $1-\beta$ and $\beta$ fraction of their energies, respectively. Here, Charlie only rotates his PSK symbol by $e^{\iota\frac{\pi}{M}}$ when he decodes symbol $1$ from Alice. It is evident from the discussions that the maximum delay tolerated by the proposed $3\phi$ DASC-MF scheme is $\frac{L}{2}$, i.e., $\Theta \leq \frac{L}{2}$. This is because, for $\Theta>\frac{L}{2}$, only a fraction of Alice's symbols are recoverable using Charlie's multiplexed symbols and a majority of the multiplexed symbols are received after the deadline, thus, violating the deadline constraint.
 
 Overall, the symbols received at Bob during each phase are tabulated in Table~\ref{tab:three_phase}. Further, in Fig.~\ref{fig:frame_model}, the top two frames depict the symbols transmitted by Alice and Charlie when using the $3\phi$ DASC-MF scheme. The bottom frame depicts the corresponding symbols received at Bob during each phase. Furthermore, assuming Charlie transmits symbols using $4-$PSK signalling, the constellation diagrams jointly contributed by Alice and Charlie during each phase at Bob are shown in Fig.~\ref{fig:cons}.
 
\begin{figure*}[!htb]
\captionsetup{width=0.48\textwidth}
\begin{center}
    \begin{minipage}{0.48\textwidth}
    \centering
        \includegraphics[scale=0.25]{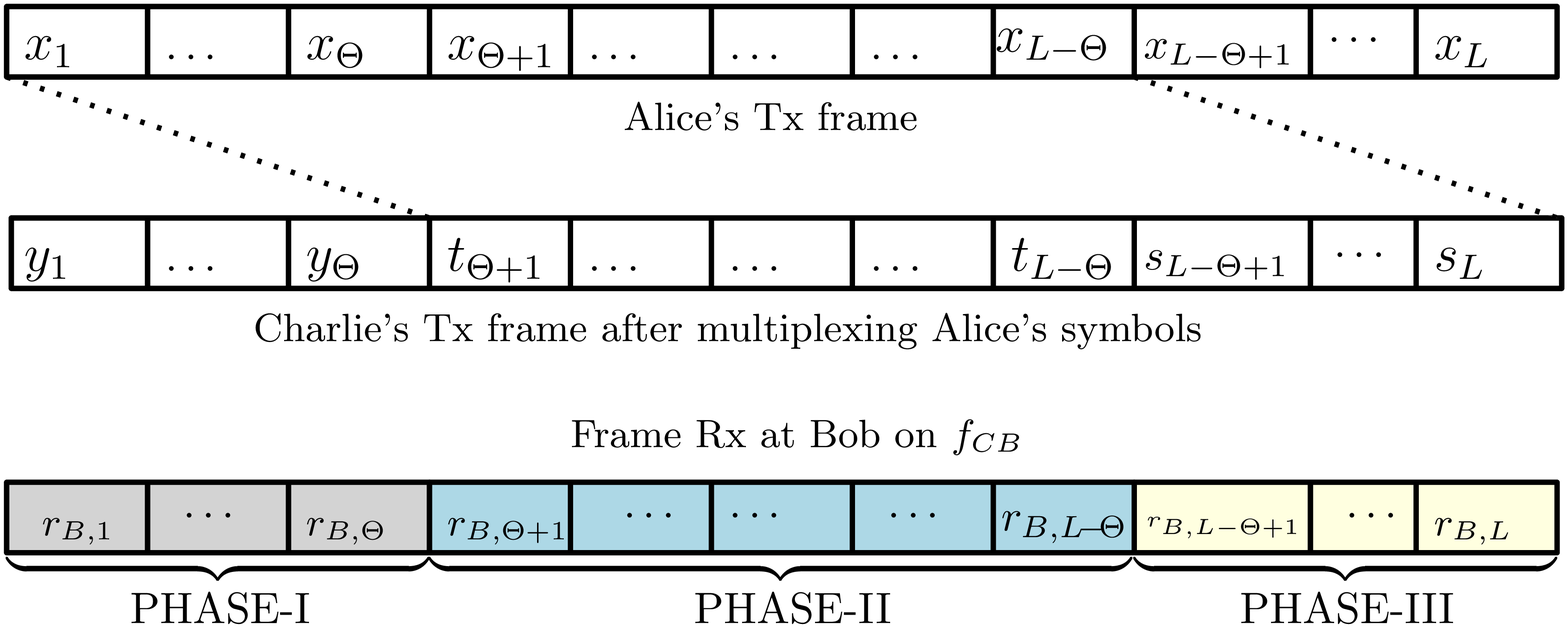}
\caption{\label{fig:frame_model} Frame model for the $3\phi$ DASC-MF scheme.}
    \end{minipage}%
    \hfill
   \begin{minipage}[b]{0.48\textwidth}
   \captionsetup{type=table,font=scriptsize}
    \caption{\label{tab:three_phase} SYMBOLS TRANSMITTED IN EACH PHASE}
\begin{center}
\scalebox{0.75}{
\begin{tabular}{|c|c|c|}
\hline
          & Alice-to-Bob link & Charlie-to-Bob link
\\ \hline
Phase-I  & $\sqrt{1-\alpha}x_{n}$  & $\sqrt{\alpha}y_{n}$
\\ \hline
\multirow{2}{*}{Phase-II}  & $\sqrt{1-\alpha}x_{n}$ & $t_{n}=
\begin{cases} 
y_{n} &\text{if } \hat{x}_{n-\Theta}=0,\\
\sqrt{\alpha}e^{\iota\frac{\pi}{M}}y_{n} &\text{if } \hat{x}_{n-\Theta}=1.
\end{cases}
$ 
\\ \hline
\multirow{2}{*}{Phase-III}  & $\sqrt{1-\beta}x_{n}$ & $s_{n}=
\begin{cases} 
\sqrt{\beta}y_{n} &\text{if } \hat{x}_{n-\Theta}=0,\\
\sqrt{\beta}e^{\iota\frac{\pi}{M}}y_{n} &\text{if } \hat{x}_{n-\Theta}=1.
\end{cases}
$ 
\\ \hline
\end{tabular}
}
\end{center}
    \end{minipage}%
    \end{center}
    \vspace{-0.5cm}
\end{figure*} 
 
From the above discussions, it is clear that, the decoding complexities during Phase-I, Phase-II and Phase-III are $\mathcal{O}(M)$, $\mathcal{O}(2M)$, and $\mathcal{O}(4M)$, respectively. Hence, unlike DASC-MF scheme, the worst-case complexity of $3\phi$ DASC-MF is linear in $M$. Further, since $\alpha$ is close to $1$, for $1\leq n\leq L-\Theta$, the energy level on $f_{CB}$ is solely controlled by Charlie and only the last $\Theta$ symbols  out of the $L$ symbols received at Bob are uncoordinated in energy. In particular, in $3\phi$ DASC-MF scheme, $\frac{\Theta}{L}$ fraction of symbols are uncoordinated in energy, whereas, in traditional DASC-MF scheme, the entire frame of $L$ symbols are uncoordinated in energy. Hence, $3\phi$ DASC-MF scheme helps reduce the decoding complexity and reduce the fraction of symbols over which the energy is uncoordinated.

\begin{figure}
\centering
\includegraphics[scale=0.33]{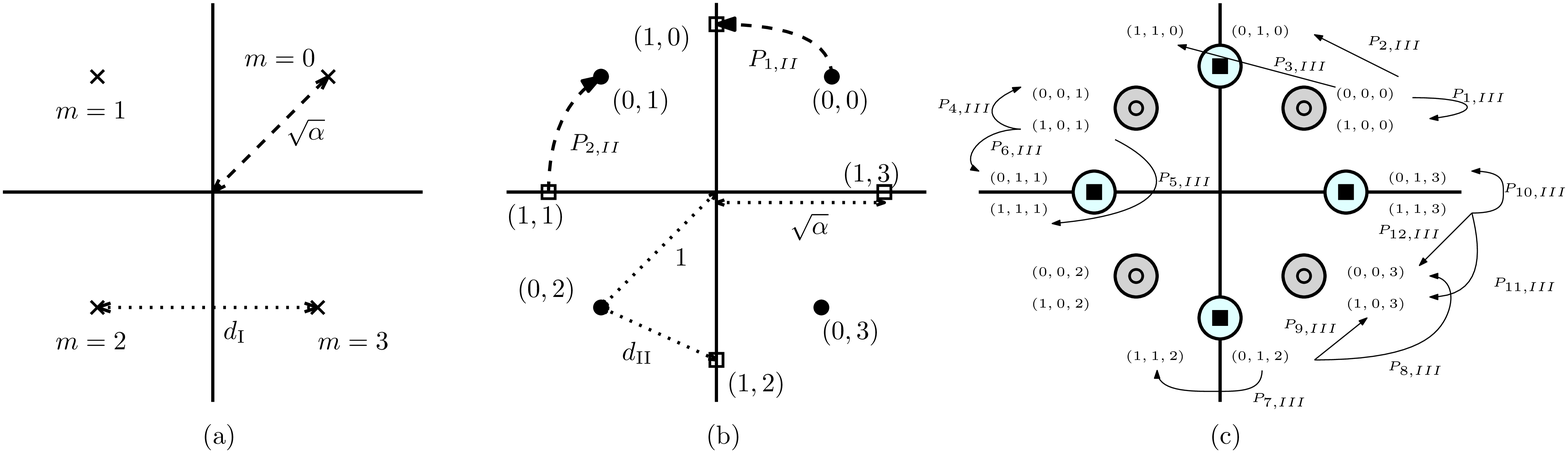}
\caption{\label{fig:cons} Constellation diagram depicting the symbols jointly contributed by Alice and Charlie during each phase in the $3\phi$ DASC-MF as a function of $x_{n}$, $x_{n-\Theta}$, and $y_{n}$: (a) Phase-I (b) Phase-II (c) Phase-III.}
\vspace{-0.5cm}
\end{figure}

Towards characterising the error performance of the $3\phi$ DASC-MF scheme, if $P_{e,3\phi}$ denotes the average probability of error across all the three phases, then
\bieee
P_{e,3\phi} = \frac{\Theta}{L}P_{I,avg} + \frac{(L-2\Theta)}{L}P_{II,avg} + \frac{\Theta}{L}P_{III,avg},\label{eq:Pe_main}
\eieee
\noindent where $P_{I,avg}$ denotes the average probability of error in decoding $y_{n}$, $1\leq n\leq \Theta$, during Phase-I, $P_{II,avg}$ denotes the average probability of error in jointly decoding $y_{n}$ and $x_{n-\Theta}$, $\Theta+1\leq n\leq L-\Theta$, during Phase-II, and $P_{III,avg}$ denotes the average probability of error in jointly decoding $x_{n}$, $y_{n}$, $x_{n-\Theta}$, $L-\Theta+1\leq n\leq L$, during Phase-III. All the error terms in \eqref{eq:Pe_main} are averaged over $h_{CB,n}$, $1\leq n\leq L$. Note that, here, $P_{I,avg}$ is a function of $\alpha$, $P_{II,avg}$ is a function of $N_{C}$ and $\alpha$, and $P_{III,avg}$ is a function of $N_{C}$ and $\beta$. Since $\alpha = 1-\Delta N_{o}$ is fixed for a given choice of $\Delta$ and SNR, the parameters of interest are $N_{C}$ and $\beta$. Therefore, in \eqref{opt} we formulate an optimization problem to compute the optimal values of $N_{C}$ and $\beta$ for a given $L$, $\Delta$ and SNR. 
\begin{mdframed}
\bieee
N_{C}^{\star},\beta^{\star} = \arg\underset{N_{C}, \beta}{\min} \quad &  & P_{e,3\phi};\ \ \text{s.t.: } N_{C}>1, 0<\beta < 1.\label{opt}
\eieee 
\end{mdframed}
\noindent In the subsequent sections, we discuss the signal model for all the three phases in detail and compute $P_{I,avg}$, $P_{II,avg}$, and $P_{III,avg}$, so as to obtain $P_{e,3\phi}$.

\subsection{Signal Model and Error Analysis of Phase-I}

During Phase-I, Bob receives the first $\Theta$ symbols of Alice on Alice-to-Bob link of the MAC scaled by the factor $\sqrt{1-\alpha}$ and the unmultiplexed symbols of Charlie scaled by the factor $\sqrt{\alpha}$ on Charlie-to-Bob link of the MAC. However, according to the protocol of $3\phi$ DASC-MF scheme, Bob treats the incoming symbols from Alice on the Alice-to-Bob link of the MAC as interference. Thus, the $n^{th}$ baseband symbol received at Bob during Phase-I is given as $r_{B,n} = h_{CB,n}\sqrt{\alpha}y_{n} + \tilde{w}_{B,n}$, for $1\leq n\leq \Theta$, where $\tilde{w}_{B,n} = \sqrt{1-\alpha}h_{AB,n}x_{n} + w_{B,n}$ denotes the effective noise at Bob, such that $N_{o\alpha} = N_{o}+1-\alpha$ is its worst case effective variance. Given that $\alpha$ is close to $1$, we assume that $\tilde{w}_{B,n}\sim\mathcal{CN}(0,N_{o\alpha})$. For $4-$PSK signalling scheme, the constellation diagram during Phase-I at Bob is as shown in Fig.~\ref{fig:cons}~(a), where $m$ denotes the index of the PSK symbol transmitted by Charlie. Owing to the Gaussian statistics of $\tilde{w}_{B,n}$ and $h_{CB,n}$, $r_{B,n}\vert_{y_{n}, h_{CB,n}}\sim\mathcal{CN}\left(\sqrt{\alpha}h_{CB,n}y_{n}, N_{o\alpha}\right)$. Since the symbols received at Bob during Phase-I only contains Charlie's symbols, based on $r_{B,n}$, the Maximum A Posteriori (MAP) decoder for Phase-I is
\bieee
\hat{m} = \arg\underset{m}{\max\ }g_{I}\left(r_{B,n}\vert y_{n} = e^{\iota\frac{\pi}{M}\left(2m+1\right)}, h_{CB,n}\right),\label{eq:MAP_P1}
\eieee
\noindent where $m\in\{0,\ldots,M-1\}$ and $g_{I}(\cdot)$ is the PDF of $r_{B,n}$ conditioned on $h_{CB,n}$ and $y_{n}$. Further, $\hat{m}$ denotes the decoded PSK index. Using \eqref{eq:MAP_P1}, in the following theorem, we characterise the average probability of error over all the realizations of $h_{CB,n}$, denoted by $P_{1,avg}$.

\begin{theorem}
\label{th:upper_phase1}
At high SNR, a union bound on the probability of error during Phase-I is approximated as $2Q\left(\frac{|h_{CB,n}| d_{\mathrm{I}}}{\sqrt{2N_{o\alpha}}}\right)$, where $Q(\cdot)$ denotes the Q-function. Further, we first use Chernoff bound to upper bound the error expression and then average it over all the realizations of $h_{CB,n}$ to obtain $\mathcal{P}_{1}$, such that $\mathcal{P}_{I} = \dfrac{4N_{o\alpha}}{4N_{o\alpha} + d_{\mathrm{I}}^{2}}$. \cite{dcom}
\end{theorem}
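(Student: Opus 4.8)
The plan is to recognise the Phase-I problem as a scaled $M$-PSK detection problem conditioned on the fading gain, to bound its error by a nearest-neighbour union bound, and then to remove the conditioning by averaging a Chernoff-type bound over the Rayleigh statistics of $h_{CB,n}$. First I would note that, given $h_{CB,n}$, the observation $r_{B,n}=\sqrt{\alpha}h_{CB,n}y_{n}+\tilde{w}_{B,n}$ is circularly-symmetric Gaussian with mean equal to one of the $M$ points of the constellation $\{\sqrt{\alpha}h_{CB,n}e^{\iota\frac{\pi}{M}(2m+1)}\}$ and (worst-case) variance $N_{o\alpha}$, so the MAP rule \eqref{eq:MAP_P1} coincides with minimum-distance decoding on a rotated-and-scaled $M$-PSK constellation whose decision regions are the usual $M$ angular sectors of width $2\pi/M$. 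The modelling step that treats the aggregate $\tilde{w}_{B,n}=\sqrt{1-\alpha}h_{AB,n}x_{n}+w_{B,n}$ (which is really a Gaussian mixture over $x_{n}$, with conditional variance $N_{o}$ or $N_{o}+1-\alpha$) as a single $\mathcal{CN}(0,N_{o\alpha})$ term via its worst-case variance is legitimate precisely because $\alpha=1-\Delta N_{o}$ is close to $1$, so $N_{o\alpha}\approx N_{o}$.

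Next I would invoke the standard nearest-neighbour union bound for $M$-PSK: each transmitted point has exactly two adjacent constellation points, each at Euclidean distance $d_{\mathrm{I}}|h_{CB,n}|$ with $d_{\mathrm{I}}=2\sqrt{\alpha}\sin(\pi/M)$ the minimum distance of the $\sqrt{\alpha}$-scaled constellation, and the pairwise error probability for each such pair is $Q\!\left(\frac{d_{\mathrm{I}}|h_{CB,n}|}{2\sqrt{N_{o\alpha}/2}}\right)=Q\!\left(\frac{|h_{CB,n}|d_{\mathrm{I}}}{\sqrt{2N_{o\alpha}}}\right)$. Keeping these two dominant terms and discarding the pairwise contributions from non-adjacent symbols, which decay with a strictly larger exponent and are therefore negligible at high SNR, yields the conditional union bound $2Q\!\left(\frac{|h_{CB,n}|d_{\mathrm{I}}}{\sqrt{2N_{o\alpha}}}\right)$ asserted in the statement.

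Finally, to obtain the fading-averaged quantity $\mathcal{P}_{\mathrm{I}}$, I would apply the Chernoff bound $Q(x)\le\frac{1}{2}e^{-x^{2}/2}$, giving the conditional bound $e^{-|h_{CB,n}|^{2}d_{\mathrm{I}}^{2}/(4N_{o\alpha})}$, and then take the expectation over $|h_{CB,n}|^{2}$, which is exponentially distributed with unit mean since $h_{CB,n}\sim\mathcal{CN}(0,1)$ (recall $\sigma_{CB}^{2}=1$). Using $\mathbb{E}\!\left[e^{-s|h_{CB,n}|^{2}}\right]=(1+s)^{-1}$ with $s=d_{\mathrm{I}}^{2}/(4N_{o\alpha})$ immediately gives $\mathcal{P}_{\mathrm{I}}=\frac{4N_{o\alpha}}{4N_{o\alpha}+d_{\mathrm{I}}^{2}}$.

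I do not anticipate a genuine obstacle here: every step is either a textbook fact about $M$-PSK detection (as in \cite{dcom}) or an elementary Gaussian/exponential integral. The only points deserving a word of care are the high-SNR tightness of the two-term union bound — where the factor $2$ is exactly the number of adjacent PSK symbols, independent of $M\ge 4$ — and the worst-case-variance Gaussianisation of $\tilde{w}_{B,n}$ noted above; both are already flagged in the surrounding text, so the proof reduces to assembling these ingredients.
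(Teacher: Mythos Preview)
Your proposal is correct and follows essentially the same textbook route the paper invokes via its citation to \cite{dcom}: recognise Phase-I as coherent $M$-PSK detection in $\mathcal{CN}(0,N_{o\alpha})$ noise, apply the two-nearest-neighbour union bound to get $2Q\!\left(\frac{|h_{CB,n}|d_{\mathrm{I}}}{\sqrt{2N_{o\alpha}}}\right)$, Chernoff-bound the $Q$-function, and average the resulting exponential over the unit-mean exponential law of $|h_{CB,n}|^{2}$ to obtain $\mathcal{P}_{\mathrm{I}}=\frac{4N_{o\alpha}}{4N_{o\alpha}+d_{\mathrm{I}}^{2}}$. The paper does not supply any argument beyond this outline, so your derivation is in fact more explicit than what appears there.
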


\subsection{Signal Model and Error Analysis of Phase-II}
 
During Phase-II, Alice continues to transmit her OOK symbol, $x_{n}$, with $1-\alpha$ fraction of her energy and Charlie transmits the multiplexed symbol $t_{n}$, which is a function of $y_{n}$ and $x_{n-\Theta}$. Thus, Bob receives Alice's symbols on the Alice-to-Bob link of the MAC and Charlie's multiplexed symbols on the Charlie-to-Bob link of the MAC. However, he considers Alice's symbols as interference and jointly decodes $y_{n}$ and $x_{n-\Theta}$. The $n^{th}$ baseband symbol received during Phase-II at Bob is given as $r_{B,n} = h_{CB,n}t_{n} + \tilde{w}_{B,n}$, for $\Theta+1\leq n\leq L-\Theta$, where $t_{n}$ is as given in Table~\ref{tab:three_phase} and $\tilde{w}_{B,n}$ is the effective noise at Bob. Fig.~\ref{fig:cons}~(b) depicts the constellation received  at Bob during Phase-II. If Alice and Charlie choose to transmit $x_{n-\Theta} = j$ and $y_{n} = e^{\iota\frac{\pi}{M}(2m+1)}$, then the transmit pair is denoted by $(j,m)$. It can be observed that for $4-$PSK used by Charlie, Bob can receive one out of the $8$ symbols. In general, when Charlie uses $M-$PSK constellation, Bob receives one out of the $2M$ constellation points. Therefore, the distribution of $r_{B,n}$ conditioned on $x_{n-\Theta}$, $y_{n}$, and $h_{CB,n}$ is
\begin{subnumcases}{}
\mathcal{CN}(h_{CB,n}y_{n}, N_{o\alpha}) & if $x_{n-\Theta} = 0$ and $\hat{x}_{n-\Theta} = x_{n-\Theta}$ or $x_{n-\Theta} = 1$ and $\hat{x}_{n-\Theta} \neq x_{n-\Theta}$,
\label{eq:rb_Ph2a}\\
\mathcal{CN}(\sqrt{\alpha}h_{CB,n}e^{\iota\frac{\pi}{M}}y_{n}, N_{o\alpha}) & if $x_{n-\Theta} = 0$ and $\hat{x}_{n-\Theta} \neq x_{n-\Theta}$ or $x_{n-\Theta} = 1$ and $\hat{x}_{n-\Theta} = x_{n-\Theta}$.
\label{eq:rb_Ph2d}
\end{subnumcases}
Using \eqref{eq:rb_Ph2a}~--~\eqref{eq:rb_Ph2d}, the joint MAP decoder for Phase-II is given by
\bieee
\hat{j},\hat{m}= \arg\underset{j,m}{\max\ }g_{II}\left(r_{B,n}\vert x_{n-\Theta}=j, y_{n} = e^{\iota\frac{\pi}{M}\left(2m+1\right)}, h_{CB,n}\right),\label{eq:MAP_p2}
\eieee
\noindent where $j\!\in\!\{0,1\}$ and $m\!\in\!\{0,\ldots, M-1\}$. Further, $g_{II}(\cdot)$ is the conditional PDF of $r_{B,n}$ conditioned on $x_{n-\Theta}$, $y_{n}$, and  $h_{CB,n}$. Note that due to errors introduced by Charlie in decoding Alice's symbols, the distribution of $g_{II}(\cdot)$ is a Gaussian mixture. In particular, the distribution of $g_{II}(\cdot)$ is a convex combination of $g_{II}\left(r_{B,n}\vert y_{n}, h_{CB,n}, x_{n-\Theta}= \hat{x}_{n-\Theta}\right)$ and $g_{II}\left(r_{B,n}\vert e^{\iota\frac{\pi}{M}}y_{n} , h_{CB,n}, x_{n-\Theta}\neq \hat{x}_{n-\Theta}\right)$, when $x_{n-\Theta} = 0$ and $x_{n-\Theta} = 1$. The distribution of $g_{II}(\cdot)$ for different realizations of $x_{n-\Theta}$ is as given in \eqref{eq:GM}. \begin{multline}
g_{II}\left(r_{B,n}\vert x_{n-\Theta}=j, y_{n}, h_{CB,n}\right)= \Phi_{jj} g_{II}\left(r_{B,n}\vert y_{n}, h_{CB,n}, x_{n-\Theta}= \hat{x}_{n-\Theta}\right) +\\
\ \Phi_{j\overline{j}} g_{II}\left(r_{B,n}\vert e^{\iota\frac{\pi}{M}}y_{n} , h_{CB,n}, x_{n-\Theta}\neq \hat{x}_{n-\Theta}\right),\label{eq:GM}
\end{multline}
\noindent where $\overline{j}$ is the complement of $j$ and $\Phi_{00} = 1-\Phi_{01}$ and $\Phi_{11} = 1-\Phi_{10}$ are the probabilities of correct detection of bit-0 and bit-1 at Charlie, respectively. Further, $g_{II}\left(r_{B,n}\vert e^{\iota\frac{\pi}{M}}y_{n} , h_{CB,n}, x_{n-\Theta}\neq \hat{x}_{n-\Theta}\right)$ is the conditional PDF of the symbol received at Bob when Charlie makes an error in decoding Alice's symbol. Since solving the error-performance of the joint MAP decoder using the Gaussian mixtures is non-trivial, we propose an approximation on \eqref{eq:MAP_p2}, where we only consider the dominant term in \eqref{eq:GM} for error computation, for each realization of $j$. Along the similar lines of \cite{TCCN-VH}, we observe that $\Phi_{00}$ is dominant over $\Phi_{01}$ when $x_{n-\Theta} = 0$ and $\Phi_{11}$ is dominant over $\Phi_{10}$, when $x_{n-\Theta} = 1$. Therefore, we approximate the joint MAP decoder in \eqref{eq:MAP_p2} by a Joint Dominant Decoder (JDD), by retaining the first term in the RHS of \eqref{eq:GM}. The expression of JDD is given as
\bieee
\hat{j},\hat{m}= \arg\underset{j,m}{\max\ }\tilde{g}_{II}\left(r_{B,n}\vert x_{n-\Theta}=j, y_{n} = e^{\iota\frac{\pi}{M}\left(2m+1\right)}, h_{CB,n}\right),\label{eq:JDD_p2}
\eieee
\noindent where $j\in\{0,1\}$ and $m\in\{0,\ldots,M-1\}$. Further, $\tilde{g}_{II}(\cdot)$ is an approximation on $g_{II}(\cdot)$ when considering the first term in the RHS of \eqref{eq:GM}. In the next theorem, we derive a union bound on the probability of error in jointly decoding Alice's and Charlie's symbols during Phase-II.

\begin{theorem}
\label{th:P2_3phase}
At high SNR, the probability of error for Phase-II, denoted by $P_{e,II}$, is upper bounded by $\Phi_{00}P_{1,II}  + \Phi_{01}P_{\overline{1},II} + \Phi_{11}P_{2,II} + \Phi_{10}P_{\overline{2},II}$, where $P_{1,II} = P_{2,II}$  and $P_{\overline{1},II}=P_{\overline{2},II} = 1-P_{1,II}$, such that $P_{1,II} \approx Q\left(\frac{|h_{CB,n}| d_{\mathrm{II}}}{\sqrt{2N_{o\alpha}}}\right)$. Here,  $d_{\mathrm{II}} = \sqrt{1 + \alpha - 2\sqrt{\alpha}\cos\left(\frac{\pi}{M}\right)}$ denotes the minimum Euclidean distance between the constellation points transmitted by Charlie during Phase-II.
\end{theorem}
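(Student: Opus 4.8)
The plan is to condition on the Charlie-to-Bob channel gain $h_{CB,n}$ and on the outcome of Charlie's energy detector for $x_{n-\Theta}$, to recognise the Joint Dominant Decoder \eqref{eq:JDD_p2} as a nearest-neighbour detector over the effective $2M$-point constellation Bob sees in Phase-II, and then to recombine the conditional error probabilities by the law of total probability. As a first step I would freeze $h_{CB,n}$ and, as already adopted in the text, replace the mixture effective noise $\tilde{w}_{B,n}$ by its worst-case Gaussian component $\mathcal{CN}(0,N_{o\alpha})$; since enlarging the noise variance of a fixed detector can only enlarge its error probability, this turns every subsequent equality into an upper bound. Conditioned on $h_{CB,n}$, the noiseless observation $h_{CB,n}t_{n}$ lies on one of two rings: the radius-$|h_{CB,n}|$ outer ring $\{h_{CB,n}y_{n}\}$ used when $\hat{x}_{n-\Theta}=0$, and the radius-$\sqrt{\alpha}|h_{CB,n}|$ inner ring $\{\sqrt{\alpha}h_{CB,n}e^{\iota\pi/M}y_{n}\}$ used when $\hat{x}_{n-\Theta}=1$ (cf.\ Table~\ref{tab:three_phase} and Fig.~\ref{fig:cons}), whose phases interleave with angular offset $\pi/M$. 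A direct computation gives the distance between any outer point and either of its two angularly adjacent inner points as $|h_{CB,n}|\,\bigl|1-\sqrt{\alpha}e^{\iota\pi/M}\bigr|=|h_{CB,n}|\,d_{\mathrm{II}}$ with $d_{\mathrm{II}}=\sqrt{1+\alpha-2\sqrt{\alpha}\cos(\pi/M)}$, and I would verify that in the operating regime $\alpha\approx 1$ this is the minimum Euclidean distance of the $2M$-point constellation (it falls below the two intra-ring spacings $2\sin(\pi/M)$ and $2\sqrt{\alpha}\sin(\pi/M)$).

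Next I would expand $P_{e,II}=\sum_{j\in\{0,1\}}P(x_{n-\Theta}=j)\bigl[\Phi_{jj}\,e_{jj}+\Phi_{j\overline{j}}\,e_{j\overline{j}}\bigr]$, where $e_{jk}$ denotes the conditional JDD error probability given $x_{n-\Theta}=j$ and $\hat{x}_{n-\Theta}=k$, using the conditional independence of Charlie's detection error (Theorem~\ref{th:P01P10}) from the Charlie-to-Bob link and $\Phi_{jj}=1-\Phi_{j\overline{j}}$. There are four cases. When Charlie is correct ($k=j$), $h_{CB,n}t_{n}$ is exactly the constellation point the JDD assigns to the true pair $(j,m)$, so $e_{jj}$ is the error probability of a nearest-neighbour detector on the $2M$-point set; the high-SNR nearest-neighbour union bound, keeping the dominant pairwise term at distance $d_{\mathrm{II}}$, gives $e_{00}\approx P_{1,II}:=Q\!\bigl(|h_{CB,n}|d_{\mathrm{II}}/\sqrt{2N_{o\alpha}}\bigr)$, and by the ring-to-ring symmetry of the constellation $e_{11}\approx P_{2,II}=P_{1,II}$. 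When Charlie errs ($k\ne j$), the channel instead delivers (noiselessly) the JDD codebook point of the wrong pair $(\overline{j},m)$, which lies at Euclidean distance exactly $|h_{CB,n}|d_{\mathrm{II}}$ from the codebook point of the correct pair $(j,m)$; consequently $e_{j\overline{j}}$ is one minus the probability that the noise moves the observation into the decision region of $(j,m)$, which to leading exponential order equals $Q\!\bigl(|h_{CB,n}|d_{\mathrm{II}}/\sqrt{2N_{o\alpha}}\bigr)=P_{1,II}$, so that $e_{01}\approx 1-P_{1,II}=P_{\overline{1},II}$ and $e_{10}\approx 1-P_{2,II}=P_{\overline{2},II}$.

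Finally I would bound $P(x_{n-\Theta}=j)\le 1$ in the total-probability sum and substitute the four conditional estimates to obtain $P_{e,II}\le\Phi_{00}P_{1,II}+\Phi_{01}P_{\overline{1},II}+\Phi_{11}P_{2,II}+\Phi_{10}P_{\overline{2},II}$; this crude step is precisely what lets one discard the factor $\frac{1}{2}$ that equiprobable OOK bits would otherwise introduce, and it still yields a valid bound because $1-P_{1,II}\ge\frac{1}{2}$ and $e_{j\overline{j}}\le 1$ at the SNRs of interest. The only delicate point, and the main obstacle, is the cross-term ($k\ne j$) analysis: one must argue that the JDD, which by construction in \eqref{eq:JDD_p2} drops the Charlie-error hypothesis from the Gaussian mixture \eqref{eq:GM}, then behaves like a nearest-neighbour detector that is wrong with probability close to one, because the delivered point coincides up to noise with the codebook point of the wrong pair and that wrong point is only $d_{\mathrm{II}}$ away from the correct one. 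Verifying that this distance is exactly $d_{\mathrm{II}}$ (the same quantity governing the $k=j$ cases) and that $d_{\mathrm{II}}$ is the constellation minimum distance when $\alpha\approx 1$ is the remaining bookkeeping; the rest is the standard high-SNR $Q$-function approximation already invoked for Phase-I.
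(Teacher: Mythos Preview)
Your proposal is correct and follows essentially the same route as the paper: condition on Charlie's decoding outcome $\hat{x}_{n-\Theta}$, treat the JDD as a nearest-neighbour detector on the $2M$-point two-ring constellation, keep only the dominant pairwise terms at distance $|h_{CB,n}|\,d_{\mathrm{II}}$, and recombine via total probability with the weights $\Phi_{jj},\Phi_{j\overline{j}}$. Your handling of the Charlie-error branch (arguing the delivered point coincides with the codebook point of the wrong pair, so $e_{j\overline{j}}\approx 1-P_{1,II}$) and your explicit check that $d_{\mathrm{II}}$ is the minimum distance when $\alpha\approx 1$ are more detailed than the paper's terse ``when evaluating the various error terms we observe $P_{\overline{1},II}=1-P_{1,II}$'', but the underlying argument is the same; likewise your bounding $P(x_{n-\Theta}=j)\le 1$ to absorb the factor $\tfrac{1}{2}$ corresponds to the paper's averaging step $\frac{1}{2M}\sum_{j,m}M\,\Pr(\cdot)$ combined with the two-nearest-neighbour count.
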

\begin{proof}
Let a transmit pair corresponding to $x_{n-\Theta}$ and $y_{n}$ is denoted by $(j,m)$, such that $y_{n}$ denotes the PSK symbol transmitted by Charlie corresponding to the index $m$. Let $\nabla_{(j,m)\rightarrow(j^{\prime},m^{\prime})}$ be the event $(j',m')\neq (j,m)$, then a pair $(j,m)$ is incorrectly decoded as $(j',m')$ if
\bieee
\nabla_{(j,m)\rightarrow(j^{\prime},m^{\prime})}\triangleq \frac{\tilde{g}_{II}\left(r_{B,n}\vert x_{n-\Theta}=j, y_{n} = e^{\iota\frac{\pi}{M}\left(2m+1\right)}, h_{CB,n}\right)}{\tilde{g}_{II}\left(r_{B,n}\vert x_{n-\Theta}=j^{\prime}, y_{n} = e^{\iota\frac{\pi}{M}\left(2m^{\prime}+1\right)}, h_{CB,n}\right)}\leq 1.\nn
\eieee
\noindent Therefore, the probabilities of decoding a pair $(0,m)$ as $(j',m')$ and $(1,m)$ as $(j',m')$ at Bob is
\begin{small}
\bieee
\Pr\left((0,m)\rightarrow(j',m')\right) &=& \Phi_{00}\!\Pr\left(\!\nabla_{(0,m)\rightarrow(j',m')}\leq 1\vert \hat{x}_{n-\Theta}=x_{n-\Theta}\right)\!+\! \Phi_{01}\Pr\left(\nabla_{(0,m)\rightarrow(j',m')}\leq 1\vert \hat{x}_{n-\Theta}\neq x_{n-\Theta}\!\right),\label{eq:phase-2_1}\\
\Pr\left((1,m)\rightarrow(j',m')\right) &=& \Phi_{11}\!\Pr\!\left(\nabla_{(1,m)\rightarrow(j',m')}\leq 1\vert \hat{x}_{n-\Theta}=x_{n-\Theta}\right) \!+\! \Phi_{10}\Pr\left(\nabla_{(1,m)\rightarrow(j',m')}\leq 1\vert \hat{x}_{n-\Theta}\neq x_{n-\Theta}\!\right).\label{eq:phase-2_2}
\eieee
\end{small}
Combining \eqref{eq:phase-2_1} and \eqref{eq:phase-2_2}, the probability of error in decoding a transmit pair $(j,m)$ as $(j',m')$ is 
\begin{small}
\bieee
\Pr((j,m)\rightarrow (j',m')) = \Phi_{jj}\Pr\left(\nabla_{(j,m)\rightarrow(j',m')}\leq 1\vert \hat{x}_{n-\Theta}=x_{n-\Theta}\right) + \Phi_{j\overline{j}}\Pr\left(\nabla_{(j,m)\rightarrow(j',m')}\leq 1\vert \hat{x}_{n-\Theta}\neq x_{n-\Theta}\right).\label{eq:phase2_error3}
\eieee
\end{small}
\noindent Therefore, if the overall probability of error in decoding a transmit pair $(j,m)$ is denoted by $\Pr\left((\hat{j}, \hat{m})\neq (j,m)\right)$, then an upper bound on the overall expression is
\bieee
\Pr\left((\hat{j},\hat{m}\neq (j,m)\right)\leq\underset{(j^{\prime},m^{\prime})\neq (j,m)}{\sum_{j^{\prime}=0}^{1}\sum_{m^{\prime}=0}^{M-1}}\Pr((j,m)\rightarrow (j',m')).\label{eq:phase2_error1}
\eieee
\noindent As discussed earlier, Bob receives one out of the $2M-$PSK  symbols  corresponding to the $M-$PSK constellation used by Charlie. Finally, if $P_{e,II}$ denotes the average probability of error over $2M$ symbols, then using \eqref{eq:phase-2_1}~-~\eqref{eq:phase2_error1}, it is straightforward to compute an upper bound on $P_{e,II}$ as $P_{e,II} \leq \frac{1}{2M}\sum_{j=0}^{1}\sum_{m=0}^{M-1}M \Pr\left((\hat{j},\hat{m})\neq (j,m)\right)\approx\Phi_{00}P_{1,II}  + \Phi_{01}P_{\overline{1},II} + \Phi_{11}P_{2,II} + \Phi_{10}P_{\overline{2},II}$. Here, we only considered the nearest-neighbours  at high SNR to approximate the upper bound. The pairwise probability terms in $P_{e,II}$ are depicted in Fig.~\ref{fig:cons}~(b). When evaluating the  various error terms in the upper bound, we observe that $P_{1,II} = P_{2,II}$  and $P_{\overline{1},II}=P_{\overline{2},II} = 1-P_{1,II}$, such that $P_{1,II} \approx Q\left(\frac{|h_{CB,n}| d_{\mathrm{II}}}{\sqrt{2N_{o\alpha}}}\right)$. Here, $d_{\mathrm{II}} = \sqrt{1 + \alpha - 2\sqrt{\alpha}\cos\left(\frac{\pi}{M}\right)}$ denotes the minimum Euclidean distance between the constellation points transmitted by Charlie during Phase-II.  
\end{proof}
\begin{corollary}
\label{col:upper_Phase2}
If $P_{II,avg} = \mathbb{E}_{h_{CB,n}}[P_{e,II}]$ denotes the average probability of error over all the realizations of $h_{CB,n}$, then, using the Chernoff bound $P_{II,avg} \leq \mathcal{P}_{II} \triangleq \dfrac{2N_{o\alpha}}{4N_{o\alpha} + d_{\mathrm{II}}^{2}}\left(\Phi_{00} + \Phi_{11}\right) + \left(\Phi_{01}+\Phi_{10}\right),$ where we have used $P_{\overline{1},II}\leq 1$, $P_{\overline{2},II}\leq 1$ in the upper bound given in Theorem~\ref{th:P2_3phase}.
\end{corollary}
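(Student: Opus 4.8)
The plan is to start from the union bound on $P_{e,II}$ already established in Theorem~\ref{th:P2_3phase}, namely $P_{e,II}\le \Phi_{00}P_{1,II}+\Phi_{01}P_{\overline{1},II}+\Phi_{11}P_{2,II}+\Phi_{10}P_{\overline{2},II}$ with $P_{1,II}=P_{2,II}\approx Q\!\left(|h_{CB,n}|d_{\mathrm{II}}/\sqrt{2N_{o\alpha}}\right)$ and $P_{\overline{1},II}=P_{\overline{2},II}=1-P_{1,II}$, and then to take the expectation over the Charlie-to-Bob fading coefficient $h_{CB,n}$. Since the $\Phi$-terms are determined solely by the energy detector at Charlie (Theorem~\ref{th:P01P10}) and do not depend on $h_{CB,n}$, linearity of expectation reduces the task to separately bounding $\mathbb{E}_{h_{CB,n}}[P_{1,II}]$ (equivalently $\mathbb{E}_{h_{CB,n}}[P_{2,II}]$) and $\mathbb{E}_{h_{CB,n}}[P_{\overline{1},II}]$.

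For the error-event term I would apply the Chernoff bound $Q(x)\le \tfrac12 e^{-x^2/2}$, giving $P_{1,II}\le \tfrac12\exp\!\left(-\frac{|h_{CB,n}|^2 d_{\mathrm{II}}^2}{4N_{o\alpha}}\right)$. Because $h_{CB,n}\sim\mathcal{CN}(0,\sigma_{CB}^2)$ with the convention $\sigma_{CB}^2=1$ adopted for the error analysis, $|h_{CB,n}|^2$ is exponentially distributed with unit mean, so $\mathbb{E}[e^{-c|h_{CB,n}|^2}] = (1+c)^{-1}$ with $c = d_{\mathrm{II}}^2/(4N_{o\alpha})$. This yields $\mathbb{E}_{h_{CB,n}}[P_{1,II}] \le \frac{1}{2\left(1+d_{\mathrm{II}}^2/(4N_{o\alpha})\right)} = \frac{2N_{o\alpha}}{4N_{o\alpha}+d_{\mathrm{II}}^2}$, and the identical bound holds for $\mathbb{E}_{h_{CB,n}}[P_{2,II}]$. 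For the complementary terms I would simply use $P_{\overline{1},II}=1-P_{1,II}\le 1$ and $P_{\overline{2},II}\le 1$, as announced in the corollary, so that $\mathbb{E}_{h_{CB,n}}[\Phi_{01}P_{\overline{1},II}]\le \Phi_{01}$ and $\mathbb{E}_{h_{CB,n}}[\Phi_{10}P_{\overline{2},II}]\le \Phi_{10}$.

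Collecting the four contributions gives $P_{II,avg} = \mathbb{E}_{h_{CB,n}}[P_{e,II}] \le (\Phi_{00}+\Phi_{11})\,\frac{2N_{o\alpha}}{4N_{o\alpha}+d_{\mathrm{II}}^2} + (\Phi_{01}+\Phi_{10}) = \mathcal{P}_{II}$, which is exactly the claimed bound. The derivation is essentially routine once Theorem~\ref{th:P2_3phase} is in hand; the one point deserving a remark is that the $Q$-function in that theorem is itself a high-SNR, nearest-neighbour approximation of the true pairwise-error sum, so $\mathcal{P}_{II}$ should be read as a high-SNR upper bound rather than an exact one. A secondary, minor point is to note that the simplification $\sigma_{CB}^2=1$ is what makes the exponential integral collapse cleanly; for a general $\sigma_{CB}^2$ one would obtain $\frac{2N_{o\alpha}}{4N_{o\alpha}+\sigma_{CB}^2 d_{\mathrm{II}}^2}$ in place of the first factor.
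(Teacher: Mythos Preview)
Your proof is correct and follows exactly the approach the paper indicates: apply the Chernoff bound $Q(x)\le\tfrac12 e^{-x^2/2}$ to $P_{1,II}=P_{2,II}$, average over the unit-mean exponential variable $|h_{CB,n}|^2$ (using $\sigma_{CB}^2=1$), and bound $P_{\overline{1},II},P_{\overline{2},II}$ trivially by $1$. The paper does not spell out these steps beyond the hints already embedded in the corollary's statement, so your derivation is precisely the intended one.
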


\subsection{Signal Model and Error Analysis of Phase-III}

During Phase-III, Bob observes the symbols on both the links, i.e., the Alice-to-Bob link of the MAC and the Charlie-to-Bob of the MAC to decode $x_{n}$, $y_{n}$, and $x_{n-\Theta}$, for $L-\Theta+1\leq n\leq L$. Unlike Phase-I and Phase-II, Alice and Charlie use the energy-splitting factor, $\beta\in(0,1)$ to transmit their respective symbols. In particular, Charlie scales his PSK symbols with $\sqrt{\beta}$, irrespective of $\hat{x}_{n-\Theta}=0$ or $\hat{x}_{n-\Theta}=1$. Further, Alice transmits her OOK symbols with $1-\beta$ fraction of her energy. With this modification, the $n^{th}$ baseband symbol received at Bob during Phase-III is given as
\bieee
r_{B,n} = h_{AB,n}\sqrt{1-\beta}x_{n} + h_{CB,n}\sqrt{\beta}s_{n} + {w}_{B,n},\hspace{0.5cm} L-\Theta+1\leq n\leq L,\label{eq:rb_p3}
\eieee
\noindent where $s_{n}$ is defined in Table~\ref{tab:three_phase} and $w_{B,n}\sim\mathcal{CN}(0,N_{o})$ is the AWGN at Bob. Owing to the non-coherent nature of Alice-to-Bob link of the MAC, the variance of the effective noise is $N_{o}+1-\beta$ and $N_{o}$, when $x_{n}=1$ and $x_{n}=0$, respectively. Fig.~\ref{fig:cons}~(c) depicts the received constellation symbols at Bob during Phase-III, when Charlie uses $4-$PSK signalling. If Alice's current symbol, $x_{n}=i$, Alice's multiplexed symbol, $x_{n-\Theta} = j$, and Charlie's current symbol, $y_{n} = e^{\iota\frac{\pi}{M}(2m+1)}$, then a transmit triplet is denoted by $(i,j,m)$. The variance of the effective noise corresponding to $i=0$ is $N_{o}$ (small disk and solid squares), whereas, the variance of the effective noise corresponding to $i=1$ is $N_{o}+1-\beta$ (blue and grey disks). We highlight that, Bob receives one out of the $16-$PSK symbols corresponding to $4-$PSK used by Charlie. In general, when Charlie uses $M-$PSK constellation, Bob is likely to receive one out of the $4M-$PSK symbols, out of which, $2M-$ PSK symbols are received with variance $N_{o}$ and the rest $2M-$ PSK symbols are received with variance $N_{o}+1-\beta$. The distribution of $r_{B,n}$ as a function of $x_{n}$, $x_{n-\Theta}$,  $y_{n}$, and $h_{CB,n}$ is $r_{B,n}\vert_{x_{n},x_{n-\Theta},y_{n},h_{CB,n}}\sim\mathcal{CN}(\sqrt{\beta}h_{CB,n}s_{n}, N_{o})$, if $x_{n}=0$ and $r_{B,n}\vert_{x_{n},x_{n-\Theta},y_{n},h_{CB,n}}\sim\mathcal{CN}(\sqrt{\beta}h_{CB,n}s_{n}, N_{o\beta})$, if $x_{n}=1$. Thus, using the distribution of $r_{B,n}$, the joint MAP decoder for Phase-III is
\bieee
\hat{i},\hat{j},\hat{m}= \arg\underset{\substack{i,j,m}}{\max\ }g_{III}\left(r_{B,n}\vert x_{n}=i,x_{n-\Theta}=j, y_{n}= e^{\iota\frac{\pi}{M}\left(2m+1\right)}, h_{CB,n}\right),\label{eq:MAP_p3}
\eieee
\noindent where $i\in\{0,1\}$, $j\in\{0,1\}$, and $m\in\{0,\ldots, M-1\}$. Further, $g_{III}(\cdot)$ is the conditional PDF of $r_{B,n}$, conditioned on $x_{n-\Theta}$, $x_{n}$, $y_{n}$, and $h_{CB,n}$. Along the similar lines of Phase-II, we note that $g_{III}(\cdot)$ in \eqref{eq:MAP_p3} is also a Gaussian mixture for various realizations of $x_{n-\Theta}$. Therefore, we approximate $g_{III}(\cdot)$ as $\tilde{g}_{III}(\cdot)$, where we retain the dominant terms from the Gaussian mixture, for $x_{n-\Theta} = 0$ and $x_{n-\Theta}=1$, respectively. The JDD for Phase-III of $3\phi$ DASC-MF is
\bieee
\hat{i},\hat{j},\hat{m}= \arg\underset{i,j,m}{\max\ }\tilde{g}_{III}\left(r_{B,n}\vert x_{n}=i,x_{n-\Theta}=j, y_{n}= e^{\iota\frac{\pi}{M}\left(2m+1\right)}, h_{CB,n}\right),\label{eq:JDD_p3}
\eieee
\noindent  where $i,j\in\{0,1\}$ and $m\in\{0,\ldots, M-1\}$. In the next theorem, we compute a union bound on the probability of error in jointly decoding $x_{n}$, $x_{n-\Theta}$, and $y_{n}$, when using the JDD presented in \eqref{eq:JDD_p3}.
\begin{table}[!htb]
    \caption{\label{tab:error_tab}ERROR TERMS FOR PHASE-III OF $3\phi$ DASC-MF SCHEME AS GIVEN IN THEOREM~\ref{th:P3_3phase}}
      \centering
      \scalebox{0.75}{
         \begin{tabular}{|c l||c l||c l|} 
  \hline
$P_{1,III}=$ & $e^{\frac{-\varrho}{N_{o}}}$ & $P_{2,III}=$ & $Q\left(\frac{|h_{CB,n}| d_{\mathrm{III}}}{\sqrt{2N_{o}}}\right)$ & $P_{3,III}=$ & $Q_{1}\left(\frac{|A|}{\sqrt{N_{o}/2}}, \frac{\sqrt{\xi_{1}}}{\sqrt{N_{o}/2}}\right)$
\\ \hline 
$P_{4,III}=$ & $ 1 - e^{\frac{-\varrho}{N_{o\beta}}}$ & $P_{5,III}=$ & $Q\left(\frac{|h_{CB,n}| d_{\mathrm{III}}}{\sqrt{2N_{o\beta}}}\right)$ & $P_{6,III}=$ & $1 - Q_{1}\left(\frac{|B|}{\sqrt{N_{o\beta}/2}}, \frac{\sqrt{\xi_{2}}}{\sqrt{N_{o\beta}/2}}\right)$
\\ \hline
$P_{9,III}=$ & $Q_{1}\left(\frac{|A|}{\sqrt{N_{o}/2}}, \frac{\sqrt{\xi_{2}}}{\sqrt{N_{o}/2}}\right)$ & $P_{12,III}=$ & $1 - Q_{1}\left(\frac{|B|}{\sqrt{N_{o\beta}/2}}, \frac{\sqrt{\xi_{1}}}{\sqrt{N_{o\beta}/2}}\right)$ & $P_{\overline{3},III} =$ & $Q_{1}\left(\frac{|B|}{\sqrt{N_{o}/2}}, \frac{\sqrt{\xi_{1}}}{\sqrt{N_{o}/2}}\right)$
\\ \hline
$P_{\overline{6},III} =$ & $1 - Q_{1}\left(\frac{|A|}{\sqrt{N_{o\beta}/2}}, \frac{\sqrt{\xi_{2}}}{\sqrt{N_{o\beta}/2}}\right)$ & $P_{\overline{9},III} =$ & $Q_{1}\left(\frac{|B|}{\sqrt{N_{o}/2}}, \frac{\sqrt{\xi_{2}}}{\sqrt{N_{o}/2}}\right)$ & $P_{\overline{12},III} =$ & $1 - Q_{1}\left(\frac{|A|}{\sqrt{N_{o\beta}/2}}, \frac{\sqrt{\xi_{1}}}{\sqrt{N_{o\beta}/2}}\right)$
\\ \hline
   \end{tabular}
   } 
   \vspace{-0.5cm}
\end{table}

\begin{theorem}
\label{th:P3_3phase}
At high SNR, the error probability for Phase-III, denoted by $P_{e,III}$, is upper bounded by
\begin{small}
\begin{multline}
\frac{1}{4}\Big[\!\Phi_{00}\!\left(\!P_{1,III}\!\!+\! \!2P_{2,III}\!\!+\!\! 2P_{3,III}\!\! + P_{4,III}\!\! + \!\! 2P_{5,III}\!\! +\!\! 2P_{6,III}\!\right)\! +\! \Phi_{01}\!\left(\!P_{\overline{1},III}\!\!  +\!\! 2P_{\overline{2},III}\!\!  +\!\! 2P_{\overline{3},III}\!\! +\!\! P_{\overline{4},III}\!\! +\!\!2P_{\overline{5},III}\!\! +\!\! 2P_{\overline{6},III} \!\right) + \\
\ \Phi_{11}\!\!\left(\!P_{7,III}\!\! +\!\! 2P_{8,III}\!\! +\!\! 2P_{9,III}\!\! +\!\! P_{10,III}\!\! +\!\! 2P_{11,III}\!\! +\!\! 2P_{12,III}\right)\!\! +\!\! \Phi_{10}\!\!\left(\!P_{\overline{7},III}\!\! +\!\! 2P_{\overline{8},III}\!\! +\!\! 2P_{\overline{9},III}  \!\!+\!\! P_{\overline{10},III}\!\! +\!\! 2P_{\overline{11},III}\!\! +\!\! 2P_{\overline{12},III} \!\right)\!\!\Big],\label{eq:phase3_error3}
\end{multline}
\end{small}
\noindent where the various error terms in \eqref{eq:phase3_error3} are tabulated in Table~\ref{tab:error_tab}, such that, $P_{4,III} = P_{10,III} = P_{\overline{4},III}   = P_{\overline{10},III} $; $P_{5,III}= P_{11,III} = 1-P_{\overline{5},III} = 1-P_{\overline{11},III} $; $P_{1,III} = P_{7,III}=P_{\overline{1},III} =P_{\overline{7},III} $; $P_{2,III} = P_{8,III} = 1-P_{\overline{2},III}  =  1-P_{\overline{8},III}$.
\end{theorem}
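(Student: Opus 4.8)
The plan is to reuse the union-bound-over-pairwise-errors template of Theorem~\ref{th:P2_3phase}, now for the larger Phase-III joint constellation indexed by the triplet $(i,j,m)$ encoding $x_{n}=i\in\{0,1\}$, $x_{n-\Theta}=j\in\{0,1\}$, and $y_{n}=e^{\iota\frac{\pi}{M}(2m+1)}$ with $m\in\{0,\dots,M-1\}$; all probabilities below are conditioned on $h_{CB,n}$. First I would make the JDD regions of \eqref{eq:JDD_p3} explicit: since $\tilde{g}_{III}(\cdot)$ keeps only the dominant Gaussian component of the mixture, the hypothesis for $(i,j,m)$ is a single circularly-symmetric Gaussian of mean $\sqrt{\beta}\,h_{CB,n}\,e^{\iota\frac{\pi}{M}(2m+1+j)}$ and variance $N_{o}$ if $i=0$, $N_{o\beta}=N_{o}+1-\beta$ if $i=1$ (using $\sigma_{AB}^{2}=1$ to fold Alice's non-coherent contribution into the noise). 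Hence Charlie's $M$-PSK set together with the $\frac{\pi}{M}$ rotation controlled by $j$ forms $2M$ equi-spaced points of modulus $\sqrt{\beta}$, each carrying a low-variance copy ($i=0$) and a high-variance copy ($i=1$). At this stage I would fix the scalars of Table~\ref{tab:error_tab}: $\varrho$ is the energy threshold separating $i=0$ from $i=1$ (arising when two competing hypotheses have equal mean and unequal variance), while $|A|,|B|$ and $\xi_{1},\xi_{2}$ are the centre-offset and squared radius of the \emph{circular} boundary that separates a low-variance hypothesis from a high-variance one of a different mean.

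Next, writing $P_{e,III}=\frac{1}{4M}\sum_{i,j}\sum_{m}\Pr\!\left((\hat{i},\hat{j},\hat{m})\neq(i,j,m)\right)$, I would bound each term by a sum of pairwise error probabilities and split every pairwise term by the law of total probability over Charlie's decoding outcome, exactly as in \eqref{eq:phase2_error3}: with probability $\Phi_{jj}$ Charlie forwards the symbol whose rotation matches the true $j$, and with probability $\Phi_{j\overline{j}}$ the $\frac{\pi}{M}$-rotated one, which offsets the transmitted mean by half the $2M$-point spacing relative to the JDD hypothesis. At high SNR I would keep only the five nearest-neighbour transitions of $(i,j,m)$, of three kinds: (i) the OOK flip $i\!\to\!\overline{i}$ with $(j,m)$ fixed, i.e. equal mean and unequal variance, a degenerate circular boundary $\{|r_{B,n}-\sqrt{\beta}h_{CB,n}e^{\iota\frac{\pi}{M}(2m+1+j)}|^{2}\gtrless\varrho\}$ with the squared deviation from the mean exponentially distributed, giving $P_{1,III}=e^{-\varrho/N_{o}}$ and $P_{4,III}=1-e^{-\varrho/N_{o\beta}}$ (multiplicity $1$); (ii) the two $x_{n-\Theta}$ flips $j\!\to\!\overline{j}$ onto the immediate $2M$-point neighbours, i.e. equal variance and a linear boundary, giving the Gaussian-tail terms $P_{2,III}=Q\!\left(|h_{CB,n}|d_{III}/\sqrt{2N_{o}}\right)$ and its $N_{o\beta}$ counterpart $P_{5,III}$, with $d_{III}$ the minimum distance of Charlie's Phase-III constellation realised by this $\frac{\pi}{M}$-rotated pair (multiplicity $2$); and (iii) the two joint flips of both $i$ and $j$, i.e. unequal variance \emph{and} unequal mean, hence a non-degenerate circular boundary whose crossing probability is a Marcum $Q$-function $Q_{1}\!\left(|A|/\sqrt{N_{o}/2},\sqrt{\xi_{1}}/\sqrt{N_{o}/2}\right)$, giving $P_{3,III},P_{6,III},P_{9,III},P_{12,III}$ (multiplicity $2$). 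The pure-PSK flips $m\!\to\!m\pm1$ lie at the larger distance $2\sqrt{\beta}\sin(\pi/M)\ge d_{III}$, are therefore second-nearest, and are negligible at high SNR. With the $1+2+2$ multiplicities the per-triplet bound is $P_{1,III}+2P_{2,III}+2P_{3,III}$ for $i=0$ and $P_{4,III}+2P_{5,III}+2P_{6,III}$ for $i=1$; collecting the $j=0$ contributions under $\Phi_{00},\Phi_{01}$ and the $j=1$ contributions under $\Phi_{11},\Phi_{10}$, and averaging over the $4M$ triplets (the $M$ PSK phases contributing identically, so the $\frac{1}{4M}$ normalisation leaves the prefactor $\frac{1}{4}$) reproduces \eqref{eq:phase3_error3}.

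Finally I would verify the stated equalities. The chains $P_{1,III}=P_{7,III}=P_{\overline{1},III}=P_{\overline{7},III}$ and $P_{4,III}=P_{10,III}=P_{\overline{4},III}=P_{\overline{10},III}$ hold because an OOK flip is an energy event: its probability depends only on the common modulus $\sqrt{\beta}|h_{CB,n}|$ and on which of $N_{o},N_{o\beta}$ is active, and is insensitive both to the rotation index $j$ and to Charlie's outcome, since the $\frac{\pi}{M}$ mismatch moves the mean but not the radius of the circular region. The chains $P_{2,III}=P_{8,III}=1-P_{\overline{2},III}=1-P_{\overline{8},III}$ and $P_{5,III}=P_{11,III}=1-P_{\overline{5},III}=1-P_{\overline{11},III}$ hold because a Charlie decoding error rotates the whole $2M$-point set by $\frac{\pi}{M}$, carrying a point's correct half-plane onto its neighbour's, so the corresponding transition's ``error'' and ``no-error'' probabilities interchange; the same rotation shifts the true mean so that $|A|$ becomes $|B|$ while leaving the radii $\xi_{1},\xi_{2}$ fixed, which is why $P_{\overline{3},III},P_{\overline{6},III},P_{\overline{9},III},P_{\overline{12},III}$ differ from $P_{3,III},P_{6,III},P_{9,III},P_{12,III}$ only in the first Marcum argument. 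The step I expect to be the main obstacle is item (iii): completing the square on the JDD log-likelihood ratio between a variance-$N_{o}$ hypothesis and a variance-$N_{o\beta}$ hypothesis of differing mean to obtain the circle $\{|r_{B,n}-A|^{2}\gtrless\xi_{1}\}$ in closed form, thereby pinning down $|A|,|B|,\xi_{1},\xi_{2}$ as explicit functions of $\beta$, $N_{o}$ and $h_{CB,n}$ and identifying the crossing probability with the tabulated $Q_{1}(\cdot,\cdot)$, together with the high-SNR bookkeeping certifying that every transition other than the five nearest neighbours is dominated, so that the union bound closes with exactly the terms in \eqref{eq:phase3_error3}.
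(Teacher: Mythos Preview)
Your proposal is correct and follows essentially the same approach as the paper, whose proof is simply ``This can be proved along the similar lines of Theorem~\ref{th:P2_3phase}.'' You have in fact fleshed out that one-line reference in considerably more detail than the paper does---correctly identifying the $4M$-point semi-coherent constellation, the five nearest neighbours (one same-mean/unequal-variance OOK flip, two equal-variance $\frac{\pi}{M}$-rotations, two unequal-mean/unequal-variance joint flips yielding the Marcum-$Q$ terms), the $1{+}2{+}2$ multiplicities, the law-of-total-probability split over Charlie's decoding outcome $\Phi_{jj},\Phi_{j\overline{j}}$, and the $\tfrac{1}{4}$ prefactor from averaging the $M$ identical PSK phases over the $4M$ triplets---so there is nothing to correct.
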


\begin{proof}
This can be proved along the similar lines of Theorem~\ref{th:P2_3phase}.
\end{proof}
In addition to the terms defined in Table~\ref{tab:error_tab}, $\varrho = \frac{N_{o}N_{o\beta}}{N_{o}-N_{o\beta}}\ln\left(\frac{N_{o}}{N_{o\beta}}\right)$ is the threshold for non-coherent energy detection at Bob between $(0,j,m)$ and $(1,j,m)$, $A = \frac{|h_{CB,n}| d_{\mathrm{III}}N_{o}}{N_{o}-N_{o\beta}}$, $B = \frac{|h_{CB,n}| d_{\mathrm{III}}N_{o\beta}}{N_{o}-N_{o\beta}}$, $\xi_{1} = \frac{N_{o}N_{o\beta}}{N_{o}-N_{o\beta}}\left[\ln\left(\frac{N_{o}\Phi_{11}}{N_{o\beta}\Phi_{00}}\right) + \frac{|h_{CB,n}|^{2}d_{\mathrm{III}}^{2}}{N_{o}-N_{o\beta}}\right]$, and $\xi_{2} = \frac{N_{o}N_{o\beta}}{N_{o}-N_{o\beta}}\left[\ln\left(\frac{N_{o}\Phi_{00}}{N_{o\beta}\Phi_{11}}\right) + \frac{|h_{CB,n}|^{2}d_{\mathrm{III}}^{2}}{N_{o}-N_{o\beta}}\right]$ are the parameters of the respective Marcum-Q functions $(Q_{1}(\cdot,\cdot))$, where $d_{\mathrm{III}} = 2\sqrt{\beta}\sin\frac{\pi}{2M}$ is the minimum Euclidean distance between the constellation symbols received during Phase-III.

For $M=4$, the various error terms in \eqref{eq:phase3_error3} are depicted in Fig.~\ref{fig:cons}~(c). We note that the upper bound on $P_{e,III}$ in \eqref{eq:phase3_error3} contains exponential functions, Q-functions, and Marcum-Q functions. Although it is straightforward to compute the average of Q-functions over various realizations of $h_{CB,n}$ in closed-form, averaging Marcum-Q functions over various realizations of $h_{CB,n}$ is non-tractable for certain cases. Therefore, in the next lemma, we directly provide an upper bound on the Marcum-Q function averaged over the realizations of $h_{CB,n}$ to simplify the analysis.
\begin{lemma}
\label{lm:P3_P1}
The term $\mathbb{E}_{h_{CB,n}}\left[P_{3,III}\right]$ is upper bounded by $ P_{1,III}$, for all $0<\beta<1$.
\end{lemma}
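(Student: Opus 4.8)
The plan is to bound $P_{3,III}$ pointwise in $h_{CB,n}$ by a quantity that does not depend on the channel and then integrate trivially. The key is the meaning of the Marcum-$Q$ term: as it arises in the proof of Theorem~\ref{th:P3_3phase}, $P_{3,III}$ is the (conditional on $h_{CB,n}$) probability of the pairwise event that a competing Phase-III hypothesis with $x_n=1$ — hence effective noise variance $N_{o\beta}$ — and a Charlie constellation point at Euclidean distance $|h_{CB,n}|d_{\mathrm{III}}$ from the true one wins the JDD metric against the true hypothesis (variance $N_o$); the Marcum-$Q$ form, with ``centre'' $|A|$ and ``radius'' $\sqrt{\xi_1}$, is simply the probability that a $\mathcal{CN}(\cdot,N_o)$ variate falls inside the circular decision boundary that separates two Gaussians of unequal variance. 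Let $\mu_0,\mu_1$ be the conditional means of $r_{B,n}$ under the true and competing hypotheses, $|\mu_0-\mu_1|=|h_{CB,n}|d_{\mathrm{III}}$. On that error event $|r_{B,n}-\mu_1|^2/N_{o\beta}\ge 0$, so the metric inequality forces $|r_{B,n}-\mu_0|^2>\varrho_3$ with $\varrho_3:=N_o\ln\!\big(\tfrac{N_{o\beta}\Phi_{00}}{N_o\Phi_{11}}\big)$, a constant. Since $r_{B,n}-\mu_0=w_{B,n}\sim\mathcal{CN}(0,N_o)$ under the true hypothesis, $P_{3,III}\le\Pr(|w_{B,n}|^2>\varrho_3)=e^{-\varrho_3/N_o}$ for every $h_{CB,n}$, hence $\mathbb{E}_{h_{CB,n}}[P_{3,III}]\le e^{-\varrho_3/N_o}$. (The same bound also follows mechanically from $Q_1(a,b)\le e^{-(b-a)^2/2}$ for $b\ge a$: because $|A|^2$ and $\xi_1$ are affine in $|h_{CB,n}|^2$, minimising $(\sqrt{\xi_1}-|A|)^2$ over $|h_{CB,n}|^2\ge 0$ returns exactly $\varrho_3$.)

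It then remains to show $e^{-\varrho_3/N_o}\le e^{-\varrho/N_o}=P_{1,III}$, i.e. $\varrho_3\ge\varrho$. Substituting $N_{o\beta}=N_o+1-\beta$ and collecting the logarithms, $\varrho_3-\varrho=N_o\big[\ln\tfrac{\Phi_{00}}{\Phi_{11}}-\tfrac{N_o}{1-\beta}\ln\!\big(1+\tfrac{1-\beta}{N_o}\big)\big]$, so the claim is equivalent to $\ln\tfrac{\Phi_{00}}{\Phi_{11}}\ge\tfrac{N_o}{1-\beta}\ln\!\big(1+\tfrac{1-\beta}{N_o}\big)$. The right-hand side equals $\tfrac{\ln(1+s)}{s}$ with $s=\tfrac{1-\beta}{N_o}>0$, is strictly below $1$, and vanishes as $\mathrm{SNR}\to\infty$; hence in the high-SNR regime of Theorem~\ref{th:P3_3phase} it suffices to establish $\Phi_{00}\ge\Phi_{11}$, equivalently $\Phi_{01}\le\Phi_{10}$. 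I would read this off Theorem~\ref{th:P01P10}: writing $Y=\mathbf{r}_{C,n}^{H}\mathbf{r}_{C,n}/\Omega_i\sim\mathrm{Gamma}(N_C,1)$ and $\rho=\Omega_1/\Omega_0>1$, the ML threshold $\tau$ gives $\Phi_{01}=\Pr\big(Y>N_C\tfrac{\rho\ln\rho}{\rho-1}\big)$ and $\Phi_{10}=\Pr\big(Y<N_C\tfrac{\ln\rho}{\rho-1}\big)$, and comparing these tails about the common mean $N_C$ reduces (exactly for large $N_C$ via the Gaussian tail approximation, with $\tfrac{\ln\rho}{\rho-1}<1<\tfrac{\rho\ln\rho}{\rho-1}$) to the elementary inequality $\ln\rho\ge\tfrac{2(\rho-1)}{\rho+1}$, valid for all $\rho\ge 1$.

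The main obstacle is that no purely combinatorial event-inclusion closes the gap at \emph{finite} SNR: the bound $e^{-\varrho_3/N_o}$ is essentially the best pointwise estimate, and one genuinely needs both the exact exponent $\varrho_3$ and the ordering $\Phi_{01}\le\Phi_{10}$, whose justification is entangled with the particular ML threshold $\tau$ and with $N_C$ and $\lambda$ through Theorem~\ref{th:P01P10}. The delicate step is therefore controlling the residual correction $\tfrac{\ln(1+s)}{s}$ — quantifying the margin by which $\Phi_{00}$ exceeds $\Phi_{11}$ and checking it dominates the correction over the relevant range of $(\beta,\mathrm{SNR},N_C,\lambda)$ — for which the high-SNR assumption inherited from Theorem~\ref{th:P3_3phase} is exactly what makes the argument go through.
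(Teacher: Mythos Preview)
Your route is genuinely different from the paper's, and the difference is exactly where your argument fails to close. The paper does not bound $P_{3,III}$ pointwise in $h_{CB,n}$; it evaluates the expectation in closed form (citing \cite{TCCN-VH}),
\[
\mathbb{E}_{h_{CB,n}}\!\left[P_{3,III}\right]=\Big(\tfrac{N_{o}\Phi_{11}}{N_{o\beta}\Phi_{00}}\Big)^{\frac{N_{o\beta}}{N_{o\beta}-N_{o}}}\cdot\frac{(N_{o\beta}-N_{o})^{2}}{(N_{o\beta}-N_{o})^{2}+d_{\mathrm{III}}^{2}N_{o\beta}},
\]
divides by $P_{1,III}=(N_{o}/N_{o\beta})^{N_{o\beta}/(N_{o\beta}-N_{o})}$, and observes that the ratio factors as $\big(\Phi_{11}/\Phi_{00}\big)^{N_{o\beta}/(N_{o\beta}-N_{o})}$ times a term at most $1$. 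With $\Phi_{00}\ge\Phi_{11}$ both factors are $\le 1$, so the lemma holds for every $0<\beta<1$ and every SNR, with no residual condition to check.

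Your pointwise step, by dropping the nonnegative $|r_{B,n}-\mu_1|^{2}/N_{o\beta}$ from the metric inequality, discards precisely the $|h_{CB,n}|^{2}d_{\mathrm{III}}^{2}$ dependence that produces the second factor above. What survives is $e^{-\varrho_3/N_o}$, and comparing that to $P_{1,III}$ forces the extra hypothesis $\ln(\Phi_{00}/\Phi_{11})\ge \frac{\ln(1+s)}{s}$ with $s=(1-\beta)/N_o$. This does \emph{not} follow from $\Phi_{00}\ge\Phi_{11}$ alone: the right side is strictly positive for every finite SNR, while $\ln(\Phi_{00}/\Phi_{11})$ can be arbitrarily small (and depends on $N_C,\lambda,\alpha$ in a way you have not controlled). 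Invoking ``high SNR inherited from Theorem~\ref{th:P3_3phase}'' does not rescue the lemma as stated, which asserts the bound for all $0<\beta<1$ without an SNR qualifier; and even asymptotically you would need to show $\ln(\Phi_{00}/\Phi_{11})$ does not vanish faster than $\tfrac{\ln(1+s)}{s}$, which you have not done. The fix is to keep the channel dependence---i.e., integrate the Marcum-$Q$ against the exponential law of $|h_{CB,n}|^{2}$ rather than bounding it away---after which the inequality reduces cleanly to $\Phi_{00}\ge\Phi_{11}$ as in the paper.
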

\begin{proof}
Along the similar lines of \cite{TCCN-VH}, we compute $\mathbb{E}_{h_{CB,n}}\left[P_{3,III}\right] = \left(\frac{N_{o}\Phi_{11}}{N_{o\beta}\Phi_{00}}\right)^{\frac{N_{o\beta}}{N_{o\beta}-N_{o}}} \frac{\left(N_{o\beta}-N_{o}\right)^{2}}{\left(N_{o\beta}-N_{o}\right)^{2} + d_{\mathrm{III}}^{2}N_{o\beta}}$. From the expression of $\Phi_{00}$ and $\Phi_{11}$, we have $\Phi_{00}\geq \Phi_{11}$, thus, $\frac{\mathbb{E}_{h_{CB,n}}\left[P_{3,III}\right]}{P_{1,III}} \leq 1$.
\end{proof}
 In addition to Lemma~\ref{lm:P3_P1}, for moderate and high SNRs, we observe that, $\mathbb{E}_{h_{CB,n}}\left[P_{3,III}\right]\ll P_{1,III}$. Therefore, we use the upper bound $\mathbb{E}_{h_{CB,n}}[2P_{3,III}]\leq P_{1,III}$. Along the similar lines of Lemma~\ref{lm:P3_P1}, the following inequalities also hold good: $\mathbb{E}_{h_{CB,n}}[2P_{6,III}]\!\!\leq\!\! P_{4,III}$, $\mathbb{E}_{h_{CB,n}}[2P_{9,III}]\!\!\leq\!\! P_{7,III}$, and $\mathbb{E}_{h_{CB,n}}[2P_{12,III}]\leq P_{10,III}$. Furthermore, we also upper bound all the error terms that are the coefficients of $\Phi_{01}$ and $\Phi_{10}$ in \eqref{eq:phase3_error3} by $1$, thereby circumventing the non-tractable issue of Marcum-Q functions. 

\begin{corollary}
\label{col:upper_Phase3}
If $P_{III,avg} = \mathbb{E}_{h_{CB,n}}[P_{e,III}]$ denotes the average probability of error over all the realizations of $h_{CB,n}$, then an upper bound on $P_{III,avg}$ is given as $\mathcal{P}_{III}\!\!\triangleq\!\!\frac{1}{2}\!\left(\!\Phi_{00}\!\!\left(P_{1,III}\! +\! P_{2,III}^{\star}\! +\! P_{4,III}\! +\! P_{5,III}^{\star}\!\right)\!\right.+$ $\left. 5\left(\Phi_{01}+\Phi_{10}\right) + \Phi_{11}\left(P_{7,III}+P_{8,III}^{\star} + P_{10,III} + P_{11,III}^{\star}\right)\right)$, where $P_{2,III}^{\star} = P_{8,III}^{\star} = \frac{2N_{o}}{4N_{o} + d_{\mathrm{III}}^{2}}$, $P_{5,III}^{\star} = P_{11,III}^{\star}= \frac{2N_{o\beta}}{4N_{o\beta} + d_{\mathrm{III}}^{2}}$.
\end{corollary}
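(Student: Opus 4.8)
The plan is to begin from the union bound on $P_{e,III}$ proved in Theorem~\ref{th:P3_3phase}, whose right-hand side is the expression \eqref{eq:phase3_error3}, apply $\mathbb{E}_{h_{CB,n}}[\cdot]$ to both sides, and then bound $\mathbb{E}_{h_{CB,n}}$ of each of the twenty-four error terms inside the bracket individually. Since the expectation is linear and order-preserving, a term-by-term bound suffices, and the four blocks multiplying $\Phi_{00}$, $\Phi_{01}$, $\Phi_{11}$, and $\Phi_{10}$ can be handled separately.

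First I would dispatch the deterministic pieces. From Table~\ref{tab:error_tab}, the quantities $P_{1,III}=e^{-\varrho/N_o}$ and $P_{4,III}=1-e^{-\varrho/N_{o\beta}}$ do not depend on $h_{CB,n}$ (neither do $\varrho$, $N_{o\beta}$, nor $d_{\mathrm{III}}$), and by the equalities recorded in Theorem~\ref{th:P3_3phase} the same holds for $P_{7,III}=P_{1,III}$ and $P_{10,III}=P_{4,III}$; hence these survive the expectation unchanged. For the Marcum-$Q$ terms I would invoke Lemma~\ref{lm:P3_P1} together with its three analogues stated just below it, namely $\mathbb{E}_{h_{CB,n}}[2P_{3,III}]\le P_{1,III}$, $\mathbb{E}_{h_{CB,n}}[2P_{6,III}]\le P_{4,III}$, $\mathbb{E}_{h_{CB,n}}[2P_{9,III}]\le P_{7,III}$, and $\mathbb{E}_{h_{CB,n}}[2P_{12,III}]\le P_{10,III}$. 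Folding each Marcum-$Q$ contribution into its companion deterministic term then collapses the $\Phi_{00}$-block of \eqref{eq:phase3_error3} to $2\bigl(P_{1,III}+\mathbb{E}_{h_{CB,n}}[P_{2,III}]+P_{4,III}+\mathbb{E}_{h_{CB,n}}[P_{5,III}]\bigr)$ and the $\Phi_{11}$-block to $2\bigl(P_{7,III}+\mathbb{E}_{h_{CB,n}}[P_{8,III}]+P_{10,III}+\mathbb{E}_{h_{CB,n}}[P_{11,III}]\bigr)$.

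Next I would handle the $Q$-function terms $P_{2,III},P_{5,III},P_{8,III},P_{11,III}$ by the standard Chernoff bound $Q(x)\le\tfrac12 e^{-x^2/2}$ followed by a Rayleigh average: this gives $P_{2,III}\le\tfrac12 e^{-|h_{CB,n}|^2 d_{\mathrm{III}}^2/(4N_o)}$, and since $\sigma_{CB}^2=1$ so that $|h_{CB,n}|^2$ is a unit-mean exponential, $\mathbb{E}_{h_{CB,n}}\!\bigl[e^{-|h_{CB,n}|^2 d_{\mathrm{III}}^2/(4N_o)}\bigr]=\tfrac{4N_o}{4N_o+d_{\mathrm{III}}^2}$, whence $\mathbb{E}_{h_{CB,n}}[P_{2,III}]\le\tfrac{2N_o}{4N_o+d_{\mathrm{III}}^2}=P_{2,III}^{\star}$; the identical computation with $N_{o\beta}$ in place of $N_o$ yields $\mathbb{E}_{h_{CB,n}}[P_{5,III}]\le P_{5,III}^{\star}$, and $P_{8,III}^{\star}=P_{2,III}^{\star}$, $P_{11,III}^{\star}=P_{5,III}^{\star}$ follow because $P_{8,III}=P_{2,III}$ and $P_{11,III}=P_{5,III}$ as random variables (cf.\ the equalities in Theorem~\ref{th:P3_3phase}). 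For the two remaining blocks, whose entries are Marcum-$Q$ functions whose fading averages are intractable, I would crudely bound every probability by $1$; since each bracket carries multiplicities $1+2+2+1+2+2=10$, these blocks contribute at most $\tfrac{10}{4}\bigl(\Phi_{01}+\Phi_{10}\bigr)$. Reassembling the four blocks under the prefactor $\tfrac14$ and extracting the common $\tfrac12$ reproduces precisely $\mathcal{P}_{III}$ as claimed.

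The derivation is largely bookkeeping once Theorem~\ref{th:P3_3phase} and Lemma~\ref{lm:P3_P1} (with its analogues) are in hand; the only genuinely analytic step is the Chernoff-plus-Rayleigh average, which is routine. The delicate part is the accounting: one must track all twenty-four terms and their multiplicities so that the factor-of-two savings from the Marcum-$Q$ bounds land on the correct deterministic partners and the $\Phi_{01},\Phi_{10}$ coefficients collapse to exactly $5$ after the overall $\tfrac12$ is pulled out. A secondary caveat is that Lemma~\ref{lm:P3_P1} by itself only delivers $\mathbb{E}_{h_{CB,n}}[P_{3,III}]\le P_{1,III}$, so the factor-of-two strengthening $\mathbb{E}_{h_{CB,n}}[2P_{3,III}]\le P_{1,III}$ (and its three analogues) rests on the moderate/high-SNR observation $\mathbb{E}_{h_{CB,n}}[P_{3,III}]\ll P_{1,III}$ noted immediately after the lemma.
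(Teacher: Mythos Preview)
Your proposal is correct and follows essentially the same route as the paper: start from the union bound of Theorem~\ref{th:P3_3phase}, fold the Marcum-$Q$ terms into their deterministic partners via Lemma~\ref{lm:P3_P1} and its three analogues, apply the Chernoff bound plus Rayleigh averaging to the $Q$-function terms, and blanket-bound every term in the $\Phi_{01}$ and $\Phi_{10}$ blocks by $1$; the bookkeeping you describe (including the $10/4=5/2$ count and the caveat that the factor-of-two strengthening rests on the high-SNR observation after Lemma~\ref{lm:P3_P1}) matches the paper's exactly. One cosmetic point: not every entry in the $\Phi_{01}$ and $\Phi_{10}$ blocks is a Marcum-$Q$ function (e.g., $P_{\overline{1},III}=P_{1,III}$ and $P_{\overline{2},III}=1-P_{2,III}$), but since you bound all of them by $1$ this has no bearing on the argument.
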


\subsection{Optimization of $N_{C}$ and $\beta$ for $3\phi$ DASC-MF Relaying Scheme}
\label{ssec:opt_beta_Nc}

Substituting $\mathcal{P}_{I}$, $\mathcal{P}_{III}$, and $\mathcal{P}_{III}$ from Theorem~\ref{th:upper_phase1}, Corollary~\ref{col:upper_Phase2}, and Corollary~\ref{col:upper_Phase3}, respectively in \eqref{eq:Pe_main}, we obtain an upper bound on the average probability of error of $3\phi$ DASC-MF, denoted by $\mathcal{P}_{e,3\phi}$ as
\bieee
P_{e,3\phi}\leq\mathcal{P}_{e,3\phi} \triangleq \frac{\Theta}{L}\mathcal{P}_{I} + \frac{(L-2\Theta)}{L}\mathcal{P}_{II} + \frac{\Theta}{L}\mathcal{P}_{III}.\label{eq:Pe_main2}
\eieee
\noindent Therefore, instead of solving \eqref{opt}, we solve an alternate optimization problem of minimising $\mathcal{P}_{e,3\phi}$ over the variables on interest, $N_{C}$ and $\beta$. Thus, the modified optimization problem is given as
\begin{mdframed}
\bieee
N_{C}^{\dagger},\beta^{\dagger} = \arg\underset{N_{C}, \beta}{\min} \quad &  & \mathcal{P}_{e,3\phi};\ \ \text{s.t.: } N_{C}>1, 0<\beta < 1.\label{opt_alt}
\eieee 
\end{mdframed}

Unlike Phase-I and Phase-II, decoding in Phase-III is a combination of coherent and non-coherent detection. As a result, for a given  $N_{C}$, $\beta$, and SNR, the error-rates for Phase-III dominates Phase-I and Phase-II, and thus, dominates $\mathcal{P}_{e,3\phi}$. Although, we can achieve improved error-rates during Phase-III by increasing $N_{C}$, we cannot indefinitely increase $N_{C}$, because $\Theta$ is an increasing function of $N_{C}$ and for large values of $N_{C}$, the fraction of symbols decoded during Phase-III increases, which in turn increases $\mathcal{P}_{e,3\phi}$. Therefore, we must use an appropriate $N_{C}$ that solves \eqref{opt_alt}. Towards solving \eqref{opt_alt}, we observe that proving unimodality of $\mathcal{P}_{e,3\phi}$ as a function of $N_{C}$ and $\beta$ is challenging due to the presence of the upper and the lower Gamma functions in the expressions of $\mathcal{P}_{II}$ and $\mathcal{P}_{III}$. Therefore, in this section, we first fix $N_{C}$ to analyse $\mathcal{P}_{e,3\phi}$ as a function of $\beta$ and then propose a low-complexity algorithm to obtain the near-optimal values of $N_{C}$ and $\beta$ that minimizes $\mathcal{P}_{e,3\phi}$.

Towards minimising $\mathcal{P}_{e,3\phi}$, we observe that when we fix $N_{C}$ and vary $\beta$, $\mathcal{P}_{e,3\phi}$ has a unique dip, for $\beta\in(0,1)$. This is due to the fact that, when we fix $\alpha = 1-\Delta N_{o}$ and $N_{C}$, $\mathcal{P}_{I}$ and $\mathcal{P}_{II}$ are independent of $\beta$, but $\mathcal{P}_{III}$ has a unique dip for $\beta\in(0,1)$. Further, we also observe that, the unique dip of $\mathcal{P}_{III}$ is close to the intersection of the increasing and decreasing terms of $\mathcal{P}_{III}$. The above observation is exemplified in Fig.~\ref{fig:inter} for $\Theta = \lceil 10\log_{10}(N_{C})\rceil$ at SNR = $25$ dB and $\Delta = 0.1$. Therefore, in the next lemma, we fix $N_{C}$ and identify the increasing and decreasing terms in $\mathcal{P}_{III}$ as a function of $\beta$. Subsequently, in Theorem~\ref{th:p3_dec}, we show that the increasing and decreasing terms in $\mathcal{P}_{III}$ intersect only once for $\beta\in(0,1)$.\looseness = -1
\begin{lemma}
\label{lm:inc_dec_P3}
For a fixed $\alpha$ and $N_{C}$, if $P_{+}$ and $P_{-}$ denote the increasing and decreasing terms in $\mathcal{P}_{III}$, respectively, w.r.t. $\beta$, then, $P_{+}\! =\! \Phi_{00}\!\left(P_{1,III}\! +\! P_{4,III}\right)\! +\! \Phi_{11}\!\left(P_{7,III}\! +\! P_{10,III}\right)$ and $P_{-}\! =\! \Phi_{00}\left(P_{2,III}^{\star}\! +\! P_{5,III}^{\star}\right) \!+\! \Phi_{11}\left(P_{8,III}^{\star}\! +\! P_{11,III}^{\star}\right)$.
\end{lemma}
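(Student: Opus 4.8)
The plan is to exhibit $\mathcal{P}_{III}$ (from Corollary~\ref{col:upper_Phase3}) as a $\beta$-independent constant plus a strictly increasing part and a strictly decreasing part, and to verify that these two parts are exactly the claimed $P_{+}$ and $P_{-}$. Since the decoding-error probabilities $\Phi_{00},\Phi_{01},\Phi_{10},\Phi_{11}$ at Charlie depend only on $\alpha$, $N_{C}$ and SNR (Theorem~\ref{th:P01P10}), with $\alpha$ and $N_{C}$ fixed they are constants in $\beta$; hence the term $\tfrac{5}{2}(\Phi_{01}+\Phi_{10})$ in $\mathcal{P}_{III}$ is constant in $\beta$, and using $P_{7,III}=P_{1,III}$, $P_{10,III}=P_{4,III}$, $P_{8,III}^{\star}=P_{2,III}^{\star}$, $P_{11,III}^{\star}=P_{5,III}^{\star}$ we may write $\mathcal{P}_{III}=\tfrac12\big(P_{+}+P_{-}+5(\Phi_{01}+\Phi_{10})\big)$. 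It therefore suffices to prove that $P_{1,III}$ and $P_{4,III}$ are strictly increasing in $\beta$ on $(0,1)$ and that $P_{2,III}^{\star}$ and $P_{5,III}^{\star}$ are strictly decreasing in $\beta$ on $(0,1)$; multiplying by the positive weights $\Phi_{00}$, $\Phi_{11}$ then transfers these monotonicities to $P_{+}$ and $P_{-}$.

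The decreasing terms are elementary. With $d_{\mathrm{III}}^{2}=4\beta\sin^{2}\!\tfrac{\pi}{2M}$, the denominator of $P_{2,III}^{\star}=\frac{2N_{o}}{4N_{o}+d_{\mathrm{III}}^{2}}$ is strictly increasing in $\beta$, so $P_{2,III}^{\star}$ decreases. Rewriting $P_{5,III}^{\star}=\frac{2N_{o\beta}}{4N_{o\beta}+d_{\mathrm{III}}^{2}}=\big(2+\tfrac{d_{\mathrm{III}}^{2}}{2N_{o\beta}}\big)^{-1}$ and observing that $\tfrac{d_{\mathrm{III}}^{2}}{2N_{o\beta}}=\frac{2\beta\sin^{2}(\pi/2M)}{N_{o}+1-\beta}$ has numerator increasing and denominator decreasing in $\beta$, that ratio increases and hence $P_{5,III}^{\star}$ decreases.

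For the increasing terms the key move is the reparametrization $t\triangleq N_{o\beta}/N_{o}=(N_{o}+1-\beta)/N_{o}$, which is strictly decreasing in $\beta$ and satisfies $t>1$ for $\beta\in(0,1)$. Using $N_{o}-N_{o\beta}=\beta-1$ one gets $\varrho=\frac{N_{o}N_{o\beta}}{N_{o}-N_{o\beta}}\ln\frac{N_{o}}{N_{o\beta}}=N_{o}\,\frac{t\ln t}{t-1}$, so $\varrho/N_{o}=\frac{t\ln t}{t-1}$ and $\varrho/N_{o\beta}=\frac{\ln t}{t-1}$. A one-line derivative computation gives $\frac{d}{dt}\frac{t\ln t}{t-1}=\frac{(t-1)-\ln t}{(t-1)^{2}}>0$ (from $\ln t<t-1$ for $t>1$) and $\frac{d}{dt}\frac{\ln t}{t-1}=\frac{1-1/t-\ln t}{(t-1)^{2}}<0$ (from $\ln t>1-1/t$ for $t>1$, which holds because $\ln t-1+1/t$ vanishes at $t=1$ and has derivative $(t-1)/t^{2}>0$). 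Since $t$ is decreasing in $\beta$: $\varrho/N_{o}$ decreases in $\beta$, so $P_{1,III}=e^{-\varrho/N_{o}}$ increases; and $\varrho/N_{o\beta}$ increases in $\beta$, so $P_{4,III}=1-e^{-\varrho/N_{o\beta}}$ increases. Combining with the previous paragraph and the $\beta$-constant positive weights $\Phi_{00},\Phi_{11}$ identifies $P_{+}$ and $P_{-}$ as claimed.

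The only step that is not bookkeeping is recognizing the substitution $t=N_{o\beta}/N_{o}$, after which $\varrho/N_{o}$ and $\varrho/N_{o\beta}$ collapse to $\tfrac{t\ln t}{t-1}$ and $\tfrac{\ln t}{t-1}$ and the whole lemma reduces to the textbook inequalities $1-1/t<\ln t<t-1$; I expect that to be the main (mild) obstacle, since differentiating $\varrho$ directly in $\beta$ is doable but algebraically messy.
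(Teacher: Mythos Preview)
Your proof is correct and follows the same overall strategy as the paper: decompose $\mathcal{P}_{III}$ into a $\beta$-constant piece, an increasing piece $P_{+}$, and a decreasing piece $P_{-}$, then verify the monotonicity of each summand. The paper handles the decreasing terms exactly as you do (direct differentiation of $P_{2,III}^{\star}$, then ``similar lines'' for $P_{5,III}^{\star}$), and for the increasing terms it simply invokes the analogy with Remark~\ref{rem:P01P10alpha} rather than computing explicitly; your substitution $t=N_{o\beta}/N_{o}$, which reduces the monotonicity of $P_{1,III}$ and $P_{4,III}$ to the classical inequalities $1-1/t<\ln t<t-1$, is a cleaner and fully self-contained execution of the step the paper leaves implicit.
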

\begin{proof}
Along the similar lines of Remark~\ref{rem:P01P10alpha}, we can prove that the terms $P_{1,III}$, $P_{4,III}$, $P_{7,III}$, and $P_{10,III}$ are increasing functions of $\beta$. Further, since $\Phi_{00}$ and $\Phi_{11}$ are independent of $\beta$, $P_{+}\! =\! \Phi_{00}\!\left(P_{1,III}\! +\! P_{4,III}\right)\! +\! \Phi_{11}\!\left(P_{7,III}\! +\! P_{10,III}\right)$ is an increasing function of $\beta$.

Further, the expressions of $P_{2,III}^{\star}$, $P_{5,III}^{\star}$, $P_{8,III}^{\star}$, and $P_{11,III}^{\star}$ are such that, $P_{2,III}^{\star} = P_{8,III}^{\star} = \frac{2N_{o}}{4N_{o} + d_{\mathrm{III}}^{2}}$ and $P_{5,III}^{\star} = P_{11,III}^{\star}= \frac{2N_{o\beta}}{4N_{o\beta} + d_{\mathrm{III}}^{2}}$. Differentiating $P_{2,III}^{\star}$ w.r.t. $\beta$ we get, $-\frac{2N_{o}}{(4N_{o} + d_{\mathrm{III}}^{2})^{2}}\sin^{2}\frac{\pi}{2M}$. Therefore, $P_{2,III}^{\star}$ and $P_{8,III}^{\star}$ are decreasing functions of $\beta$. Along similar lines, we can prove that $P_{5,III}^{\star}$ and $P_{11,III}^{\star}$ are decreasing functions of $\beta$. Thus, we have $P_{-} = \Phi_{00}\left(P_{2,III}^{\star} + P_{5,III}^{\star}\right) + \Phi_{11}\left(P_{8,III}^{\star} + P_{11,III}^{\star}\right)$.
\end{proof}

\begin{theorem}
\label{th:p3_dec}
 For a fixed $\alpha$ and $N_{C}$, $P_{+}$ and $P_{-}$ intersect only once for $\beta\in(0,1)$.
\end{theorem}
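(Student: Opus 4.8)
The plan is to reduce the two-variable-looking claim to a one-variable monotonicity-plus-sign-change argument. First I would use the symmetries recorded after Theorem~\ref{th:P3_3phase}, namely $P_{1,III}=P_{7,III}$, $P_{4,III}=P_{10,III}$, $P_{2,III}^{\star}=P_{8,III}^{\star}$, and $P_{5,III}^{\star}=P_{11,III}^{\star}$, to pull the common, $\beta$-independent, strictly positive factor $\Phi_{00}+\Phi_{11}=(1-\Phi_{01})+(1-\Phi_{10})$ out of both $P_{+}$ and $P_{-}$. Then $P_{+}(\beta)=P_{-}(\beta)$ iff $h(\beta)=k(\beta)$, where, using Table~\ref{tab:error_tab} and $N_{o\beta}=N_{o}+1-\beta$, $d_{\mathrm{III}}^{2}=4\beta\sin^{2}\tfrac{\pi}{2M}$,
\[
h(\beta)\triangleq P_{1,III}+P_{4,III}=1+e^{-\varrho/N_{o}}-e^{-\varrho/N_{o\beta}},\qquad
k(\beta)\triangleq P_{2,III}^{\star}+P_{5,III}^{\star}=\frac{2N_{o}}{4N_{o}+d_{\mathrm{III}}^{2}}+\frac{2N_{o\beta}}{4N_{o\beta}+d_{\mathrm{III}}^{2}} .
\]
So it suffices to prove that $h-k$ has exactly one zero on $(0,1)$ (from which ``intersect only once'' follows a fortiori).

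Next I would invoke Lemma~\ref{lm:inc_dec_P3}: $h$ collects precisely the increasing terms of $\mathcal{P}_{III}$ and $k$ the decreasing terms, so $h$ is increasing and $k$ is strictly decreasing on $(0,1)$, whence $h-k$ is strictly increasing there. It is also continuous on $(0,1)$, since $N_{o}-N_{o\beta}=\beta-1$ never vanishes for $\beta\in(0,1)$, so $\varrho$ is well-defined and smooth. Strict monotonicity immediately gives that $h-k$ has \emph{at most} one zero; existence then needs only a sign change across the interval.

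The sign change I would obtain from the two endpoint limits. As $\beta\to 0^{+}$: $N_{o\beta}\to N_{o}+1$ and $d_{\mathrm{III}}^{2}\to 0$, so $k(\beta)\to\tfrac12+\tfrac12=1$, while $\varrho\to\varrho_{0}\triangleq N_{o}(N_{o}+1)\ln\tfrac{N_{o}+1}{N_{o}}>0$ (finite), and since $\varrho_{0}/N_{o}>\varrho_{0}/(N_{o}+1)$ we get $h(\beta)\to 1+e^{-\varrho_{0}/N_{o}}-e^{-\varrho_{0}/(N_{o}+1)}<1$; hence $(h-k)(0^{+})<0$. As $\beta\to 1^{-}$: $N_{o\beta}\to N_{o}$, and a Taylor/L'Hôpital expansion of $\ln(N_{o}/N_{o\beta})$ about $N_{o\beta}=N_{o}$ yields $\varrho\to N_{o}$, so $h(\beta)\to 1+e^{-1}-e^{-1}=1$, whereas $d_{\mathrm{III}}^{2}\to 4\sin^{2}\tfrac{\pi}{2M}$ gives $k(\beta)\to\tfrac{N_{o}}{N_{o}+\sin^{2}(\pi/2M)}<1$; hence $(h-k)(1^{-})>0$. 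By continuity and the intermediate value theorem, $h-k$ vanishes at some point of $(0,1)$, and by strict monotonicity that point is unique, which is the claim.

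The \emph{main obstacle} I anticipate is the limit $\varrho\to N_{o}$ as $\beta\to 1^{-}$, which is a $0/0$ indeterminate form and has to be dispatched by an L'Hôpital or Taylor argument; the factoring of $\Phi_{00}+\Phi_{11}$, the monotonicity (handed over to Lemma~\ref{lm:inc_dec_P3}), and the endpoint evaluations of $k$ are all routine.
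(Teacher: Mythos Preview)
Your proposal is correct and is precisely the argument the paper is gesturing at: the paper gives no details beyond ``along the similar lines of \cite[Theorem~3]{TCCN-VH}'', but its setup via Lemma~\ref{lm:inc_dec_P3} and Fig.~\ref{fig:inter} makes clear that the intended route is exactly yours---factor out the $\beta$-independent weight $\Phi_{00}+\Phi_{11}$ using the symmetries $P_{1,III}=P_{7,III}$, $P_{4,III}=P_{10,III}$, $P_{2,III}^{\star}=P_{8,III}^{\star}$, $P_{5,III}^{\star}=P_{11,III}^{\star}$, then combine the monotonicity of $h$ and $k$ with the endpoint sign change. Your identification of the $\varrho\to N_{o}$ limit as the only non-routine step is accurate, and the L'H\^{o}pital/Taylor computation you sketch is the right way to dispatch it.
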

\begin{proof}
This can be proved along the similar lines of \cite[Theorem~3]{TCCN-VH}.
\end{proof}
\begin{rem}
\label{rem:intersection}
The unique intersection of $P_{+}$ and $P_{-}$ can be computed using the Newton-Raphson (NR) algorithm.
\end{rem}
Using the insights of Lemma~\ref{lm:inc_dec_P3} and Theorem~\ref{th:p3_dec}, we now present a low-complexity algorithm in Algorithm~\ref{Algo}, referred to as the $N_{C}$-$\beta$ Optimization algorithm, which provides a local minima of $\mathcal{P}_{e,3\phi}$ over the variables, $N_{C}$ and $\beta$.
\begin{figure*}
\vspace{-0.5cm}
\begin{minipage}[t]{.5\textwidth}
\captionsetup{width=0.8\textwidth}
\vspace{-4cm}
\centering
\includegraphics[scale = 0.5]{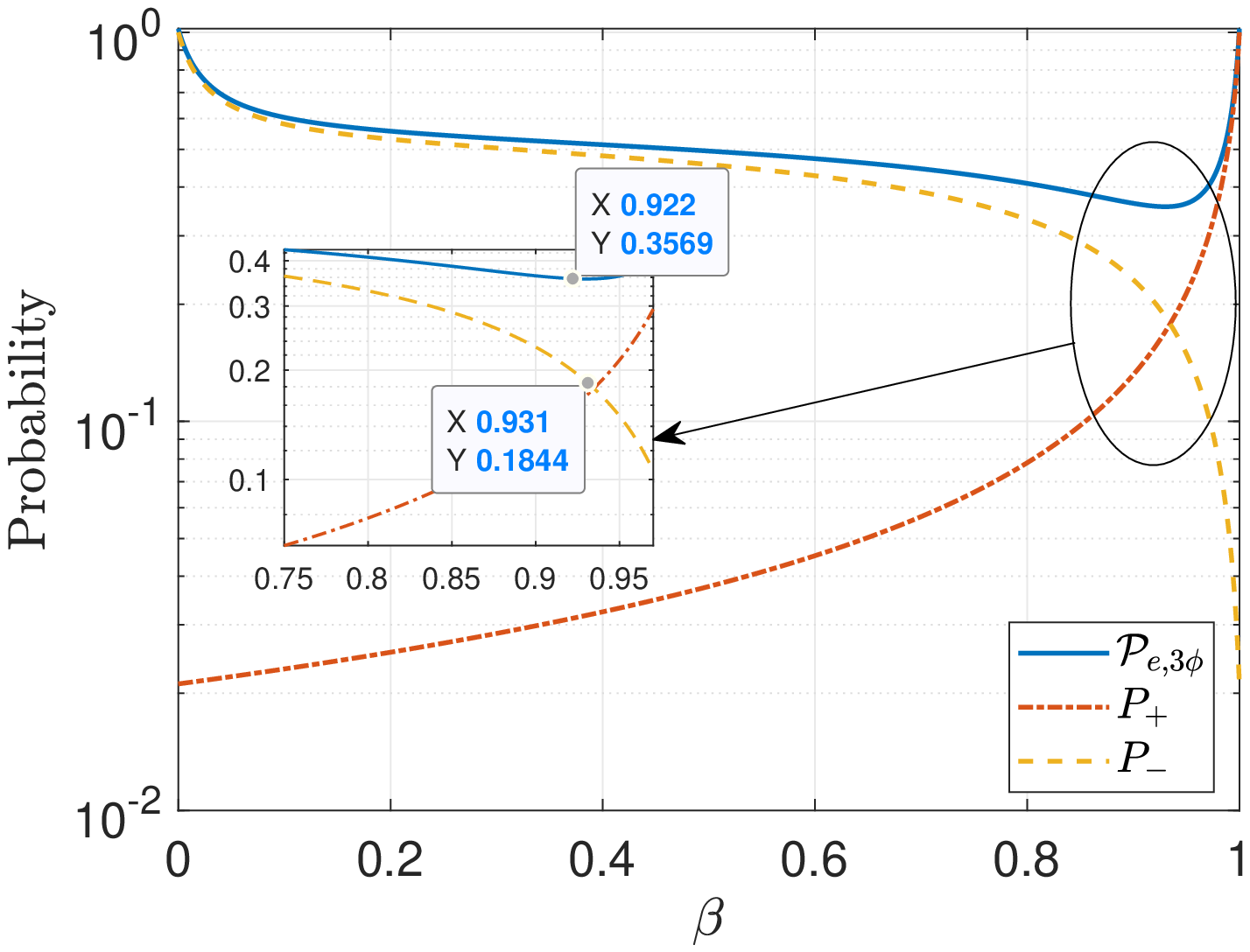}
\caption{\label{fig:inter}Variation of $\mathcal{P}_{e,3\phi}$ and its increasing and decreasing terms as a function of $\beta$ at $25$ dB.}
\end{minipage}%
\hfill
\begin{minipage}[t]{.5\textwidth}
\scalebox{0.8}{
\begin{algorithm}[H]
\KwInput{$\mathcal{P}_{e,3\phi}$, $P_{+}$, $P_{-}$, $\bigtriangleup$, $SNR$, $\delta_{tol}$}
\KwOutput{$N_{C}^{\dagger}$, $\beta^{\dagger}$}
$\beta\gets\beta_{o}$\;
$N_{C}\gets 1$\;
$P_{o} = 0.5$\;
\While{true}{
Use NR algorithm to compute intersection of $P_{+}$ and $P_{-}$, denoted by $\beta_{int}$\;
Substitute $\beta_{int}$ in $\mathcal{P}_{e,3\phi}$ and compute $P_{eval}$\;
  \eIf{$|P_{o}-P_{eval}|>\delta_{tol}$}{
      $N_{C}\gets N_{C}+1$\;
      $\beta_{o} \gets \beta_{int}$\;
      $P_{o}\gets P_{eval}$ 
  }{
  $N_{C}^{\dagger} = N_{C}$\;
  $\beta^{\dagger} = \beta_{int}$
  }
}
\caption{\label{Algo} $N_{C}$-$\beta$ Optimization}
\end{algorithm}
}
\end{minipage}
\vspace{-0.5cm}
\end{figure*}
We start the algorithm with $N_{C}=1$, an initial estimate of $\mathcal{P}_{e,3\phi}$, denoted by $P_{o}$, and an initial estimate of the intersection of $P_{+}$ and $P_{-}$, denoted by $\beta_{o}$. We use the NR algorithm to compute the intersection of $P_{+}$ and $P_{-}$ as $\beta_{int}$ and evaluate $P_{eval}$ by substituting $N_{C}$ and $\beta_{int}$ in $\mathcal{P}_{e,3\phi}$. We iteratively compute $\beta_{int}$ by incrementing $N_{C}$ in steps of $1$ until the absolute value of the difference between $P_{eval}$ and $P_{o}$ is less than the tolerance value, $\delta_{tol}$. When $\vert P_{o}-P_{eval}\vert<\delta_{tol}$, we exit the while-loop with the near-optimal values of $N_{C}$ and $\beta$, denoted by $N_{C}^{\dagger}$ and $\beta^{\dagger}$, respectively.  

\section{Semi-Coherent Multiple  Access Channel Scheme}
\label{sec:sc-mac}

A major limitation of the $3\phi$ DASC-MF scheme proposed in the previous section is that when $\Theta>\frac{L}{2}$, only a fraction of Alice's symbols can be recovered using the multiplexed symbols within the deadline and a majority of the multiplexed symbols are received at Bob after the deadline, resulting in violation of the latency constraint. As a result, when $\Theta>\frac{L}{2}$, Alice and Charlie must resort to uncoordinated multiple access communication, wherein Alice and Charlie transmit their symbols using $1-\varepsilon$ and $\varepsilon$ fractions of their energies, respectively, on $f_{CB}$ over the MAC, where $\varepsilon\in(0,1)$ is the design parameter under consideration. Here, Charlie does not decode Alice's symbol and only transmits his $M-$PSK symbols. Subsequently, Bob uses the received symbols to jointly decode Alice's and Charlie's symbols. A major distinction between the $3\phi$ DASC-MF scheme and SC-MAC is that due to no knowledge of Alice's symbols, Charlie always scales his constellation by energy $\varepsilon$. This phenomenon is similar to Phase-III $3\phi$ DASC-MF scheme. Furthermore, since the Alice-to-Bob link of the MAC is non-coherent and Charlie-to-Bob link of the MAC is coherent, we refer to this scheme as the Semi-Coherent Multiple Access Channel (SC-MAC). Subsequently, similar to the $3\phi$ DASC-MF scheme, Alice and Charlie transmit dummy OOK symbols from a Gold-sequence based scrambler with $\varepsilon$ and $1-\varepsilon$ fractions of their energies, respectively, on $f_{AB}$ to evade the ED.

\subsection{Error Analysis of SC-MAC}
If $x_{n}\in\{0,1\}$ and $y_{n}\in\mathcal{S}_{C}$ denote the OOK and $M-$PSK symbols of Alice and Charlie, respectively, then the received symbol at Bob, assuming symbol level synchronization, is given by
\bieee
r_{B,n} = \sqrt{1-\varepsilon}h_{AB,n}x_{n} + \sqrt{\varepsilon}h_{CB,n}y_{n} + w_{B,n},\hspace{0.5cm}1\leq n\leq L,\label{eq:rB_MAC}
\eieee
\noindent where $h_{AB,n}$, $h_{CB,n}$, and $w_{B,n}$ are as defined in previous sections. Unlike $3\phi$ DASC-MF, $r_{B,n}$ in SC-MAC are independent across time, therefore, we drop the subscript $n$ from the variables during the error analysis of SC-MAC. Owing to the non-coherent nature of Alice-to-Bob link of the MAC and coherent nature of the Charlie-to-Bob link of the MAC, and the Gaussian statistics of the channels and the noise, the distribution of $r_{B}$ conditioned on $x$ and $y$ as $r_{B}\vert_{x,y,h_{CB}}\sim\mathcal{CN}\left(\sqrt{\varepsilon}h_{CB} y, N_{o}\right)$, if $x=0$ and $r_{B}\vert_{x,y,h_{CB}}\sim\mathcal{CN}\left(\sqrt{\varepsilon}h_{CB} y, N_{o\varepsilon}\right)$, if $x=1$. Using the distribution of $r_{B}$, the joint MAP decoder for SC-MAC is
\bieee
\hat{i},\hat{m} = \arg\underset{\substack{i,m}}{\max\ } g_{MAC}\left(r_{B}\vert x = i, y = e^{\iota\frac{\pi}{M}\left(2m+1\right)}, h_{CB}\right),\label{eq:PDF_MAC}
\eieee
\noindent where $i\in\{0,1\}$ and $m\in\{0,\ldots,M-1\}$ and $g_{MAC}$ is the PDF of $r_{B}$ conditioned on $x$, $y$, and $h_{CB}$. If Alice and Charlie transmit OOK and $M-$PSK, then a transmitted pair is denoted by $(i,m)$. Fig.~\ref{fig:cons_MAC} depicts the constellation diagram at Bob jointly contributed by Alice and Charlie when using SC-MAC scheme with $M=4$. When $x_{n}=1$ and $x_{n}=0$, the effective variance of the noise at Bob is $N_{o}+1-\varepsilon$ and $N_{o}$, respectively. The \emph{square box} denotes the symbols with variance $N_{o}$ and \emph{circular disks} represent the symbols with variance $N_{o\varepsilon}=N_{o}+1-\varepsilon$. In the next section, we discuss the error analysis of the SC-MAC scheme.

\begin{table}[!htb]
    \caption{\label{tab:MAC_error}ERROR TERMS FOR SC-MAC AS GIVEN IN THEOREM~\ref{th:error_MAC}}
      \centering
      \scalebox{0.8}{
         \begin{tabular}[t]{|c c||c c||c c|} 
  \hline
$P_{1,MAC}=$ & $ 1 - e^{\frac{-\psi}{N_{o}}}$ & $P_{2,MAC}=$ & $Q_{1}\left(\frac{|C|}{\sqrt{N_{o}/2}}, \frac{\sqrt{\eta}}{\sqrt{N_{o}/2}}\right)$ & $P_{3,MAC}=$ & $Q\left(\frac{|h_{CB,n}| d}{\sqrt{2N_{o}}}\right)$
\\ \hline 
$P_{4,MAC}=$ & $ e^{\frac{-\psi}{N_{o\varepsilon}}}$ & $P_{5,MAC}=$ & $1 - Q_{1}\left(\frac{|D|}{\sqrt{N_{o\varepsilon}/2}}, \frac{\sqrt{\eta}}{\sqrt{N_{o\varepsilon}/2}}\right)$ & $P_{6,MAC}=$ & $Q\left(\frac{|h_{CB,n}| d}{\sqrt{2N_{o\varepsilon}}}\right)$
\\ \hline
   \end{tabular}
   } 
   \vspace{-0.25cm}
\end{table}

\begin{theorem}
\label{th:error_MAC}
At high SNR, a union bound on probability of error for SC-MAC is approximated as
\bieee
\frac{1}{2}\left(P_{1,MAC} + 2P_{2,MAC}+2P_{3,MAC} + P_{4,MAC} + 2P_{5,MAC}+ 2P_{6,MAC}\right), \label{eq:errorMAC3}
\eieee 
\noindent where the various error terms in \eqref{eq:errorMAC3} are tabulated in Table~\ref{tab:MAC_error}. Further, $\psi = \frac{N_{o}N_{o\varepsilon}}{N_{o}-N_{o\varepsilon}}\ln\left(\frac{N_{o}}{N_{o\varepsilon}}\right)$, $C = \frac{|h_{CB,n}| d N_{o}}{N_{o}-N_{o,\varepsilon}}$, $D = \frac{|h_{CB,n}| d N_{o\varepsilon}}{N_{o}-N_{o,\varepsilon}}$, and $\eta = \frac{N_{o}N_{o\varepsilon}}{N_{o}-N_{o\varepsilon}}\left[\ln\left(\frac{N_{o}}{N_{o\varepsilon}}\right) + \frac{|h_{CB,n}|^{2}d^{2}}{N_{o}-N_{o\varepsilon}}\right]$ are the parameters of the Marcum-Q function, where, $d = 2\sqrt{\varepsilon}\sin\frac{\pi}{M}$ denotes the minimum Euclidean distance between the constellation points received at Bob.
\end{theorem}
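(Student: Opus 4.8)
The plan is to reuse the nearest-neighbour union-bound argument of Theorem~\ref{th:P3_3phase} (itself modelled on Theorem~\ref{th:P2_3phase}), specialised to the situation in which Charlie performs no decoding of Alice's bit. First I would note that, conditioned on $h_{CB,n}$, the symbol $r_{B,n}$ of \eqref{eq:rB_MAC} is supported on a $2M$-point constellation: the $M$ locations $\sqrt{\varepsilon}\,h_{CB,n}e^{\iota\pi(2m+1)/M}$, $m=0,\dots,M-1$, each carrying noise variance $N_o$ when Alice sends $x=0$, and the \emph{same} $M$ locations carrying the larger variance $N_{o\varepsilon}=N_o+1-\varepsilon$ when $x=1$ (the non-coherent term $\sqrt{1-\varepsilon}\,h_{AB,n}$ being absorbed entirely into the effective noise). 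Because Charlie does not estimate $x$, the conditional density $g_{MAC}(\cdot)$ in \eqref{eq:PDF_MAC} is a genuine complex Gaussian rather than the $\Phi_{ij}$-weighted Gaussian mixture that appeared in Phases~II and~III; this is precisely why no $\Phi_{ij}$ survives and why the bound collapses to the single summand in \eqref{eq:errorMAC3}. In effect the SC-MAC decoding problem is structurally identical to Phase~III of $3\phi$ DASC-MF with error-free relaying ($\Phi_{00}=\Phi_{11}=1$), except that Charlie's effective alphabet is plain $M$-PSK (minimum distance $\propto\sin(\pi/M)$) rather than the $2M$-point rotated set (minimum distance $\propto\sin(\pi/2M)$); hence $d=2\sqrt{\varepsilon}\sin(\pi/M)$.

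Next I would write the union bound, averaging over the $2M$ equiprobable pairs $(i,m)$: $\Pr(\mathrm{error}\mid (i,m))\le\sum_{(i',m')\ne(i,m)}\Pr\big((i,m)\to(i',m')\big)$, and keep only the dominant neighbours at high SNR. Three families of pairwise events dominate. (i) Same PSK index, wrong OOK bit: the pairwise log-likelihood ratio between $\mathcal{CN}(\mu,N_o)$ and $\mathcal{CN}(\mu,N_{o\varepsilon})$ reduces to the energy test $|r_{B,n}-\mu|^2\lessgtr\psi$ with $\psi=\tfrac{N_oN_{o\varepsilon}}{N_o-N_{o\varepsilon}}\ln\tfrac{N_o}{N_{o\varepsilon}}$, and since $|r_{B,n}-\mu|^2$ is exponential these yield the closed-form entries $P_{1,MAC}$ and $P_{4,MAC}$ of Table~\ref{tab:MAC_error}. (ii) Adjacent PSK index, OOK bit correct: this is the ordinary coherent $M$-PSK nearest-neighbour error at Euclidean distance $|h_{CB,n}|d$, performed once at variance $N_o$ and once at $N_{o\varepsilon}$, giving the $Q$-function terms $P_{3,MAC}$ and $P_{6,MAC}$. (iii) Adjacent PSK index \emph{and} wrong OOK bit: the two competing points now differ in both location and noise variance, so the decision boundary is a circle in the complex plane; completing the square in $\tfrac{|r-\mu_1|^2}{N_o}-\tfrac{|r-\mu_2|^2}{N_{o\varepsilon}}=\ln\tfrac{N_{o\varepsilon}}{N_o}$ identifies its centre ($C$, resp.\ $D$) and its squared radius ($\eta$), and the probability that the non-central complex Gaussian $r_{B,n}$ lands inside/outside that disc is a generalized Marcum-$Q$ value, giving $P_{2,MAC}$ and $P_{5,MAC}$. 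Finally I would collect these terms: each of the $M$ ``clean'' points contributes its one same-location OOK neighbour and its two angular neighbours at each variance, and likewise for the $M$ ``noisy'' points; normalising by $2M$ produces the overall $\tfrac12$ and the multiplicities $2$ in \eqref{eq:errorMAC3}, and dropping the non-nearest-neighbour terms gives the stated high-SNR approximation.

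The main obstacle is family~(iii). Unlike the straight-line boundaries of ordinary $M$-PSK decisions, the boundary between a variance-$N_o$ point and a variance-$N_{o\varepsilon}$ neighbour is a quadratic locus, so one must (a) complete the square correctly to read off the right circle centre and radius, keeping careful track of the two directions (from a clean point to a noisy neighbour versus the reverse), and (b) recognise the resulting tail integral of a non-central $\chi^2_2$ random variable as a generalized Marcum-$Q$ function with arguments $|C|/\sqrt{N_o/2}$ and $\sqrt{\eta}/\sqrt{N_o/2}$ (resp.\ with $D$ and $N_{o\varepsilon}$). This is exactly the step carried out ``along the similar lines'' of Theorem~\ref{th:P3_3phase}; the remaining exponential and $Q$-function terms, and the bookkeeping of multiplicities, are routine.
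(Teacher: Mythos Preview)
Your proposal is correct and follows essentially the same route as the paper: set up the pairwise likelihood-ratio events, apply the union bound over the $2M$ equiprobable pairs, and retain only nearest neighbours at high SNR. Your exposition is in fact considerably more detailed than the paper's own proof---which merely writes down the union bound and cites the nearest-neighbour approximation---since you explicitly classify the three dominant pairwise families (same location/different variance $\to$ exponential, adjacent location/same variance $\to$ $Q$-function, adjacent location/different variance $\to$ circular boundary $\to$ Marcum-$Q$) and explain how the absence of Charlie's decoding eliminates the $\Phi_{ij}$ mixture weights that appeared in Phase~III.
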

\begin{proof}
Let a pair corresponding to the symbols $x$ and $y$ be denoted by $(i,m)$. Let $\nabla_{(i,m)\rightarrow (i^{\prime},m^{\prime})}$ be the event $(i,m)\neq(i',m')$, then a transmitted pair $(i,m)$ is incorrectly decoded as $(i',m')$ if
\bieee
\nabla_{(i,m)\rightarrow (i^{\prime},m^{\prime})}\triangleq \frac{g_{MAC}\left(r_{B}\vert x = i, y = e^{\iota\frac{\pi}{M}\left(2m+1\right)}, h_{CB}\right)}{g_{MAC}\left(r_{B}\vert x = i^{\prime}, y = e^{\iota\frac{\pi}{M}\left(2m^{\prime}+1\right)}, h_{CB}\right)}\leq 1.\label{eq:event_MAC}
\eieee
\noindent Therefore, the probability of decoding a transmitted pair $(i,m)$ as $(i',m')$ is given as $\Pr((i,m)\rightarrow(i',m')) = \Pr\left(\nabla_{(i,m)\rightarrow (i^{\prime},m^{\prime})}\leq 1\right)$. If $\Pr\left((\hat{i},\hat{m})\neq (i,m)\right)$ denotes the overall probability of error in decoding a transmit pair $(i,m)$, then  $\Pr\left((\hat{i},\hat{m})\neq (i,m)\right)$ is upper bounded as 
\bieee
\Pr\left((\hat{i},\hat{m})\neq (i,m)\right) \leq \underset{(i,m)\neq (i^{\prime},m^{\prime})}{\sum_{i^{\prime}=0}^{1}\sum_{m^{\prime}=0}^{M-1}}\Pr\left(\nabla_{(i,m)\rightarrow (i^{\prime},m^{\prime})}\right),\label{eq:errorMAC1}
\eieee
\noindent where $(\hat{i},\hat{m})$ is the decoded pair at Bob, corresponding to the transmitted pair $(i,m)$. Thus, if $P_{e,MAC}$ denotes the overall probability of error for $2M$ symbols, then $P_{e,MAC}$ is upper bounded as $P_{e,MAC} \leq \frac{1}{2M}\sum_{i=0}^{1}\sum_{m=0}^{M-1} M\Pr((\hat{i},\hat{m})\neq (i,m))$. Finally, considering the nearest-neighbours at high SNR, $P_{e,MAC}$ is approximated as \eqref{eq:errorMAC3}, where the various error terms in \eqref{eq:errorMAC3} are tabulated in Table~\ref{tab:MAC_error}. Furthermore, these terms are also depicted in Fig.~\ref{fig:cons_MAC} for $4-$PSK used by Charlie.
\end{proof}

\begin{corollary}
\label{col:upper_MAC}
Using the results of Lemma~\ref{lm:P3_P1}, we upper bound $P_{2,MAC}$ and $P_{3,MAC}$ as $2P_{2,MAC}\leq P_{1,MAC}$ and $2P_{3,MAC}\leq P_{4,MAC}$. Further, using the Chernoff bound on the Q-functions and averaging $P_{e,MAC}$ over the realisations of $h_{CB}$, we get $\mathbb{E}_{h_{CB}}[P_{e,MAC}]\leq \mathcal{P}_{MAC} \triangleq P_{1,MAC} + P_{3,MAC}^{\star} + P_{4,MAC} + P_{6,MAC}^{\star}$, where $P_{3,MAC}^{\star} = \frac{2N_{o}}{4N_{o} + d^{2}}$ and $P_{6,MAC}^{\star} = \frac{2N_{o\varepsilon}}{4N_{o\varepsilon} + d^{2}}$. 
\end{corollary}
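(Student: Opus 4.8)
The plan is to start from the conditional high-SNR union bound of Theorem~\ref{th:error_MAC}, namely $P_{e,MAC}\le\frac{1}{2}\bigl(P_{1,MAC}+2P_{2,MAC}+2P_{3,MAC}+P_{4,MAC}+2P_{5,MAC}+2P_{6,MAC}\bigr)$ conditioned on $h_{CB}$, and to take its expectation over $h_{CB}\sim\mathcal{CN}(0,1)$, so that $|h_{CB}|^{2}$ is a unit-mean exponential random variable. The six summands split into three types: the deterministic exponentials $P_{1,MAC}$ and $P_{4,MAC}$, which pass through untouched; the Gaussian-tail terms $P_{3,MAC}$ and $P_{6,MAC}$; and the Marcum-$Q$ terms $P_{2,MAC}$ and $P_{5,MAC}$. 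I would bound each $h_{CB}$-dependent type separately and then recombine.

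For the Gaussian-tail terms I would use the Chernoff bound $Q(x)\le\frac{1}{2}e^{-x^{2}/2}$, which gives $P_{3,MAC}\le\frac{1}{2}\exp\!\bigl(-|h_{CB}|^{2}d^{2}/(4N_{o})\bigr)$ and the analogous bound for $P_{6,MAC}$ with $N_{o\varepsilon}$ in place of $N_{o}$; averaging against the exponential moment-generating function $\mathbb{E}\bigl[e^{-a|h_{CB}|^{2}}\bigr]=(1+a)^{-1}$ then produces $\mathbb{E}_{h_{CB}}[P_{3,MAC}]\le\frac{2N_{o}}{4N_{o}+d^{2}}=P_{3,MAC}^{\star}$ and $\mathbb{E}_{h_{CB}}[P_{6,MAC}]\le\frac{2N_{o\varepsilon}}{4N_{o\varepsilon}+d^{2}}=P_{6,MAC}^{\star}$.

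For the Marcum-$Q$ terms I would follow the route of Lemma~\ref{lm:P3_P1} and \cite{TCCN-VH}: substitute the parameters $|C|$, $|D|$, $\eta$ from Theorem~\ref{th:error_MAC}, integrate $Q_{1}(\cdot,\cdot)$ against the exponential density of $|h_{CB}|^{2}$ in closed form, and argue that the resulting rational-times-exponential expression is, in the operating regime, dominated by the matching deterministic exponential --- exactly the kind of comparison already carried out in Lemma~\ref{lm:P3_P1}, with the sign of $N_{o}-N_{o\varepsilon}$ and the positivity of $\psi$ replacing the ordering $\Phi_{00}\ge\Phi_{11}$ used there. This yields $\mathbb{E}_{h_{CB}}[2P_{2,MAC}]\le P_{1,MAC}$, and by the symmetric argument the companion Marcum-$Q$ contribution $\mathbb{E}_{h_{CB}}[2P_{5,MAC}]$ is bounded by $P_{4,MAC}$.

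Assembling the four bounds inside the averaged union bound, the $P_{1,MAC}$ term and $\mathbb{E}_{h_{CB}}[2P_{2,MAC}]$ merge into one $P_{1,MAC}$, the $P_{4,MAC}$ term and $\mathbb{E}_{h_{CB}}[2P_{5,MAC}]$ merge into one $P_{4,MAC}$, and the two Chernoff-bounded terms leave $P_{3,MAC}^{\star}+P_{6,MAC}^{\star}$, so that $\mathbb{E}_{h_{CB}}[P_{e,MAC}]\le P_{1,MAC}+P_{3,MAC}^{\star}+P_{4,MAC}+P_{6,MAC}^{\star}=\mathcal{P}_{MAC}$. The step I expect to be the main obstacle is the Marcum-$Q$ one: both the closed-form integration of a $Q_{1}(\cdot,\cdot)$ expression in which $|h_{CB}|$ appears inside both arguments, and the verification that the resulting average stays below the plain exponential for all $\varepsilon\in(0,1)$, require real care; the Chernoff estimate, the exponential expectation, and the final bookkeeping that collapses six terms into four are routine.
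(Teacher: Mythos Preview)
Your approach is essentially the paper's own: it does not give a separate proof block for this corollary, and the ``proof'' is the recipe stated in the corollary itself --- bound the Marcum-$Q$ terms by the matching exponentials via the Lemma~\ref{lm:P3_P1} argument, Chernoff-bound the $Q$-functions, and average over $h_{CB}$ to collapse six terms into four. You have even silently corrected a typo in the corollary statement: the intended Marcum-$Q$ bounds are $\mathbb{E}_{h_{CB}}[2P_{2,MAC}]\le P_{1,MAC}$ and $\mathbb{E}_{h_{CB}}[2P_{5,MAC}]\le P_{4,MAC}$ (not $2P_{3,MAC}\le P_{4,MAC}$, since $P_{3,MAC}$ is the plain $Q$-function that gets Chernoff-bounded to $P_{3,MAC}^{\star}$), and your pairing is the correct one that actually produces $\mathcal{P}_{MAC}$.
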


\begin{figure}[t]
\vspace{-0.5cm}
\captionsetup{width=0.48\textwidth}
\begin{center}
    \begin{minipage}[t]{0.48\textwidth}
    \centering
\includegraphics[scale=0.45]{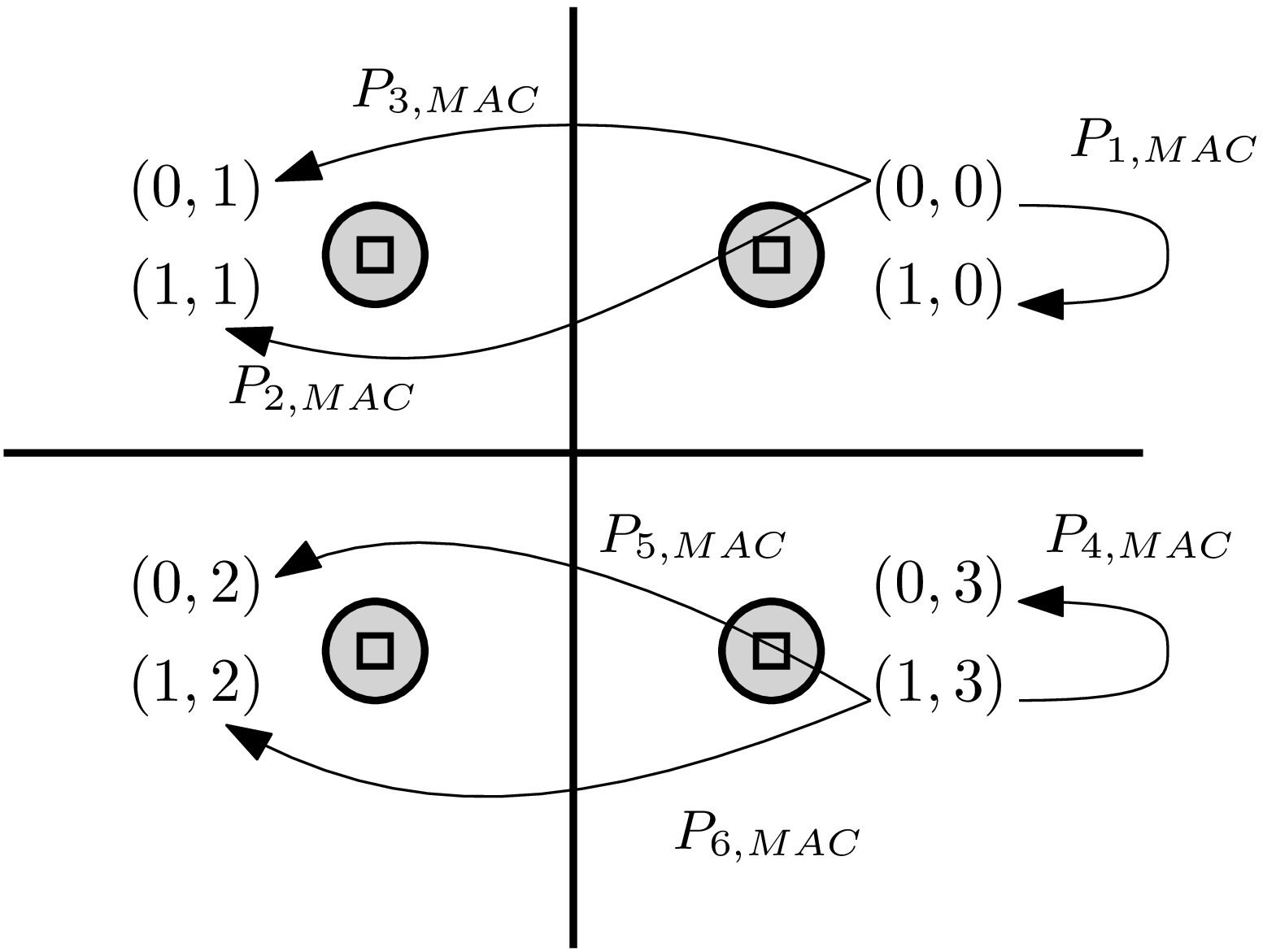}
\caption{\label{fig:cons_MAC} Constellation diagram jointly contributed by Alice and Charlie when using SC-MAC.}
    \end{minipage}%
    \hfill
   \begin{minipage}[t]{0.48\textwidth}
\centering
        \includegraphics[scale = 0.45]{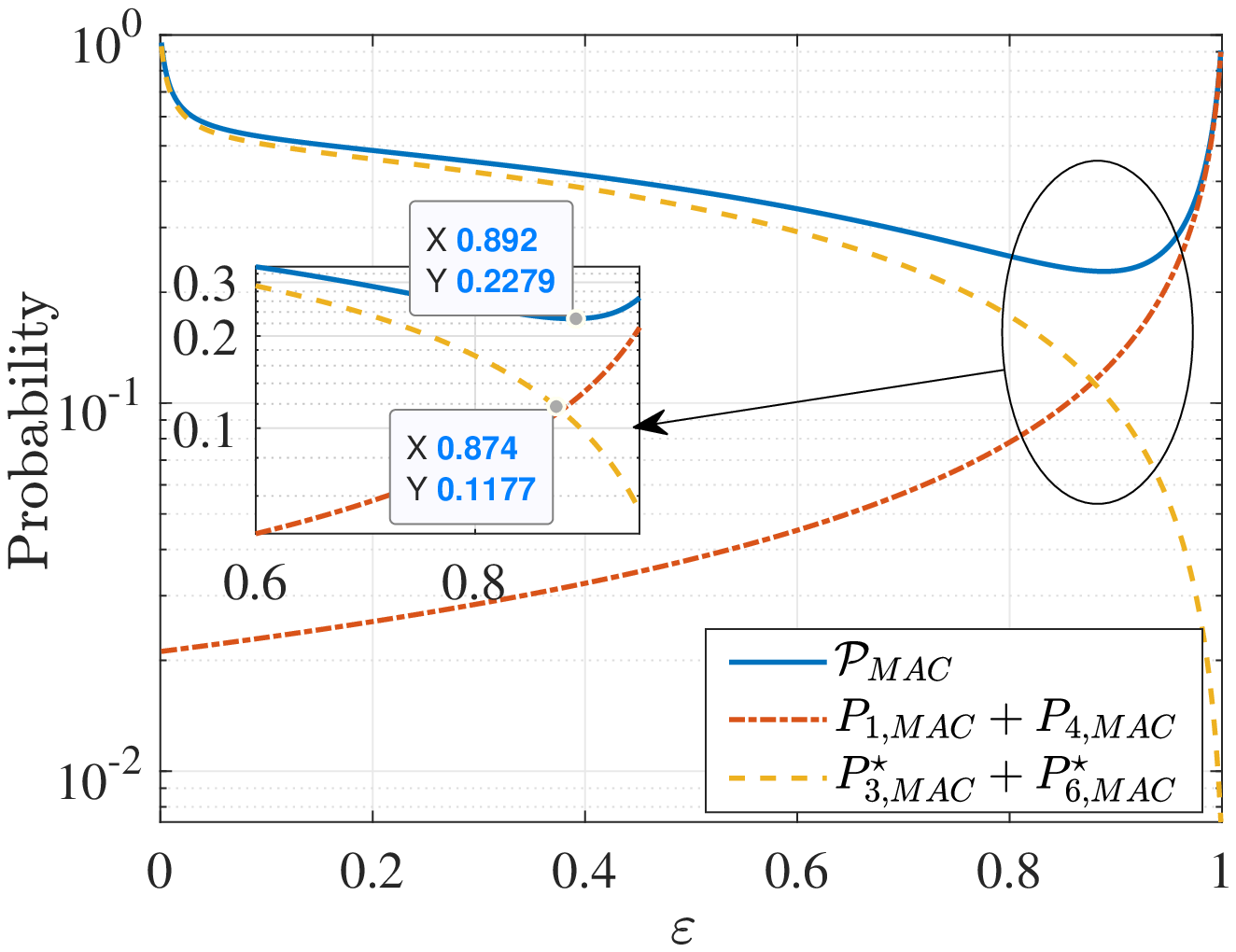}
\caption{\label{fig:intersection_MAC} Variation of $\mathcal{P}_{MAC}$ and its increasing and decreasing terms as a function of $\varepsilon$ at $25$ dB.}
    \end{minipage}%
    \end{center}
    \vspace{-0.5cm}
\end{figure}

While observing \eqref{eq:rB_MAC}, we notice that when $\varepsilon\rightarrow 1$, Alice's symbols are transmitted with lower energy, whereas Charlie transmits his symbols with higher energy. However, when $\varepsilon\rightarrow 0$, Alice's symbols are transmitted with  higher energy as compared to Charlie. Thus, in the extreme range, the joint error-rates at Bob are expected to be high and therefore,  Alice and Charlie must use a value of $\varepsilon$ that minimizes the error-rates at Bob. Towards minimizing $\mathcal{P}_{MAC}$ at Bob, we pose an optimization problem as given below.
\begin{mdframed}
\bieee
\varepsilon^{\dagger} = \arg\underset{\varepsilon}{\min} \quad &  & \mathcal{P}_{MAC}; \ \ \text{s.t.: }  0<\varepsilon < 1.\label{optMAC}
\eieee
\end{mdframed}

Towards minimizing $\mathcal{P}_{MAC}$, we note that, it is straightforward to prove that $\mathcal{P}_{MAC}$ is a sum of increasing and decreasing terms w.r.t. $\varepsilon$. In particular, using the results of Lemma~\ref{lm:inc_dec_P3}, we notice that $P_{1,MAC}+P_{4,MAC}$ is an increasing function of $\varepsilon$ and $P_{3,MAC}^{\star}+P_{6,MAC}^{\star}$ is a decreasing function of $\varepsilon$ and the intersection of the increasing and decreasing terms is close to the unique minima of $\mathcal{P}_{MAC}$. This observation is exemplified in Fig.~\ref{fig:intersection_MAC} at SNR = 25 dB. Along the similar lines of Theorem~\ref{th:p3_dec}, it is straightforward to prove that, $P_{1,MAC}+P_{4,MAC}$ and $P_{3,MAC}^{\star}+P_{6,MAC}^{\star}$ intersect at a unique value of $\varepsilon\in(0,1)$. Therefore, the unique intersection of $P_{1,MAC}+P_{4,MAC}$ and $P_{3,MAC}^{\star}+P_{6,MAC}^{\star}$ can be computed using the NR algorithm. In the next section, we present Monte-Carlo simulations for SC-MAC scheme and showcase its error performance.

\section{Simulation Results on the $3\phi$ DASC-MF and SC-MAC Schemes}
\label{sec:sim}

\blue{In this section, we present simulation results for the two schemes proposed in the previous sections. The simulation parameters for the Monte-Carlo simulations are as follows: We assume $4$-PSK signalling at Charlie and OOK signalling at Alice. All the links in the network are Rayleigh faded, such that $\sigma_{AB}^{2}=\sigma_{CB}^{2}=1$ and $\sigma_{AC}^{2} = 4$, thus providing $6$ dB improvement in SNR on Alice-to-Charlie link as compared to  Alice-to-Bob link of the MAC. Further, we assume Bob has the perfect channel estimates of Charlie-to-Bob link on $f_{CB}$. Furthermore, all the channels and noise realizations are statistically independent. Finally, the variance of the AWGN at Bob and Charlie is $N_{o} = SNR^{-1}$. For the $3\phi$ scheme, we use $L=200$ and $\Delta = 0.1$ along with the following three delay models, i.e.,  $\Theta = \lceil 10\log N_{C}^{\dagger}\rceil$, $\Theta = \lceil\log_{2} N_{C}^{\dagger}\rceil$, and $\Theta=\lceil\frac{L}{2}\rceil$. Since, $\Theta$ is a function of the FD architecture of Charlie, similar results can be generated for any model on $\Theta$, as long as, $\Theta\leq\frac{L}{2}$.}

\begin{table}[!h]
\begin{center}
\caption{\label{tab:NC_beta_tab} VALUES OF $(N_{C}^{\dagger},\beta^{\dagger})$ USING EXACT AND INTERSECTION METHODS FOR VARIOUS $\Theta$}
\scalebox{0.85}{
\begin{tabular}{|c|ll|ll|ll|}
\hline
\multicolumn{1}{|l|}{\multirow{2}{*}{$\downarrow \Theta$ $\setminus$ $\underset{\rightarrow}{\text{SNR}}$}} & \multicolumn{2}{c|}{21 dB}    & \multicolumn{2}{c|}{25 dB}    & \multicolumn{2}{c|}{29 dB}    
\\ \cline{2-7} 
\multicolumn{1}{|l|}{} & \multicolumn{1}{l|}{$(N_{C}^{\dagger}, \beta^{\dagger})_{\textrm{EM}}$} & $(N_{C}^{\dagger}, \beta^{\dagger})_{\textrm{Int}}$ & \multicolumn{1}{l|}{$(N_{C}^{\dagger}, \beta^{\dagger})_{\textrm{EM}}$} & $(N_{C}^{\dagger}, \beta^{\dagger})_{\textrm{Int}}$  & \multicolumn{1}{l|}{$(N_{C}^{\dagger}, \beta^{\dagger})_{\textrm{EM}}$} & $(N_{C}^{\dagger}, \beta^{\dagger})_{\textrm{Int}}$ 
\\ \hline
$\lceil 10\log_{10}N_{C}\rceil$                                 & \multicolumn{1}{l|}{(125,0.8730)} & (125, 0.9020) & \multicolumn{1}{l|}{(127,0.9220)} & (127,0.9310) & \multicolumn{1}{l|}{(129, 0.9580)} & (129, 0.9570)  
\\ \hline
$\lceil\log_{2}N_{C}\rceil$                                     & \multicolumn{1}{l|}{(126,0.8730)} & (126, 0.9020) & \multicolumn{1}{l|}{(128,0.9220)} & (128,0.9310)  & \multicolumn{1}{l|}{(130, 0.9580)} &  (130, 0.9570)
   \\ \hline
$\lceil\sqrt{N_{C}}\rceil$                                & \multicolumn{1}{l|}{(122,0.8730)} & (122, 0.9020) & \multicolumn{1}{l|}{(127,0.9220)} & (127,0.9310) & \multicolumn{1}{l|}{(129, 0.9580)} & (129, 0.9570)
\\ \hline
\end{tabular}
}
\end{center}
\vspace{-0.5cm}
\end{table}

\begin{figure}
\captionsetup{width=0.48\textwidth}
\begin{center}
\begin{minipage}[t]{0.48\textwidth}
         \captionsetup{type=table,font=scriptsize}
    \caption{\label{tab:ES} \blue{VALUES OF $(N_{C}^{*},\beta^{*})$ OBTAINED USING EXHAUSTIVE METHOD FOR VARIOUS $\Theta$}}
    \scalebox{0.8}{
\begin{tabular}[t]{|c|c|c|c|}
\hline
 $\downarrow \Theta\setminus \underset{\rightarrow}{SNR}$& 21 dB & 25 dB & 29 dB\\ \hline
$\lceil 10\log_{10}N_{C}\rceil$ & (125, 0.8700) & (127, 0.9290) & (129, 0.9430)
\\ \hline
$\lceil \log_{2}N_{C}\rceil$ & (126, 0.8700) & (128, 0.9290) & (130, 0.9430)
\\ \hline
$\lceil \sqrt{N_{C}}\rceil$ & (122, 0.8700) & (127, 0.9290) & (129, 0.9430) 
\\ \hline
\end{tabular}
}
    \end{minipage}%
    \hfill  
\begin{minipage}[t]{0.48\textwidth}
   \captionsetup{type=table,font=scriptsize}
    \caption{\label{tab:MAC} VALUES OF $\varepsilon$ OBTAINED USING EXHAUSTIVE AND INTERSECTION METHODS FOR SC-MAC}
    \scalebox{0.9}{
\begin{tabular}[t]{|c|p{2.5em}|p{2.5em}|p{2.5em}|p{2.5em}|}
\hline
 $\downarrow \varepsilon\setminus \underset{\rightarrow}{SNR}$& $19$ dB & 21 dB & 25 dB & 29 dB \\ \hline
$\varepsilon_{\textrm{ES}}^{*}$ & 0.8080 & 0.8700 & 0.8920 & 0.9250 \\ \hline
$\varepsilon_{\textrm{Int}}^{\dagger}$ & 0.7820 & 0.8510 & 0.8740 & 0.9190 \\ \hline
\end{tabular}
}
    \end{minipage}%
    \end{center}
\end{figure}

\blue{To present the error performance of the $3\phi$ scheme, we first compute the near-optimal values of $(N_{C},\beta)$ using the following three methods: i) an Exhaustive Search (ES), ii) an Exact Method (EM), and iii) the $N_{C}$-$\beta$ Optimization algorithm. For the ES, we use Monte-Carlo simulations to empirically compute the overall probability of error in \eqref{eq:Pe_main} and then exhaustively search for $(N_{C}^{*},\beta^{*})$ that minimises \eqref{eq:Pe_main}, for a given $\Theta$ and SNR. The step size for $N_{C}$ is one and that for $\beta\in(0,1)$ is $10^{-3}$. For the EM, we use the upper bounds proposed in Corollary~\ref{col:upper_Phase2} and Corollary~\ref{col:upper_Phase3} and exhaustively search for $(N_{C}^{\dagger},\beta^{\dagger})$ that minimises \eqref{eq:Pe_main2}, for a given $\Theta$ and SNR. Here, the resolution of $\beta$ is same as used for the ES. Finally, we compute $(N_{C}^{\dagger},\beta^{\dagger})$ using the $N_{C}$-$\beta$ Optimization algorithm, as explained in Algorithm~\ref{Algo}. In Table~\ref{tab:NC_beta_tab}, we tabulate $N_{C}^{\dagger}$ and $\beta^{\dagger}$ for various combinations of $\Theta$ and SNR using the EM and $N_{C}$-$\beta$ Optimization algorithm. Further, for the same $\Theta$ and SNR, in Table~\ref{tab:ES}, we also tabulate $N_{C}^{\dagger}$ and $\beta^{\dagger}$ obtained using the ES. From Table~\ref{tab:NC_beta_tab} and Table~\ref{tab:ES} we observe that the obtained values of $N_{C}$ and $\beta$ are close. Further, we also observe that the required value of $N_{C}^{\dagger}$ increases with SNR. This is due to the fact that, although we have upper bounded the interference from Alice on Alice-to-Bob link of the MAC by $\Delta N_{o}$, Charlie uses the same energy, $\Delta N_{o}$ to decode Alice's symbols. Therefore, as SNR increases, $N_{o}$ decreases, and Charlie requires more receive-antennas to faithfully decode Alice's symbols. Along the similar lines, for the SC-MAC scheme, we tabulate $\varepsilon$ obtained using the ES method and the intersection method, for various SNRs in Table~\ref{tab:MAC}. For ES we exhaustively search for the value of $\varepsilon\in(0,1)$ (in the steps of $10^{-3}$) that minimises the empirically computed average probability of error. From the table we infer that $\varepsilon$ obtained using both the methods are close.}

\begin{figure}
\vspace{-0.5cm}
\captionsetup{width=0.49\textwidth}
\begin{center}
    \begin{minipage}[t]{0.48\textwidth}
        \centering
        \includegraphics[width = 0.8\textwidth, height = 0.7\textwidth]{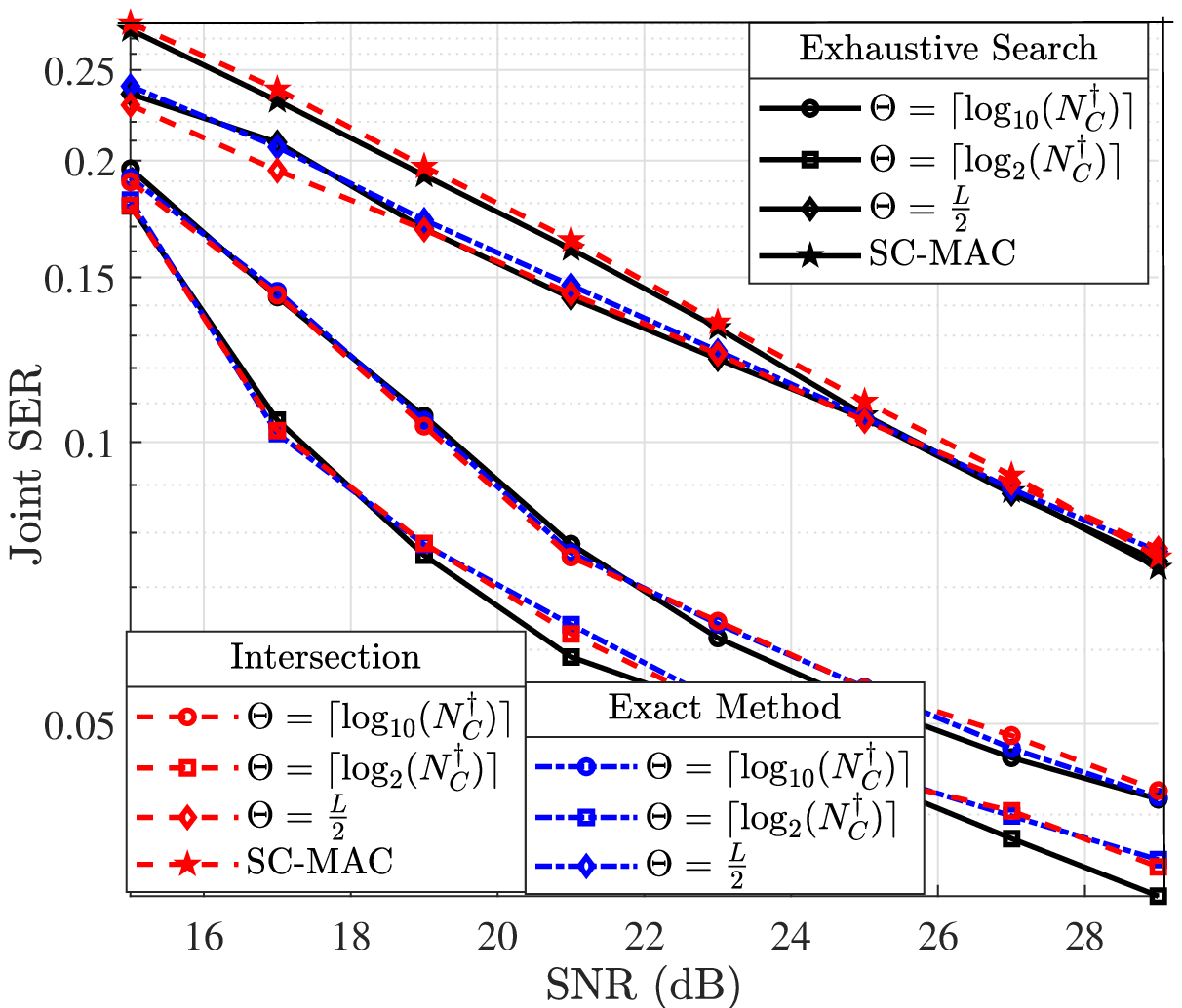}
\caption{\label{fig:joint_error}\blue{Joint SER when using $3\phi$ DASC-MF and SC-MAC.}}
    \end{minipage}%
    \hfill
   \begin{minipage}[t]{0.48\textwidth}
        \centering
        \includegraphics[width = 0.8\textwidth, height = 0.7\textwidth]{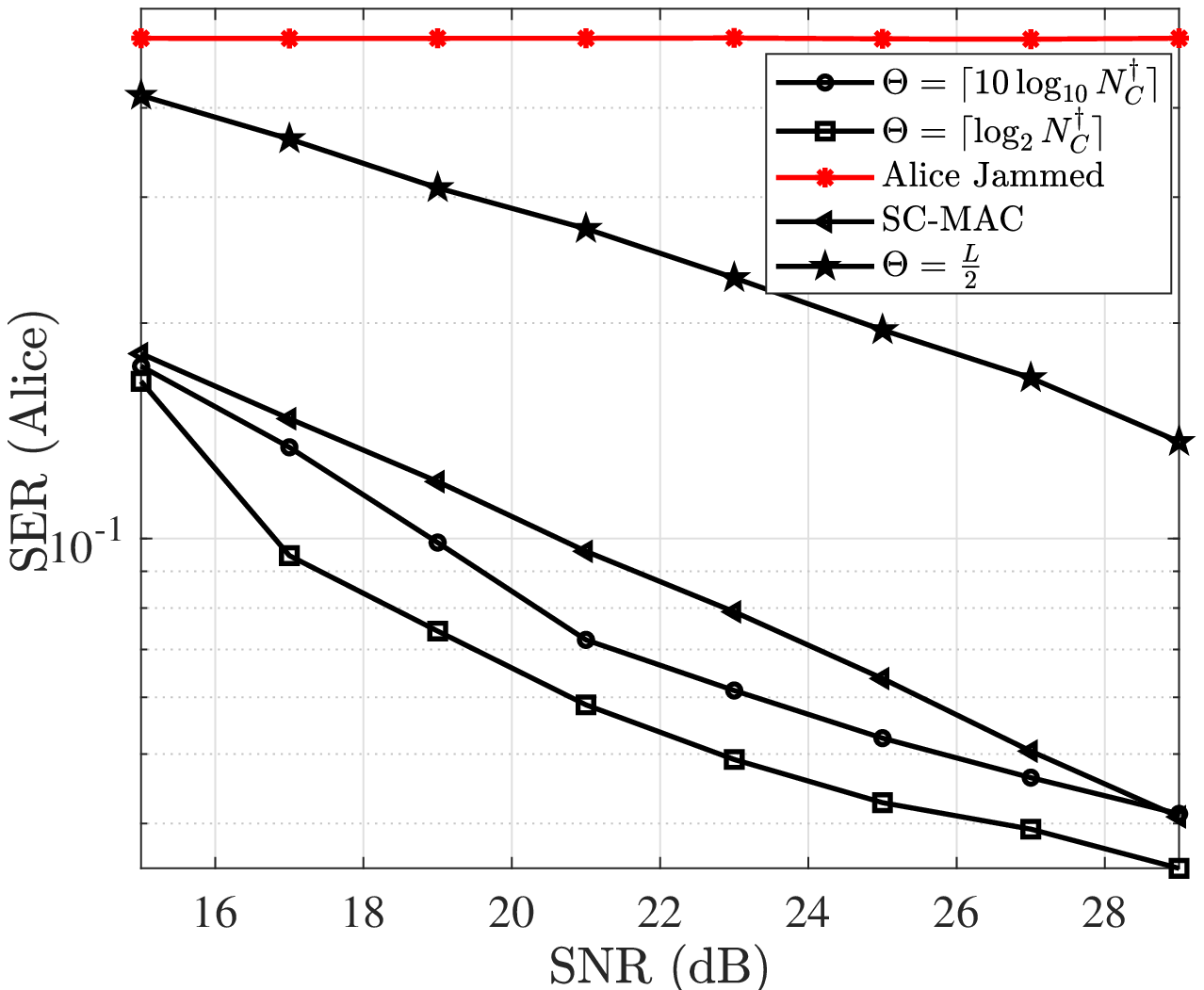}
\caption{\label{fig:alice_error}\blue {SER (Alice) using $3\phi$ DASC-MF and SC-MAC.}}
    \end{minipage}%
    \end{center}
    \vspace{-0.5cm}
\end{figure}

\blue{In Fig.~\ref{fig:joint_error}, we plot the Joint Symbol Error Rate (SER), i.e., the joint error performance of Alice and Charlie, at Bob when using the $3\phi$ DASC-MF scheme (for various models on $\Theta$) and SC-MAC scheme, as a function of SNR. In $3\phi$ DASC-MF, for each model, we first obtain the value of $(N_{C},\beta)$ pair using the ES, EM, and $N_{C}$-$\beta$ Optimization Algorithm and then obtain the value of Joint SER through Monte-Carlo simulations. Along the similar lines, for SC-MAC scheme, we first obtain the value of $\varepsilon$ using the ES method and the intersection method and then compute the Joint SER. It is evident that the Joint SER for both the schemes reduces as a function of SNR. We also note that among the three models considered for $3\phi$ DASC-MF scheme, the error performance is best when $\Theta = \lceil\log_{2}N_{C}^{\dagger}\rceil$ as $\frac{\Theta}{L}$ is minimum when $\Theta = \lceil\log_{2}N_{C}^{\dagger}\rceil$, thereby reducing the fraction of symbols decoded during Phase-III. Additionally, when comparing SC-MAC and $3\phi$ DASC-MF scheme, we note that SC-MAC is sub-optimal for the assumed models on $\Theta$. This is because, throughout $3\phi$ DASC-MF scheme, Bob decodes a majority of Alice's symbols using the coherent Charlie-to-Bob link of the MAC, whereas, in SC-MAC, Bob decodes all  Alice's symbols using the non-coherent Alice-to-Bob link of the MAC. Thus, Alice's symbols in SC-MAC are decoded more reliably than $3\phi$ DASC-MF.}\looseness = -1

 \blue{Finally, in Fig.~\ref{fig:alice_error}, we plot the improvement in Alice's error performance before and after implementing the proposed schemes. It is interesting to note that although the joint error performance of SC-MAC is inferior to $3\phi$ DASC-MF when $\Theta=\frac{L}{2}$, SC-MAC outperforms $3\phi$ DASC-MF when $\Theta = \frac{L}{2}$ when considering only Alice's SER at Bob. This is because, in SC-MAC, Bob decodes two symbols i.e., Alice's current symbols and Charlie's current symbol during the entire duration. In contrast, when $\Theta=\frac{L}{2}$, Bob decodes three symbols, i.e, Alice current symbol, Charlie's current symbol and Alice's symbol delayed by $\Theta$ symbols. As a result, the error rates for Alice are high for $\Theta = \frac{L}{2}$. Overall, it is clear from the plots that SC-MAC helps Alice to evade the jamming attack from Dave within the latency constraint, however, it is sub-optimal as compared to $3\phi$ DASC-MF scheme. Nevertheless, it is a still a good option when $\Theta\geq\frac{L}{2}$. }

\section{Covertness Analysis}
\label{sec:Covertness}
 As discussed in Sec.~\ref{sec:system}, Dave uses an ED to detect the changes in the average energy level on all the frequencies in the network after the jamming attack. Since Alice and Charlie use $f_{AB}$ and $f_{CB}$ to execute the countermeasures, we restrict our analysis to these frequencies only. Thus, in the context of this work, we refer to a mitigation scheme as covert if Dave is unable to detect changes in the average energy levels of $f_{AB}$ and $f_{CB}$. Further, throughout this section, we assume that Dave has the knowledge of the channel statistics of Alice-to-Dave link and Charlie-to-Dave link, but no knowledge of the channel coefficients of these links. 

\subsection{Covertness Analysis for ED when using $3\phi$ DASC-MF Relaying Scheme}
\label{ssec:covert_ED3p}

In this section, we first discuss energy detection at Dave on $f_{AB}$ and $f_{CB}$ when using $3\phi$ DASC-MF and then, compute probability of false-alarms and probability of miss-detections on $f_{AB}$ and $f_{CB}$. Before executing the jamming attack, Dave collects a frame of $L$ symbols on $f_{AB}$ and $f_{CB}$ and computes their average energy. The $n^{th}$ symbol received at Dave, before the jamming attack on $f_{AB}$ and $f_{CB}$ are respectively given as
\bieee
r_{D,AB,n} &=& h_{AD,n}x_{n} + w_{D,AB,n},\hspace{0.5cm} 1\leq n\leq L,\label{eq:rd1}\\
r_{D,CB,n} &=& h_{CD,n}y_{n} + w_{D,CB,n},\hspace{0.5cm} 1\leq n\leq L,\label{eq:rd2}
\eieee 
\noindent where $h_{AD,n}\sim\mathcal{CN}(0,1)$ and $h_{CD,n}\sim\mathcal{CN}(0,1+\partial)$ are the $n^{th}$ channel coefficients of Alice-to-Dave link and Charlie-to-Dave link, respectively, such that, $\partial$ captures the relative difference in the variance of Alice-to-Dave link and Charlie-to-Dave link. Further, $w_{D,AB,n}\sim\mathcal{CN}(0,\tilde{N}_{o})$ is the effective AWGN at Dave on $f_{AB}$, such that, $\tilde{N}_{o} = N_{o}+\sigma_{DD}^{2}$, where $\sigma_{DD}^{2}$ is the variance of the residual SI at Dave and $N_{o}$ is the variance of the AWGN at Dave. Furthermore, $w_{D,CB,n}\sim\mathcal{CN}(0,N_{o})$ is the AWGN at Dave on $f_{CB}$. Since $r_{D,AB,n}$ and $r_{D,CB,n}$ are statistically independent over $n$, using weak law of large numbers, $\frac{1}{L}\sum_{n=1}^{L}\vert r_{D,AB,n}\vert^{2}\rightarrow \mathcal{E}_{AB}$ and $\frac{1}{L}\sum_{n=1}^{L}\vert r_{D,CB,n}\vert^{2}\rightarrow \mathcal{E}_{CB}$, in probability, such that $\mathcal{E}_{AB} = \tilde{N}_{o} + 0.5$ and $\mathcal{E}_{CB} = N_{o}+1+\partial$. However, due to short packet length, the measured average energies on $f_{AB}$ and $f_{CB}$ are not equal to $\mathcal{E}_{AB}$ and $\mathcal{E}_{CB}$. Therefore, if $\mathcal{H}_{0}$ denotes the hypothesis that no countermeasure is implemented, then, probability of false-alarms on $f_{AB}$ and $f_{CB}$ are defined as follows.
\begin{definition}
\label{def:pfa_ab}
The probability of false-alarm on $f_{AB}$, denoted by, $\mathbf{P}_{FA, AB}$ is given as $\mathbf{P}_{FA, AB} = \Pr\left(\vert\mathcal{U}_{L,AB} - \mathcal{E}_{AB}\vert\geq \nu_{AB}\vert \mathcal{H}_{0}\ true\right)$, where, $\mathcal{U}_{L,AB}$ denotes the random variable (RV) corresponding to the average energy of $L$ symbols received at Dave on $f_{AB}$ and $\nu_{AB}>0$ is a parameter of Dave's choice.
\end{definition}
\begin{definition}
\label{def:pfa_cb}
The probability of false-alarm on $f_{CB}$, denoted by, $\mathbf{P}_{FA, CB}$ is given as $\mathbf{P}_{FA, CB} = \Pr\left(\vert\mathcal{U}_{L,CB} - \mathcal{E}_{CB}\vert\geq \nu_{CB}\vert \mathcal{H}_{0}\ true\right)$, for $\nu_{CB}>0$, where, $\mathcal{U}_{L,CB}$ denotes the RV corresponding to the average energy of $L$ symbols received at Dave on $f_{CB}$ and $\nu_{CB}$ is a parameter of Dave's choice.
\end{definition}

When $\tilde{N}_{o}\ll 1$, the distribution of $\mathcal{U}_{L,AB}$ is approximated as $\frac{1}{2^{L}}\sum_{l=0}^{L}{L \choose l}\mathrm{G}\left(l,\frac{1}{L}\right)$, where $\mathrm{G}(\cdot,\cdot)$ denotes the Gamma distribution \cite[Theorem 5]{my_TCOM}. Along the similar lines, when $N_{o}\ll 1$, the distribution of $\mathcal{U}_{L,CB}$ can be approximated as $\frac{1}{2^{L}}\sum_{l=0}^{L}{L \choose l}\mathrm{G}\left(l,\frac{1+\partial}{L}\right)$. Thus, the expressions of $\mathbf{P}_{FA, AB}$ and $\mathbf{P}_{FA, CB}$ when $\tilde{N}_{o}\ll 1$ and $N_{o}\ll 1$, respectively, are approximated as
\bieee
\mathbf{P}_{FA, AB} &\approx & \frac{1}{2^{L}}\left[\sum_{l=0}^{L}{L \choose l}\frac{\gamma\left(l,L\left(\mathcal{E}_{AB} - \nu_{AB}\right)\right)}{\Gamma(l)} + \sum_{l=0}^{L}{L \choose l}\frac{\Gamma\left(l,L\left(\mathcal{E}_{AB} + \nu_{AB}\right)\right)}{\Gamma(l)}\right],\label{eq:pfa_ab}\\
\mathbf{P}_{FA, CB} &\approx & \frac{1}{2^{L}}\left[\sum_{l=0}^{L}{L \choose l}\frac{\gamma\left(l,\frac{L}{1+\partial}\left(\mathcal{E}_{CB} - \nu_{CB}\right)\right)}{\Gamma(l)} + \sum_{l=0}^{L}{L \choose l}\frac{\Gamma\left(l,\frac{L}{1+\partial}\left(\mathcal{E}_{CB} + \nu_{CB}\right)\right)}{\Gamma(l)}\right],\label{eq:pfa_Cb}
\eieee
\noindent where $\nu_{AB}$ and $\nu_{CB}$ are decided by Dave. However, small values of $\nu_{AB}$ and $\nu_{CB}$ often result in high false-alarms, and large values of $\nu_{AB}$ and $\nu_{CB}$ result in high miss-detection. Thus, the values of $\nu_{AB}$ and $\nu_{CB}$ must be cautiously chosen by Dave so as to minimize both probability of false-alarm and probability of miss-detection. In the rest of the section, we discuss the probability of miss-detection on $f_{AB}$ and $f_{CB}$, when using $3\phi$ DASC-MF scheme for a given choice of $\nu_{AB}$ and $\nu_{CB}$. Towards computing the probability of miss-detection, we first present the symbols observed at Dave on $f_{AB}$ and then characterise the probability of miss-detection on $f_{AB}$. Along the similar lines, we will then characterise the probability of miss-detection on $f_{CB}$. We recall that Alice and Charlie use OOK symbols from a pre-shared Gold-sequence to cooperatively pour their residual energies on $f_{AB}$. Therefore, if $b_{n}\in\{0,1\}$ denotes the $n^{th}$ OOK bit jointly transmitted by Alice and Charlie, and $r_{D,AB,n}^{\dagger}$ denotes the $n^{th}$ baseband symbol received at Dave, then
\begin{subnumcases}{r_{D,AB,n}^{\dagger}=}
\sqrt{\alpha}h_{AD,n}b_{n} + \sqrt{1-\alpha}h_{CD,n}b_{n} + w_{D,AB,n}, & if $1\leq n\leq L-\Theta$,\label{eq:rd3}\\
\sqrt{\beta^{\dagger}}h_{AD,n}b_{n} + \sqrt{1-\beta^{\dagger}}h_{CD,n}b_{n} + w_{D,AB,n}, & if $L-\Theta+1\leq n\leq L$.\label{eq:rd4}
\end{subnumcases}

Let $\mathcal{H}_{1}$ denote the hypothesis that a countermeasure is implemented, then, formally, the probability of miss-detection at Dave on $f_{AB}$ is given as follows.
\begin{definition}
\label{def:pd_ab}
Let $\mathcal{V}_{L,AB}$ denote the RV corresponding to the average energy of $L$ symbols on $f_{AB}$ after using $3\phi$ DASC-MF scheme. Thus, probability of miss-detection denoted by, $\mathbf{P}_{MD,AB}^{3\phi}$, is $\mathbf{P}_{MD,AB}^{3\phi} = \Pr\left(\vert\mathcal{V}_{L,AB} - \mathcal{E}_{AB}\vert \leq \nu_{AB}\vert \mathcal{H}_{1}\text{ true}\right)$, for $\nu_{AB}>0$.
\end{definition}
\noindent Using \eqref{eq:rd3} and \eqref{eq:rd4}, we will now characterize $\mathbf{P}_{MD,AB}^{3\phi}$. Let $v_{n_{1},AB}$ denote the RV corresponding to $\vert r_{D,AB,n_{1}}^{\dagger}\vert^{2}$, such that $1\leq n_{1}\leq L-\Theta$ and $v_{n_{2},AB}$ denote the RV corresponding to $\vert r_{D,AB,n_{2}}^{\dagger}\vert^{2}$, such that $L-\Theta+1\leq n_{2}\leq L$. Thus, $\mathcal{V}_{L,AB} = \mathcal{V}_{L_{1},AB} + \mathcal{V}_{L_{2},AB}$, such that $\mathcal{V}_{L_{1},AB}  = \frac{1}{L}\sum_{n_{1}=1}^{L-\Theta}v_{n_{1},AB}$ and $\mathcal{V}_{L_{2},AB} = \frac{1}{L}\sum_{n_{2}=L-\Theta+1}^{L}v_{n_{2},AB}$. In the next lemma, we compute the PDF of $\mathcal{V}_{L_{1},AB}$ and $\mathcal{V}_{L_{2},AB}$.
\begin{lemma}
\label{lm:pfd_md1}
When $\tilde{N}_{o}\ll 1$, the distribution of $\mathcal{V}_{L_{1},AB}$ is approximated as $\frac{1}{2^{L-\Theta}}\sum_{l=0}^{L-\Theta}{L-\Theta \choose l}\mathrm{G}\left(l,\frac{\mathcal{A}}{L}\right)$, where $\mathcal{A} = \alpha + (1-\alpha)(1+\partial)$. Similarly, the distribution of $\mathcal{V}_{L_{2},AB}$ is approximated as $\frac{1}{2^{\Theta}}\sum_{l=0}^{\Theta}{\Theta \choose l}\mathrm{G}\left(l,\frac{\mathcal{B}}{L}\right)$, where $\mathcal{B} = \beta^{\dagger} + (1-\beta^{\dagger})(1+\partial)$.
\end{lemma}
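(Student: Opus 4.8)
The plan is to condition on the balanced Gold-sequence bits $b_n$, and then on how many of them equal $1$, so that $\mathcal{V}_{L_1,AB}$ collapses to a Binomial mixture of scaled sums of i.i.d.\ exponentials, exactly as in \cite[Theorem~5]{my_TCOM}; the statement for $\mathcal{V}_{L_2,AB}$ then follows verbatim with $\Theta$ in place of $L-\Theta$ and $\beta^{\dagger}$ in place of $\alpha$.

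First I would fix $n_1$ with $1\le n_1\le L-\Theta$ and condition on $b_{n_1}$ in \eqref{eq:rd3}. Since $h_{AD,n_1}$, $h_{CD,n_1}$ and $w_{D,AB,n_1}$ are independent zero-mean complex Gaussians, the conditional law of $r_{D,AB,n_1}^{\dagger}$ is $\mathcal{CN}(0,\tilde{N}_o)$ when $b_{n_1}=0$ and $\mathcal{CN}\big(0,\alpha+(1-\alpha)(1+\partial)+\tilde{N}_o\big)=\mathcal{CN}(0,\mathcal{A}+\tilde{N}_o)$ when $b_{n_1}=1$. Hence $v_{n_1,AB}=|r_{D,AB,n_1}^{\dagger}|^2$ is exponential with mean $\tilde{N}_o$, i.e.\ $\mathrm{G}(1,\tilde{N}_o)$, in the first case and $\mathrm{G}(1,\mathcal{A}+\tilde{N}_o)$ in the second. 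Invoking $\tilde{N}_o\ll 1$ as in \cite[Theorem~5]{my_TCOM}, I would approximate the first law by a point mass at the origin and replace $\mathcal{A}+\tilde{N}_o$ by $\mathcal{A}$, so that $v_{n_1,AB}\approx 0$ when $b_{n_1}=0$ and $v_{n_1,AB}\approx\mathrm{G}(1,\mathcal{A})$ when $b_{n_1}=1$. Since the Gold sequence is (nearly) balanced, each $b_{n_1}$ is $\mathrm{Bernoulli}(\tfrac{1}{2})$, and since the channel and noise realizations are independent across distinct symbol intervals, the $v_{n_1,AB}$ are mutually independent.

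Next I would set $l=\sum_{n_1=1}^{L-\Theta}b_{n_1}\sim\mathrm{Binomial}(L-\Theta,\tfrac{1}{2})$ and condition on $l$. Given $l$ ones, $\sum_{n_1=1}^{L-\Theta}v_{n_1,AB}$ is approximately a sum of $l$ i.i.d.\ $\mathrm{G}(1,\mathcal{A})$ variables, hence $\mathrm{G}(l,\mathcal{A})$ by additivity of the Gamma shape parameter under independent summation, while the $L-\Theta-l$ vanishing terms drop out. Scaling by $1/L$ multiplies the scale parameter by $1/L$, so the conditional law of $\mathcal{V}_{L_1,AB}=\tfrac{1}{L}\sum v_{n_1,AB}$ is $\mathrm{G}(l,\mathcal{A}/L)$; averaging over $l$ with the Binomial weights $\binom{L-\Theta}{l}2^{-(L-\Theta)}$ gives the claimed $\mathcal{V}_{L_1,AB}\sim\frac{1}{2^{L-\Theta}}\sum_{l=0}^{L-\Theta}\binom{L-\Theta}{l}\mathrm{G}(l,\mathcal{A}/L)$. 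Running the identical argument on \eqref{eq:rd4} over the $\Theta$ intervals $L-\Theta+1\le n_2\le L$, the only changes are $\alpha\mapsto\beta^{\dagger}$ (so the per-symbol variance when the bit is $1$ becomes $\mathcal{B}+\tilde{N}_o$ with $\mathcal{B}=\beta^{\dagger}+(1-\beta^{\dagger})(1+\partial)$) and $L-\Theta\mapsto\Theta$, which yields the stated PDF of $\mathcal{V}_{L_2,AB}$.

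The main obstacle is the approximation step rather than the combinatorics: one must justify that for $\tilde{N}_o\ll 1$ both the $b_n=0$ contributions (exponentials of tiny mean $\tilde{N}_o$) and the additive $\tilde{N}_o$ inside the $b_n=1$ variances are negligible against the $\mathcal{O}(1)$ signal terms, so that the two-scale Gamma mixture collapses to the single-scale mixture above. This is precisely the device established in \cite[Theorem~5]{my_TCOM}, which I would cite rather than re-derive; a secondary modelling point is treating the pre-shared Gold-sequence bits as i.i.d.\ $\mathrm{Bernoulli}(\tfrac{1}{2})$, which is the standard idealization used elsewhere in the paper.
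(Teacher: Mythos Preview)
Your proposal is correct and follows precisely the approach the paper relies on: the lemma is stated without proof in the paper, but it is a direct adaptation of \cite[Theorem~5]{my_TCOM} (already invoked a few lines earlier for $\mathcal{U}_{L,AB}$), and your conditioning on $b_n$, Binomial mixing over the count $l$ of ones, and use of $\tilde N_o\ll 1$ to collapse the $b_n=0$ contributions is exactly that device with the per-symbol variance replaced by $\mathcal{A}$ (respectively $\mathcal{B}$). There is nothing to add.
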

\noindent Finally, $\frac{1}{2^{L-\Theta}}\sum_{l=0}^{L-\Theta}{L-\Theta \choose l}\mathrm{G}\left(l,\frac{\mathcal{A}}{L}\right)\ast\frac{1}{2^{\Theta}}\sum_{l=0}^{\Theta}{\Theta \choose l}\mathrm{G}\left(l,\frac{\mathcal{B}}{L}\right)$ gives the distribution of $\mathcal{V}_{L,AB}$, where $\ast$ denotes linear convolution. Using the distribution of $\mathcal{V}_{L,AB}$, it is straightforward to compute $\mathbf{P}_{MD,AB}^{3\phi}$. 
\begin{rem}
\label{rem:sum_pfa_pmd_ED}
If $\partial = 0$, then $\mathbf{P}_{FA,AB}+\mathbf{P}_{MD,AB}^{3\phi}=1$ for all $\nu_{AB}>0$.
\end{rem}
\noindent Although, Remark~\ref{rem:sum_pfa_pmd_ED} theoretically guarantees that $\mathbf{P}_{FA,AB}+\mathbf{P}_{MD,AB}^{3\phi}=1$, in Sec.~\ref{ssec:simulation_covert}, through Monte-Carlo simulations, we show that, $\mathbf{P}_{FA,AB}+\mathbf{P}_{MD,AB}^{3\phi}$ is close to $1$ for various values of $\partial$.

We now compute the probability of miss-detection on $f_{CB}$, denoted by, $\mathbf{P}_{MD,CB}^{3\phi}$. Towards computing $\mathbf{P}_{MD,CB}^{3\phi}$, we first observe the symbol received at Dave on $f_{CB}$, denoted by $r_{D,CB,n}^{\dagger}$ and then characterise its PDF. Thus, the $n^{th}$ baseband symbol received at Dave on $f_{CB}$ is given as
\begin{subnumcases}{r_{D,CB,n}^{\dagger}=}
 \sqrt{1-\alpha}h_{AD,n}x_{n} + \sqrt{\alpha}h_{CB,n}y_{n} + w_{CB,n}, & if $1\leq n\leq \Theta$,\label{eq:dave_fcb1}
\\
\sqrt{1-\alpha}h_{AD,n}x_{n} + h_{CB,n}t_{n} + w_{CB,n}, & if $\Theta+1\leq n\leq L-\Theta$,\label{eq:dave_fcb2}\\
\sqrt{1-\beta^{\dagger}}h_{AD,n}x_{n} + \sqrt{\beta^{\dagger}}h_{CB,n}s_{n} + w_{CB,n}, & if $L-\Theta+1\leq n\leq L$,\label{eq:dave_fcb3}
\end{subnumcases}
\noindent where $\alpha = 1-\Delta N_{o}$. Let $v_{n_{1},CB}$, $v_{n_{2},CB}$, and $v_{n_{3},CB}$ denote the RVs corresponding to $\vert r_{D,CB,n_{1}}^{\dagger}\vert^{2}$, $\vert r_{D,CB,n_{2}}^{\dagger}\vert^{2}$, and $\vert r_{D,CB,n_{3}}^{\dagger}\vert^{2}$, respectively, such that $1\leq n_{1}\leq \Theta$, $\Theta+1\leq n_{2}\leq L-\Theta$, and $L-\Theta+1\leq n_{3}\leq L$. Thus, the RV corresponding to the average energy of $L$ symbols on $f_{CB}$, denoted by $\mathcal{V}_{L,CB}$ is given as $\mathcal{V}_{L,CB} = \mathcal{V}_{L_{1},CB} + \mathcal{V}_{L_{2},CB}+\mathcal{V}_{L_{3},CB}$, such that $\mathcal{V}_{L_{1},CB} = \frac{1}{L}\sum_{n_{1} = 1}^{\Theta}v_{n_{1},CB}$, $\mathcal{V}_{L_{2},CB} = \frac{1}{L}\sum_{n_{2} = \Theta+1}^{L-\Theta}v_{n_{2},CB}$, and $\mathcal{V}_{L_{3},CB} = \frac{1}{L}\sum_{n_{3} = L-\Theta+1}^{L}v_{n_{3},CB}$. Using $\mathcal{V}_{L,CB}$, we formally define probability of miss-detection at Dave when measuring the average energy level on $f_{CB}$.
\begin{definition}
\label{def:pd_cb}
Given $\mathcal{H}_{1}$ is true, probability of miss-detection when Dave measures the average energy level on $f_{CB}$, denoted by $\mathbf{P}_{MD,CB}^{3\phi}$, is given as $\mathbf{P}_{MD,CB}^{3\phi} = \Pr\left(\vert\mathcal{V}_{L,CB}-\mathcal{E}_{CB}\vert\leq\nu_{CB}\vert\mathcal{H}_{1}\text{ true}\right)$.
\end{definition}

\begin{proposition}
\label{prop:dist_approx_rd}
When $N_{o}\ll 1$ and $\alpha = 1-\Delta N_{o}$, the distributions of $\mathcal{V}_{L_{1}, CB}$ and $\mathcal{V}_{L_{2}, CB}$ are approximated as $\frac{1}{2^{\Theta}}\sum_{l=0}^{\Theta}{\Theta \choose l}\mathrm{G}\left(l,\frac{1}{L}\right)$ and $\frac{1}{2^{L-2\Theta}}\sum_{l=0}^{L-2\Theta}{L-2\Theta \choose l}\mathrm{G}\left(l,\frac{1}{L}\right)$, respectively. Thus, the distribution of $\frac{1}{L}\sum_{n=1}^{L-\Theta} \vert r_{D,CB,n}^{\dagger}\vert^{2}$ can be approximated as $\frac{1}{2^{L-\Theta}}\sum_{l=0}^{L-\Theta}{L-\Theta \choose l}\mathrm{G}\left(l,\frac{1}{L}\right)$.
\end{proposition}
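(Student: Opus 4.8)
The plan is to follow the same template used earlier for the law of $\mathcal{U}_{L,AB}$ (and, more generally, \cite[Theorem~5]{my_TCOM} and Lemma~\ref{lm:pfd_md1}): reduce the statement to the law of a single squared observation $v_{n,CB}=|r^{\dagger}_{D,CB,n}|^{2}$, assemble the per-symbol laws over a phase by conditioning on the (equiprobable) OOK bit pattern, and finally glue the two blocks together by convolution. Concretely I would (i) characterise the conditional law of $v_{n,CB}$ separately for $1\le n\le\Theta$ (the term in \eqref{eq:dave_fcb1}) and for $\Theta+1\le n\le L-\Theta$ (the term in \eqref{eq:dave_fcb2}); (ii) average over the bit patterns to obtain the laws of $\mathcal{V}_{L_{1},CB}$ and $\mathcal{V}_{L_{2},CB}$; and (iii) exploit the independence of the two blocks to convolve the two Gamma-mixtures into the claimed mixture for $\tfrac{1}{L}\sum_{n=1}^{L-\Theta}|r^{\dagger}_{D,CB,n}|^{2}=\mathcal{V}_{L_{1},CB}+\mathcal{V}_{L_{2},CB}$.

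For step (i), fix $n\le\Theta$ and condition on $x_{n}$ and on $y_{n}$ (recalling $|y_{n}|^{2}=1$); since $h_{AD,n},h_{CB,n},w_{CB,n}$ are jointly circularly-symmetric Gaussian, $r^{\dagger}_{D,CB,n}$ in \eqref{eq:dave_fcb1} is zero-mean complex Gaussian with variance $(1-\alpha)x_{n}+\alpha+N_{o}$. This is precisely where the two hypotheses of the proposition do the work: setting $\alpha=1-\Delta N_{o}$ turns the Alice contribution into $(1-\alpha)x_{n}=\Delta N_{o}x_{n}$, and $N_{o}\ll 1$ (so $\sqrt{\alpha}\to 1$ and $N_{o}\to 0$) collapses the conditional variance, so that to first order $v_{n,CB}$ is exponential with a mean governed by the OOK bit $x_{n}$, exactly as on $f_{AB}$. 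Averaging over $x_{n}\in\{0,1\}$ (equiprobable) and then over the $\Theta$ independent symbol intervals: conditioning on the number $l$ of ``active'' intervals, which is Binomial$(\Theta,\tfrac12)$, the scaled block sum $L\,\mathcal{V}_{L_{1},CB}$ is a sum of $l$ i.i.d.\ unit-mean exponentials, i.e.\ $\mathrm{G}(l,1)$, hence $\mathcal{V}_{L_{1},CB}\mid l\sim\mathrm{G}(l,\tfrac1L)$; un-conditioning gives $\tfrac{1}{2^{\Theta}}\sum_{l=0}^{\Theta}\binom{\Theta}{l}\mathrm{G}(l,\tfrac1L)$, with $\mathrm{G}(0,\cdot)$ read as the Dirac mass at $0$.

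Step (i) for $\mathcal{V}_{L_{2},CB}$ is analogous but uses \eqref{eq:dave_fcb2}: here Charlie forwards the multiplexed symbol $t_{n}$, so one conditions on $x_{n}$ and on $\hat{x}_{n-\Theta}$ (which fixes whether $|t_{n}|^{2}=1$ or $\alpha$); I would check that after the $\alpha=1-\Delta N_{o}$, $N_{o}\ll 1$ reduction the squared observation again has the OOK-indexed exponential form, so that conditioning on the Binomial$(L-2\Theta,\tfrac12)$ count of active intervals yields $\tfrac{1}{2^{L-2\Theta}}\sum_{l=0}^{L-2\Theta}\binom{L-2\Theta}{l}\mathrm{G}(l,\tfrac1L)$. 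For step (iii), $\mathcal{V}_{L_{1},CB}$ and $\mathcal{V}_{L_{2},CB}$ depend on disjoint sets of independent channel, noise and data variables, so the density of their sum is the convolution of the two densities; using $\mathrm{G}(a,\theta)*\mathrm{G}(b,\theta)=\mathrm{G}(a+b,\theta)$, the identity $\tfrac{1}{2^{\Theta}}\cdot\tfrac{1}{2^{L-2\Theta}}=\tfrac{1}{2^{L-\Theta}}$, and Vandermonde's identity $\sum_{j=0}^{l}\binom{\Theta}{j}\binom{L-2\Theta}{l-j}=\binom{L-\Theta}{l}$, the double sum collapses to $\tfrac{1}{2^{L-\Theta}}\sum_{l=0}^{L-\Theta}\binom{L-\Theta}{l}\mathrm{G}(l,\tfrac1L)$, which is the claimed law of $\tfrac1L\sum_{n=1}^{L-\Theta}|r^{\dagger}_{D,CB,n}|^{2}$.

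The main obstacle I anticipate is step (i): being precise about the sense in which $v_{n,CB}$ reduces to the (OOK-indexed) exponential under ``$N_{o}\ll 1$'', i.e.\ controlling the approximation error in the conditional variance uniformly over the bit values, and, for Phase-II, verifying that the multiplexed-symbol scaling $|t_{n}|^{2}\in\{1,\alpha\}$ does not introduce a third regime once $\alpha\to 1$. The combinatorial gluing in step (iii) and the conditioning-on-the-active-count argument are routine, being exactly those already invoked for $\mathcal{U}_{L,AB}$ and in Lemma~\ref{lm:pfd_md1}.
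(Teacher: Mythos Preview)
There is a genuine gap in step~(i). On $f_{CB}$ during Phase-I, $r^{\dagger}_{D,CB,n}$ in \eqref{eq:dave_fcb1} carries Charlie's PSK symbol $y_{n}$ (with $|y_{n}|=1$) scaled by $\sqrt{\alpha}$, so the conditional variance you compute, $(1-\alpha)x_{n}+\alpha+N_{o}$, equals $1+N_{o}$ when $x_{n}=1$ and $\alpha+N_{o}=1+(1-\Delta)N_{o}$ when $x_{n}=0$. Under the hypotheses $\alpha=1-\Delta N_{o}$ and $N_{o}\ll 1$, \emph{both} collapse to approximately $1$: Charlie's term is always active, so $v_{n,CB}\approx\mathrm{Exp}(1)$ irrespective of $x_{n}$. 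The situation is therefore not ``exactly as on $f_{AB}$'', where Alice's OOK bit switches the only signal term on and off; your claim that the exponential mean is ``governed by the OOK bit $x_{n}$'' does not hold here. The same collapse occurs in Phase-II, since $|t_{n}|^{2}\in\{1,\alpha\}$ with $\alpha\to 1$, so conditioning on $\hat{x}_{n-\Theta}$ does not create a second regime either.

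Consequently your binomial-count argument (conditioning on the number $l$ of ``active'' intervals) does not produce the mixture in the statement: if every interval contributes $\mathrm{Exp}(1)$, then $\mathcal{V}_{L_{1},CB}\approx\mathrm{G}(\Theta,1/L)$, a single Gamma rather than $\tfrac{1}{2^{\Theta}}\sum_{l}\binom{\Theta}{l}\mathrm{G}(l,1/L)$. Note that the paper itself does not supply a proof of Proposition~\ref{prop:dist_approx_rd}; the binomial-mixture form appears to have been carried over from the $f_{AB}$ analysis (where Alice's OOK genuinely creates a Bernoulli on/off pattern) rather than re-derived on $f_{CB}$, and indeed the paper's own commentary immediately after the proposition (``Charlie solely controls the energy level on $f_{CB}$ for the first $L-\Theta$ symbols'') is the conclusion one draws from the single-Gamma law, not from the stated mixture. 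Your convolution step~(iii) is correct as combinatorics, but the per-phase laws it glues together do not arise from \eqref{eq:dave_fcb1}--\eqref{eq:dave_fcb2} by the mechanism you describe.
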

From Proposition~\ref{prop:dist_approx_rd}, we observe that the distribution of $\vert r_{D,CB,n}^{\dagger}\vert^{2}$ is approximately same as that of the distribution of $\vert r_{D,CB,n}\vert^{2}$, for $1\leq n\leq L-\Theta$. This indicates that, Charlie solely controls the energy level on $f_{CB}$ for the first $L-\Theta$ symbols after implementing $3\phi$ DASC-MF scheme. However, during Phase-III, i.e., $L-\Theta+1\leq n\leq L$, Alice and Charlie use the energy-splitting factor $\beta^{\dagger}$ away from $1$, as shown in Table~\ref{tab:NC_beta_tab}. Therefore, when Alice transmits a burst of zeros, the energy observed at Dave on $f_{CB}$ is likely to be less than $\mathcal{E}_{CB}$, thereby resulting in higher probability of detection at Dave.  Although, Alice and Charlie can choose to use $\beta^{\dagger}$ close to $1$, this will increase the fraction of erroneous decisions of Alice's current symbols. 

While we are able to derive the probability of detection on $f_{AB}$, we do not derive closed-form expressions on probability of detection on $f_{CB}$, due to intractable density function contributed by Phase-III of the $3\phi$ DASC-MF. However, we present simulation results on detection in Sec.~\ref{ssec:simulation_covert}. 

\subsection{Covertness Analysis for ED when using SC-MAC Scheme}

Along the similar lines of Sec.~\ref{ssec:covert_ED3p}, in this section, we define the probability of miss-detection at Dave on $f_{AB}$ and $f_{CB}$, when Alice and Charlie use SC-MAC. 

\begin{definition}
\label{def:pd_MAC}
The probability of miss-detection by Dave's ED on $f_{AB}$ and $f_{CB}$ are denoted by, $\mathbf{P}_{MD,AB}^{MAC}$ and $\mathbf{P}_{MD,CB}^{MAC}$, respectively. Further, $\mathbf{P}_{MD,AB}^{MAC} = \Pr\left(\vert\mathcal{W}_{L,AB}-\mathcal{E}_{AB}\vert\leq\nu_{AB}\vert\mathcal{H}_{1}\text{ true}\right)$ and $\mathbf{P}_{MD,CB}^{MAC} = \Pr\left(\vert\mathcal{W}_{L,CB}-\mathcal{E}_{CB}\vert\leq\nu_{CB}\vert\mathcal{H}_{1}\text{ true}\right)$, such that, $\mathcal{W}_{L,AB}$ and $\mathcal{W}_{L,CB}$ are the RVs denoting the average energy of $L$ symbols received at Dave on $f_{AB}$ and $f_{CB}$, respectively.
\end{definition}
The distributions of $\mathcal{W}_{L,AB}$ and $\mathcal{W}_{L,CB}$ can be computed along the similar lines of $\mathcal{V}_{L,AB}$ and $\mathcal{V}_{L,CB}$, respectively. It is worthwhile to note that, since the entire frame of $L$ symbols is uncoordinated in energy, the sum $\mathbf{P}_{FA,AB}+\mathbf{P}_{MD,AB}^{MAC}$ and $\mathbf{P}_{FA,CB}+\mathbf{P}_{MD,CB}^{MAC}$ should be away from $1$, indicating that Dave is more likely to detect SC-MAC as compared to $3\phi$ DASC-MF scheme.

\subsection{Simulation Results}
\label{ssec:simulation_covert}
\begin{figure}
\vspace{-0.5cm}
\captionsetup{width=0.48\textwidth}
\begin{center}
    \begin{minipage}[t]{0.48\textwidth}
        \centering
        \includegraphics[scale = 0.5]{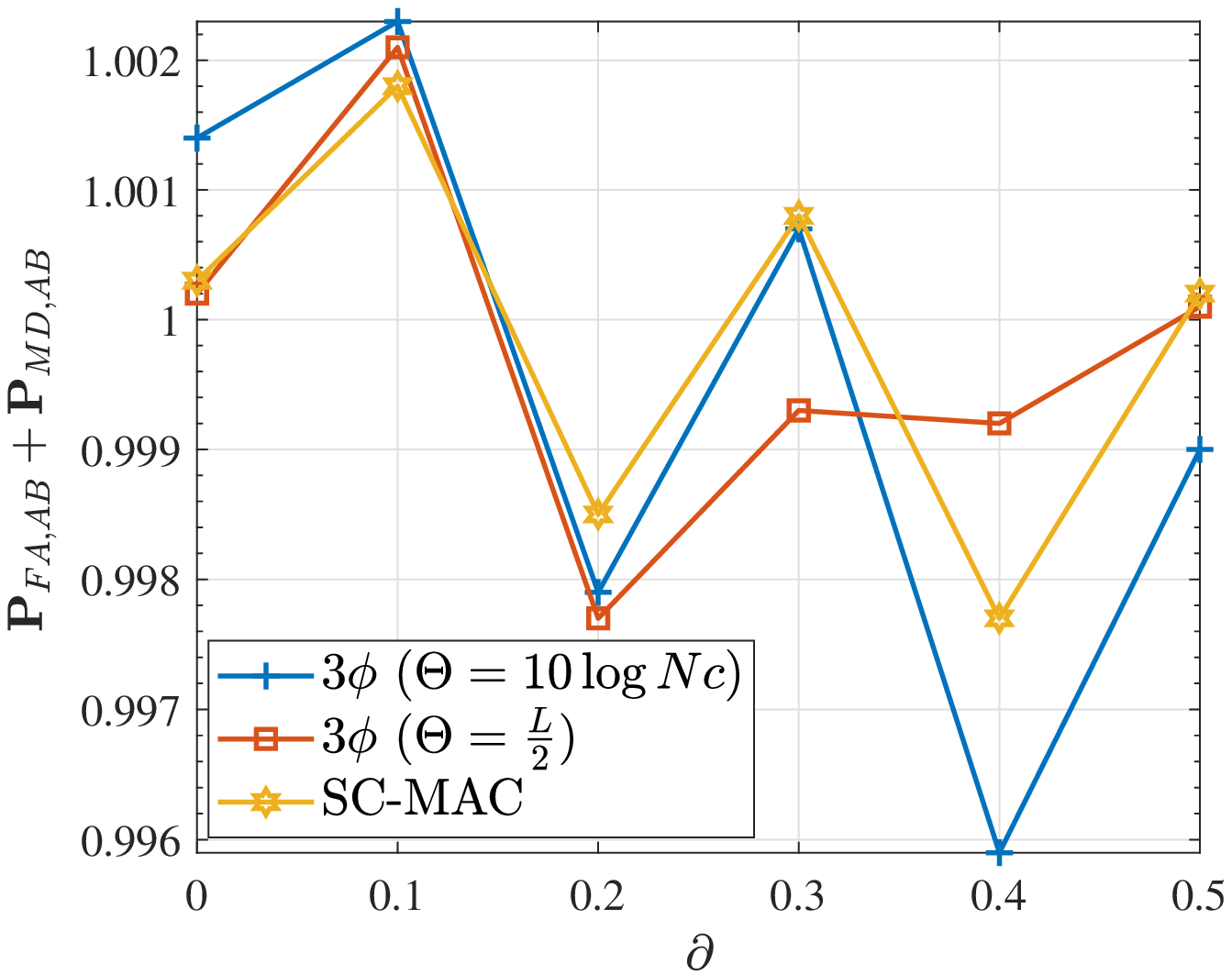}
\caption{\label{fig:sum_fab} $\mathbf{P}_{FA,AB}+\mathbf{P}_{MD,AB}$ on $f_{AB}$ as a function of $\partial$.}
    \end{minipage}%
    \hfill
   \begin{minipage}[t]{0.48\textwidth}
        \centering
        \includegraphics[scale = 0.6]{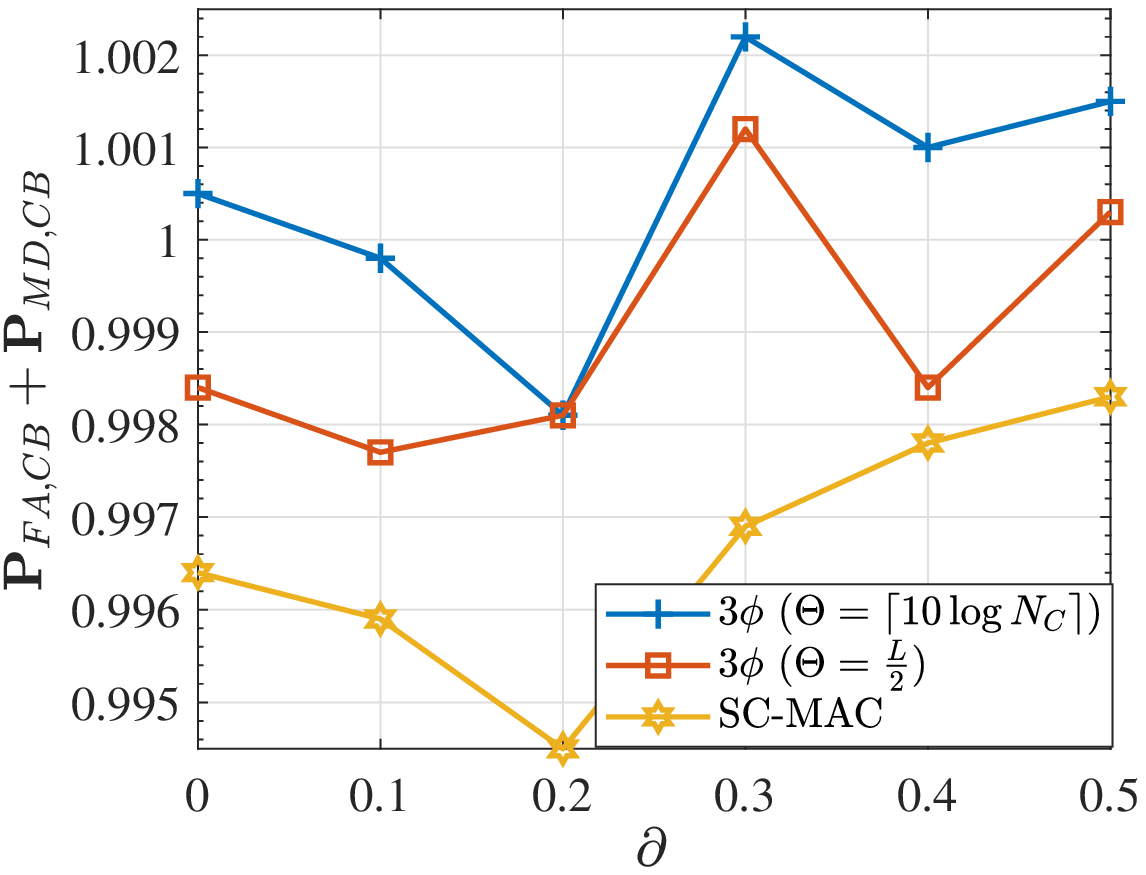}
\caption{\label{fig:sum_pmd}  $\mathbf{P}_{FA,CB}+\mathbf{P}_{MD,CB}$ on $f_{CB}$ as a function of $\partial$.}
    \end{minipage}%
    \end{center}
    \vspace{-0.5cm}
\end{figure}

In this section, we provide results to showcase the covertness of the proposed schemes. In addition to the parameters used above, we use $\nu_{AB} = \nu_{CB} = 10^{-3}$. In Fig.~\ref{fig:sum_fab}, we first plot $\mathbf{P}_{FA,AB} + \mathbf{P}_{MD,AB}$ for both the proposed schemes as a function of $\partial$ at SNR = 25 dB, and observe that the sum is close to $1$ for the considered range of $\partial$. We also observe a similar behaviour in Fig.~\ref{fig:sum_pmd}, where we plot $\mathbf{P}_{FA,CB} + \mathbf{P}_{MD,CB}$ for both the schemes. We note that, when $\Theta = \frac{L}{2}$, exactly $\frac{L}{2}$ symbols are transmitted in uncoordinated fashion, thus, $\mathbf{P}_{FA,CB} + \mathbf{P}_{MD,CB}$ is slightly higher than SC-MAC, where all the $L$ symbols are uncoordinated in energy. However, since the $3\phi$ DASC-MF scheme has minimum number of uncoordinated symbols, its $\mathbf{P}_{FA,CB} + \mathbf{P}_{MD,CB}$ is maximum and is close to $1$ for all values of $\partial\in[0,0.5]$. \looseness = -1

\section{Conclusion and Future Work}
This work presented a cooperative framework of delay-aware semi-coherent multiplex-and-forward scheme to mitigate an FD reactive jamming adversary. As a salient feature of this framework, the helper uses a practical FD radio to forward the victim's low-latency symbols to the destination. Here, the helper multiplexes the victim's symbols to its symbols to facilitate joint decoding at the destination. We first modelled the processing delay at the helper using a parameter, $\Theta$ and then showed that the symbols received from the two users arrive during different \blue{symbol intervals} at the destination, resulting in complex decoding. Further, we also pointed out that the symbols from both the users are uncoordinated in energy, leading to detection by an energy detector at the adversary. To circumvent these problems, we proposed two mitigation schemes based on the delay parameter $\Theta$. When $\Theta\leq \frac{L}{2}$, we proposed $3\phi$ delay-aware semi-coherent multiplex-and-forward scheme, wherein the legitimate users transmit their symbols using two energy-splitting factors, $\alpha$ and $\beta$, and the destination decodes them in three-phases, parametrised by $\Theta$. We also proposed a semi-coherent multiple access channel scheme, when $\Theta>\frac{L}{2}$, wherein due to large $\Theta$, the helper does not decode the victim's symbols, and the victim and the helper transmit their symbols to the destination using an energy-splitting factor, $\varepsilon$. For both the schemes, we provided  near-optimal solution on the energy-splitting factor to the optimisation problem of minimising the error rates. Finally, we showed that the victim reliably communicates with the destination while adhering to the latency constraints without getting detected by the adversary.

\blue{There are multiple directions for future work. As part of the mitigation strategy against the \emph{jam and measure} adversaries, non-coherent constellations can be designed in fast-fading channels when using delay-aware FD helper nodes. Also, throughout this work, we have assumed that the adversary can perfectly cancel its jamming energy on the victim's frequency bands. However, studying the impact of mitigation strategies with practical FD radios at the adversary is still an open problem.}

\bibliography{IEEEabrv, Ref}
\bibliographystyle{IEEEtran}

\end{document}